\numberwithin{equation}{section}
\newcommand{\G}{\mathcal G}
\newcommand{\cP}{\mathcal{P}}
\newcommand{\cY}{\mathcal{Y}}
\newcommand{\cD}{\mathcal{D}}
\newcommand{\mff}{{\mathfrak f}}
\newcommand{\PP}{\mathbb P}
\newcommand{\GG}{\mathbb G}
\newcommand{\HH}{\mathbb H}
\newcommand{\FF}{\mathbb F}
\newcommand{\TT}{\mathbb T}
\newcommand{\RR}{\mathbb R}
\newcommand{\YY}{\mathbb Y}
\newcommand{\EE}{\mathbb E}
\newcommand{\paren}[1]{\left(#1\right)}
\newcommand{\F}{\mathcal F}
\newcommand{\cb}[1]{\left\lbrace#1\right\rbrace}
\newcommand{\sqb}[1]{\left[#1\right]}
\newcommand{\norm}[1]{\left\Vert#1\right\Vert}
\newcommand{\abs}[1]{\left|#1\right|}
\newcommand{\spi}{^{(i)}}
\newcommand{\ep}{\mathrm{e}}
\pgfplotsset{compat=1.5.1}
\tikzstyle{box}=[shape=rectangle,draw=black,fill=blue!10,inner sep=10pt]
\newtheorem{thm}{Theorem}[section]
\newtheorem{lem}[thm]{Lemma}
\newtheorem{prp}[thm]{Proposition}
\theoremstyle{definition}
\newtheorem{dfn}[thm]{Definition}
\newcommand{\diff}{\mathrm{d}}
\newcommand{\dt}{\,\diff t}
\newcommand{\ds}{\,\diff s}
\newcommand{\du}{\,\diff u}
\newcommand{\dY}{\diff Y}
\newcommand{\dX}{\diff X}
\newcommand{\dQ}{\diff Q}
\title{Broker-Trader Partial Information Nash-Equilibria
\thanks{SJ acknowledges support from the Natural Sciences and Engineering Research Council of Canada (RGPIN-2024-04317, RGPIN-2018-05705)}} 
\author{
  Xuchen Wu  \\
  Department of Mathematics  \\
  University of Toronto \\
  Toronto, Canada\\
  \texttt{xuchen.wu@mail.utoronto.ca} \\
   \And
  Sebastian Jaimungal \\
  Department of Statistical Sciences  \\
  University of Toronto \\
  Toronto, Canada\\
  \texttt{sebastian.jaimungal@utoronto.ca} \\
}
\begin{document}
\maketitle
\baselineskip=13pt

\begin{abstract}
We study partial information Nash equilibrium between a broker and an informed trader. In this setting, the informed trader, who possesses knowledge of a trading signal, trades multiple assets with the broker in a dealer market. Simultaneously, the broker offloads these assets in a lit exchange where their actions impact the asset prices. The broker, however, only observes aggregate prices and cannot distinguish between underlying trends and volatility. Both the broker and the informed trader aim to maximize their penalized expected wealth. Using convex analysis, we characterize the Nash equilibrium and demonstrate its existence and uniqueness. Furthermore, we establish that this equilibrium corresponds to the solution of a nonstandard system of forward-backward stochastic differential equations (FBSDEs) that involves the two differing filtrations. For short enough time horizons, we prove that a unique solution of this system exists. Finally,  under quite general assumptions, we show that the solution to the FBSDE system admits a polynomial approximation in the strength of the transient impact to arbitrary order, and prove that the error is controlled.
\end{abstract}

\keywords{Nash equilibrium \and trading signals \and partial information \and differing filtrations \and FBSDE}

\section{Introduction}

In this article, we study  the Nash equilibrium between a broker and their client (an informed trader) in a partial information setting. The informed trader trades multiple assets with the Broker in an over-the-counter market, while the Broker also trades in a lit exchange where their trading activities have both transient and instantaneous impact on the prices in the lit market as in the diagram below\footnote{While the diagram shows noise traders who interact with the broker, we omit them from our analysis as they do not add any additional insights to the problem setting.}.
\begin{figure}[h!]
 
\centering
\tikzset{place/.style = {circle, draw=blue!50, fill=blue!20, thick, minimum size=0.6cm},
   transition/.style = {rectangle, draw=black!50, fill=black!20, thick, minimum width=0.6cm,
                       minimum height = 1cm},
   pre/.style =    {<-, semithick},
   post/.style =   {->, semithick}
}
\begin{tikzpicture}
\node[box] (inf) at (0,1) {informed};
\node[box] (uninf) at (0,-1) {noise};
 \node[box] (broker) at (4,0) {Broker};
 \draw[>=triangle 45, <->] (inf) --  (broker) node[midway,above]{$\eta$};
 \draw[>=triangle 45, <->] (uninf) --  (broker) node[midway,above]{$\upsilon$};
\node[box] (lit) at (8,0) {Lit Market};
\draw[>=triangle 45, <->] (broker) -- (lit) node[midway,above]{$\nu$};
 
\end{tikzpicture}
    \caption{Blue print of how informed trader and broker interact with each other and the lit market.}
    \label{fig:enter-label}
\end{figure}
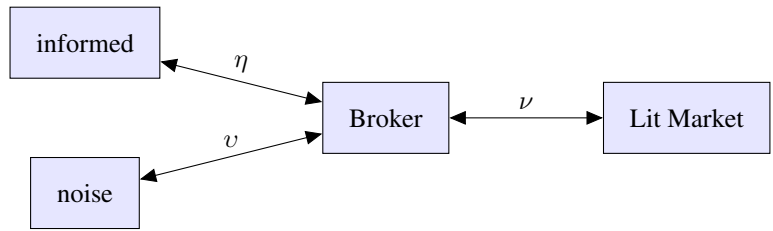

The informed trader has privy to a private trading signal about the assets' drifts, while the Broker does not. Our setup is similar to the one in \cite{cartea2024nashequilibriumbrokerstraders}, where the authors study a setting with an informed trader and Broker trading a single asset and both have access to the market's full information, including the asset's stochastic drift as a private trading signal. Our setup differs  from  \cite{cartea2024nashequilibriumbrokerstraders} in two key aspects. Firstly, instead of only trading a single asset,  we consider trading multiple (correlated or even co-integrated) assets. Secondly, we assume the broker has access only to asset prices rather than the full market information. Both the broker and the informed trader aim to maximize their (penalized) expected wealth from trading activities over a fixed time horizon by choosing trading strategies based on their respective information content. This leads to a partial information setting for the Broker, but a full information setting for the informed trader.

We use variational methods to characterize the optimal response of each agent. Based on this optimal response, we prove uniqueness and existence of Nash equilibrium between the broker and the informed trader in the small time setting. As a result, we obtain a filtered system of forward-backward stochastic differential equations (FBSDEs) that characterizes the Nash equilibrium.

One of the earliest works on partial information equilibria in the trading setting is the seminal work of \cite{kyle1985}, who studies how a market marker (MM) sets prices in the presence of noise traders and an informed trader who has a private signal (not available to the MM) regarding the future price of an asset.  Other early works on partial information include \cite{gennotte1986}, who study optimal portfolio allocation with latent  Ornstein–Uhlenbeck returns and \cite{bauerle&rieder2005} who study model uncertainty in the context of portfolio optimization and the optimal allocation of assets. More recently, and more directly related to this study, \cite{garleanu2013dynamic} studies optimal trading in a discrete-time, infinite-time horizon setting, where there is an unpredictable
martingale component, and an independent stationary (visible) predictable component -- the
alpha component. The question of how trading signals affect market making decisions is studied in \cite{cartea2020market}, while \cite{donnelly2020optimal} studies how differing trading signals alter trading decisions. The role that stochastic price impact plays on optimal trading with trading signal is investigated by \cite{fouque2022optimal}. Further, \cite{drissi2022solvability} proposes models for optimal execution when the drift
is estimated/filtered from prices.  The case of trading signals with general transient impact kernels is investigated by \cite{abi2022optimal}. \cite{bank2023optimal} studies the optimal execution problem when  stochastic price impact of market orders and the arrival rates of limit and market orders are functions of the market liquidity process and \cite{cartea2023bandits} incorporates contextual bandit algorithms into trading decisions. 

Furthermore, \cite{casgrain&jaimungal2019_1} analyze how to optimally trade with latent factors that cause prices to jump and diffuse in a single-trader model. Motivated by \cite{casgrain&jaimungal2019_1}, \cite{casgrain&jaimungal2019_2} generalizes the model by considering a large collection of heterogeneous agents and combines Nash equilibrium and partial information setting. \cite{casgrain&jaimungal2020} further extends \cite{casgrain&jaimungal2019_2} by incorporating differing model beliefs among the agents.  As in \cite{casgrain&jaimungal2019_2} and \cite{casgrain&jaimungal2020}, here we use a convex analysis approach and characterize the equilibrium as solution of FBSDE. In contrast to these earlier works, in our setting the broker and the informed trader have different filtrations on which their strategies are adapted.

In addition to equilibrium and partial filtration, this paper ties into the field of optimal trading strategies with market signals. There are several studies in this line of research. The early work of \cite{cartea&jaimungal2016} derives an optimal execution strategy when the order-flow from market participants influences both midprices and execution costs -- order-flow in their paper induces a trading signal. \cite{cartea&sanchez-betancourt2022} studies a broker-trader model where the broker uses the informed trader’s flow to learn price signals and derives optimal strategy which combines inventory management and signal-driven speculation.  \cite{barzyk&boyce&neuman2024} combines filtering theory to estimate unobserved toxicity and variational methods in the unwinding of order flow. Contemporaneous to the first version of this paper, \cite{aqsha2024strategic} study a related setting involving an informed trader (with a Gaussian signal) and a broker. They do so in Stackelberg like setting, and assume from the outset that the broker uses a Kalman-Bucy filter to filter the trading signal. In contrast, we provide precise mathematical characterization of the Nash equilibria without making such  assumptions and do so in the multi-dimensional setting.

The remainder of the paper is organized as follows. In section \ref{setup}, we introduce the notations, the model, and the performance criteria. In section \ref{pc}, we show that the broker's and the informed trader's performance criteria are strictly concave functionals in terms of their respective trading strategies, and compute the G\^ateaux-derivatives of the criteria. Finally, in section \ref{ne}, we characterize the Nash equilibrium as the solution of a filtered system of FBSDEs and prove its existence and uniqueness for small time horizon. Moreover, we develop an asymptotic expansion of the solution to the FBSDEs, in terms of the strength of the transient impact parameter to arbitrary order, and prove that the error is controlled.

\section{The Broker-Trader Setting}\label{setup}

In this section, we introduce the underlying system dynamics, as well as the broker's and the informed trader's performance criteria. Of particular note is the differing filtrations that the two agents have.

\textbf{Unimpacted price model:} Let $T\in(0,\infty)$ be a given time horizon. We fix a complete filtered probability space $(\Omega,\F,\FF,\PP)$ where the filtration $\FF=(\F_t)_{t\in[0,T]}$ satisfies the usual conditions of completeness and right-continuity. $\FF$ represents the full market information. Let $W$ be an $\FF$-Brownian motion. The unimpacted asset price $z$ (i.e., the price if the broker does not send any orders to the lit market) satisfies the stochastic differential equation (SDE)
\begin{equation}
\label{eqn:unimpact-sde}
\diff z_t= \alpha_t\dt+\sigma_t\diff W_t,
\end{equation}
where $z_0\in L^2(\F_0)$, the drift coefficient $\alpha$ is an $\RR^{K}$-valued $\FF$-progressively measurable process satisfying $\EE\sqb{\int_0^T|\alpha_t|^2\dt}<\infty$ and represents a private trading signal, and the diffusion coefficient $\sigma$ is an $\RR^{K\times D}$-valued bounded $\FF$-progressively measurable process, i.e., there exists $L\in\RR$ such that
\begin{equation}
    |\sigma_t(\omega)|\leq L\text{ for all }(\omega,t)\in\Omega\times[0,T]\label{sigbound}.
\end{equation}

We assume the informed trader has access to the full market information, so they trade strategically by choosing a strategy from the set of admissible strategies
\begin{equation*}
    \HH^2_\FF\coloneqq\cb{\eta:\Omega\times[0,T]\to\RR^{K}\left|\ \eta\text{ is $\FF$-progressively measurable and }\EE\sqb{\int_0^T|\eta_t|^2\dt}<\infty\right.}.
\end{equation*}
Note that, equipped with the norm $\Vert\eta\Vert_{\HH^2}:=\paren{\EE\sqb{\int_0^T|\eta_t|^2}\dt}^\frac{1}{2}$, $\HH^2_\FF$ is precisely $L^2(\Omega\times[0,T],\cP,\PP\otimes\lambda)$, where $\cP$ is the $\sigma$-algebra on $\Omega\times[0,T]$ generated by $\FF$-progressively measurable processes and $\lambda$ is the Lebesgue measure on $[0,T]$, thus, it is a Hilbert space. In the remainder of this paper, the informed trader's  trading strategy is denoted by the symbol $\eta\in\HH^2_\FF$, if not otherwise specified.

Contrastingly, we assume the broker observes the prices of all assets in the lit market, and only this information may be used to control their trading strategy with the lit exchange. This information is modeled by the minimal enlargement $\GG=(\G_t)_{t\in[0,T]}$ of the natural filtration generated by $z$ to ensure the usual conditions, i.e.,
\begin{align*}
    \G_t&=\bigcap_{s>t}\big(\sigma\paren{z_u:0\leq u\leq s}\vee\mathcal{N}\big),
    \quad t\in[0,T)
    \\
    \G_T&=\sigma\paren{z_u:0\leq u\leq T}\vee\mathcal{N},
\end{align*}
where $\mathcal{N}$ is the collection of all $\PP$-null sets. 
The broker's set of admissible trading strategies is
\begin{equation*}
    \HH^2_{\GG}\coloneqq\{\nu:\Omega\times[0,T]\to\RR^K|\ \nu\text{ is $\GG$-progressively measurable and }\Vert\nu\Vert_{\HH^2}<\infty\}.
\end{equation*}
Note, $\HH^2_{\GG}$ is a closed subspace of $\HH^2_\FF$. In the remainder of this paper, the broker's trading strategy is denoted by the symbol $\nu\in\HH^2_{\GG}$, if not otherwise specified.

\noindent \textbf{Transient price impact model:} The broker's trading activity in the lit exchange has transient impact on the asset price, which we denote $Y$ and whose dynamics is assumed\footnote{We remark, that this exponential decay can be relaxed to non-exponential kernels with some additional care, but as in this paper we wish to focus on the effect that the partial information has, we remain within this simpler setting.} to be:
\begin{equation*}
    \left\{
    \begin{aligned}
        \dY^\nu_t&=(h\,\nu_t- p\,  Y^\nu_t)\dt\,,
        \\
        Y^\nu_0&=y\,,
    \end{aligned}
    \right.
\end{equation*}
where $y\in L^2(\G_0)$, $ p $ and $h$ are $K\times K$ constant positive semi-definite  matrices representing decay coefficients and instantaneous impact parameters, respectively. We further assume $p$ and $h$ commute.

The process $Y$ admits the explicit representation as
\begin{equation}
    Y^\nu_t=\ep^{-tp}y+\int_0^t\ep^{(s-t)p}h\nu_s\ds\,.\label{solny}
\end{equation} 

Equipped with the unimpacted price model \eqref{eqn:unimpact-sde} and the transient impact price model \eqref{solny}, the midprice process $S$ is given by
\begin{align}
S^\nu_t=Y^\nu_t+z_t\,.\label{prpr}
\end{align}

\noindent  \textbf{Broker's Inventory and Cash:} The broker's inventory  process $Q^{B,\nu,\eta}$ and cash process $X^{B,\nu,\eta}$ are given by
\begin{align}
    \left\{
    \begin{aligned}
        \dQ^{B,\nu,\eta}_t&=(\nu_t-\eta_t)\dt,
        \\[0.5em]
        Q^{B,\nu,\eta}_0&=q^B,
    \end{aligned}
    \right.
\end{align}
where $q^B\in L^2(\F_0)$, and
\begin{align}
    \left\{
    \begin{aligned}
        \dX^{B,\nu,\eta}_t&=(-(S^\nu_t+a\,\nu_t)^\top\nu_t+(S^\nu_t+b\,\eta_t)^\top\eta_t)\dt,
        \\[0.5em]
        X^{B,\nu,\eta}_0&=x^B,
    \end{aligned}
    \right.
\end{align}
where $x^B\in L^2(\F_0)$, $a$ and $b$ are $K\times K$ constant positive definite matrices representing stylized instantaneous transaction costs incurred by the broker (when trading with the lit market) and the informed trader (when trading with the broker), respectively.

\noindent  \textbf{Informed trader's Inventory and Cash:} The informed trader's inventory process $Q^{I,\eta}$ and cash process $X^{I,\nu,\eta}$ are given by
\begin{align*}
    \left\{
    \begin{aligned}
        \dQ^{I,\eta}_t&=\eta_t\dt\,,
        \\[0.5em]
        Q^{I,\eta}_0&=q^I\,,
    \end{aligned}
    \right.
\end{align*}
where $q^I\in L^2(\F_0)$, and
\begin{align*}
    \left\{
    \begin{aligned}
        \dX^{I,\nu,\eta}_t&=-(S^\nu_t+b\eta_t)^\top\eta_t\dt\,,
        \\[0.5em]
        X^{I,\nu,\eta}_0&=x^I\,,
    \end{aligned}
    \right.
\end{align*}
where $x^I\in L^2(\F_0)$.

We denote by $\Vert A\Vert$ the operator norm of the matrix $A$. Recall that if $A$ is a positive semi-definite matrix and $t\leq 0$, then $\Vert\ep^{tA}\Vert\leq 1$.

\begin{lem}\label{yqh2}
    $Y^\nu\in\HH^2_\GG$. $Q^{B,\nu,\eta},Q^{I,\eta}\in\HH^2_\FF$. $S^\nu$ is a continuous $\FF$-semimartingale. $Q^{B,\nu,\eta},X^{B,\nu,\eta},Q^{I,\eta},X^{I,\nu,\eta}$ are $\FF$-adapted continuous finite variation processes.
\end{lem}
\begin{proof}
     As $\EE\sqb{\int_0^T|\nu_s|^2\ds}<\infty$,
\begin{equation}
    \int_0^t|\ep^{(s-t)p }h\nu_s|\ds\leq\int_0^t\Vert\ep^{(s-t)p}\Vert\;\Vert h\Vert\;|\nu_s|\ds
        \leq \Vert h\Vert t^{1/2}\paren{\int_0^t|\nu_s|^2\ds}^{1/2}<\infty,\quad\forall t\in[0,T]\text{ a.s.}\label{esty}
\end{equation}
Since $\nu$ is $\GG$-progressively measurable, \eqref{esty} implies $Y^\nu$ is a $\GG$-adapted continuous finite variation process, which is $\GG$-progressively measurable. \eqref{esty} also implies
\begin{align*}
    |Y^\nu_t|^2\leq\paren{\Vert\ep^{-tp}\Vert|y|+\int_0^t|\ep^{(s-t)p }h\nu_s|\ds}^2&\leq\paren{|y|+\Vert h\Vert t^{1/2}\paren{\int_0^t|\nu_s|^2\ds}^{1/2}}^2
    \\&\leq 2\paren{|y|^2+\Vert h\Vert^2 t\int_0^t|\nu_s|^2\ds}
    \\&\leq 2\paren{|y|^2+T\Vert h\Vert^2\int_0^T|\nu_s|^2\ds}\quad\forall t\in[0,T],
\end{align*}
so
\begin{align*}
    \EE\sqb{\int_0^T|Y^\nu_t|^2\dt}\leq 2T\paren{\EE[|y|^2]+T\Vert h\Vert^2\Vert\nu\Vert_\HH^2}<\infty,
\end{align*}
therefore, $Y^\nu\in\HH^2_{\GG}$. Since $\alpha\in\HH^2_\FF$ and $\sigma$ is bounded $\FF$-progressively measurable, $z$ is a continuous $\FF$-semimartingale. Consequently, $S^\nu$ is a continuous $\FF$-semimartingale. By continuity of the sample paths of $S^\nu$ and the fact that $\nu,\eta\in\HH^2_\FF$, we have
\begin{align*}
    &\int_0^t\abs{-(S^\nu_s+a\nu_s)^\top\nu_s+(S^\nu_s+b\eta_s)^\top\eta_s}\ds
    \\&\leq\int_0^t\paren{(|S^\nu_s|+\Vert a\Vert|\nu_s|)|\nu_s|+(|S^\nu_s|+\Vert b\Vert|\eta_s|)|\eta_s|}\ds\\
    \\&\leq\paren{\int_0^t|S^\nu_s|^2\ds}^{1/2}\paren{\paren{\int_0^t|\nu_s|^2\ds}^{1/2}+\paren{\int_0^t|\eta_s|^2\ds}^{1/2}}+\Vert a\Vert\int_0^t|\nu_s|^2\ds+\Vert b\Vert\int_0^t|\eta_s|^2\ds<\infty,
\end{align*}
for all $t\in[0,T]$, a.s., thus $X^{B,\nu,\eta}$ is an $\FF$-adapted continuous finite variation processes. Similarly, $X^{I,\nu,\eta}$ is an $\FF$-adapted continuous finite variation processes.

On the other hand,
\begin{align}
    \int_0^t|\nu_s-\eta_s|\ds&\leq \int_0^t|\nu_s|\ds+\int_0^t|\eta_s|\ds\nonumber
    \\&\leq t^{1/2}\paren{\paren{\int_0^t|\nu_s|^2\ds}^{1/2}+\paren{\int_0^t|\eta_s|^2\ds}^{1/2}}<\infty,\quad\forall t\in[0,T]\text{ a.s.},\label{estqb}
\end{align}
meaning $Q^{B,\nu,\eta}$ is an $\FF$-adapted continuous finite variation process, which is $\FF$-progressively measurable. \eqref{estqb} also implies
\begin{align*}
    |Q^{B,\nu,\eta}_t|^2\leq\paren{|q^B|+\int_0^t|\nu_s-\eta_s|\ds}^2&\leq\paren{|q^B|+t^{1/2}\paren{\paren{\int_0^t|\nu_s|^2\ds}^{1/2}+\paren{\int_0^t|\eta_s|^2\ds}^{1/2}}}^2
    \\&\leq 3\paren{|q^B|^2+t\paren{\int_0^t|\nu_s|^2\ds+\int_0^t|\eta_s|^2\ds}}
    \\&\leq 3\paren{|q^B|^2+T\paren{\int_0^T|\nu_s|^2\ds+\int_0^T|\eta_s|^2\ds}},
\end{align*}
so
\begin{align*}
    \EE\sqb{\int_0^T|Q^{B,\nu,\eta}_t|^2\dt}\leq 3T\paren{\EE[|q^B|^2]+T(\Vert\nu\Vert_\HH^2+\Vert\eta\Vert_\HH^2)}<\infty,
\end{align*}
therefore, $Q^{B,\nu,\eta}\in\HH^2_\FF$. Similarly, $Q^{I,\eta}$ is an $\FF$-adapted continuous finite variation process and $Q^{I,\eta}\in\HH^2_\FF$.
\end{proof}

\noindent \textbf{Performance Criteria:}  The informed trader's and the broker's performance criteria are denoted $J^I(\nu,\eta)$ and  $J^B(\nu,\eta)$, respectively, and are given by
\begin{equation}
    J^I(\nu,\eta)=\EE\sqb{X^{I,\eta}_T+(S^\nu_T-\psi Q^{I,\eta}_T)^\top Q^{I,\eta}_T-\int_0^T(Q^{I,\eta}_t)^\top r^IQ^{I,\eta}_t\dt}\label{ji}
\end{equation}
and
\begin{equation}
    J^B(\nu,\eta)=\EE\sqb{X^{B,\nu,\eta}_T+(S^\nu_T-\phi Q^{B,\nu,\eta}_T)^\top Q^{B,\nu,\eta}_T-\int_0^T(Q^{B,\nu,\eta}_t)^\top r^BQ^{B,\nu,\eta}_t\dt},\label{jb}
\end{equation}
where $\psi,r^I,\phi,r^B$ are $K\times K$ positive semi-definite matrices representing inventory control constants. We assume $\phi-\frac{1}{2} h $ is positive semi-definite. The second term in each expression above represents the value of terminal shares marked-to-market at the observed impacted price together with a terminal penalty that is quadratic in inventory. When $\psi$ ($\phi$) tends to infinity, the informed (broker) trader is incentivized to liquidate their position fully by the end of the period.

In the sequel, we sometimes omit a control in the superscript of a process when we do not need to emphasize its dependence on the control and there is no ambiguity in the context.

In the next lemma, we verify the criteria are well-defined and rewrite them in terms of running costs.
\begin{lem}\label{welldfn}
    $J^I$ defined as \eqref{ji} and $J^B$ defined as \eqref{jb} are (real-valued) functionals on $\HH^2_\GG\times\HH^2_\FF$. Moreover, they can be written as
    \begin{align*}
        J^I(\nu,\eta)=\EE\sqb{x^I+(S_0-\psi q^I)^\top q^I}+\EE\sqb{\int_0^T\cb{-\eta_t^\top b\eta_t+(\alpha_t+h\nu_t-pY_t-2\psi\eta_t-r^IQ^I_t)^\top Q^I_t}\dt}
    \end{align*}
    and
    \begin{align*}
        J^B(\nu,\eta)&=\EE\sqb{x^B+(S_0-\phi q^B)^\top q^B}\\
    &\quad\ +\EE\sqb{\int_0^T\cb{-\nu_t^\top a\nu_t+\eta_t^\top b\eta_t+(\alpha_t+(h -2\phi)\nu_t-p Y_t+2\phi\eta_t-r^BQ^B_t)^\top Q^B_t}\dt}.
    \end{align*}
\end{lem}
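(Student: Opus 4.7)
The plan is to first verify that every summand in \eqref{ji} and \eqref{jb} is integrable so that $J^I,J^B$ take values in $\RR$, and then obtain the running-cost representation by applying the product rule (integration by parts) to the bilinear quantities $(S^\nu_t)^\top Q^{I,\eta}_t$ and $(S^\nu_t)^\top Q^{B,\nu,\eta}_t$, together with the quadratic quantities $(Q^{I,\eta}_t)^\top\psi Q^{I,\eta}_t$ and $(Q^{B,\nu,\eta}_t)^\top\phi Q^{B,\nu,\eta}_t$, while showing that the stochastic integrals thus produced are true martingales with zero expectation.

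For real-valuedness, I would first show $z\in\BS^2_\FF$: from $\diff z_t=\alpha_t\dt+\sigma_t\diff W_t$ together with $\EE[\sup_t|\alpha_t|^2]<\infty$ and the bound \eqref{sigbound}, Burkholder--Davis--Gundy (combined with Jensen) gives $\EE[\sup_t|z_t|^2]<\infty$. Combined with $Y^\nu\in\BS^2_\GG$ this yields $S^\nu\in\BS^2_\FF$. Cauchy--Schwarz applied to $(S^\nu_T)^\top Q^{I,\eta}_T$ and $(S^\nu_T)^\top Q^{B,\nu,\eta}_T$, to the quadratic terminal penalties, and to the time-integrals defining $X^{I,\eta}_T$ and $X^{B,\nu,\eta}_T$ (each bounded pointwise by a sum of products of suprema of processes in $\BS^2$) then shows every term in \eqref{ji} and \eqref{jb} is in $L^1$. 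The constant $S_0^\top\EE[q^I]$ is finite by the $q^I\in L^2(\F_0)$ assumption.

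For the running-cost rewrite of $J^I$: since $Q^{I,\eta}$ is continuous and of finite variation, the product rule gives
$\diff\paren{(S^\nu_t)^\top Q^{I,\eta}_t}=(Q^{I,\eta}_t)^\top(\alpha_t+h\nu_t- p\,Y^\nu_t)\dt+(Q^{I,\eta}_t)^\top\sigma_t\diff W_t+(S^\nu_t)^\top\eta_t\dt$,
in which the stochastic integral is a true martingale since $\EE\sqb{\int_0^T|\sigma_t^\top Q^{I,\eta}_t|^2\dt}\leq L^2T\,\Vert Q^{I,\eta}\Vert_{\BS^2}^2<\infty$. Taking expectations, the $(S^\nu_t)^\top\eta_t\dt$ contribution cancels against the one sitting inside $X^{I,\eta}_T=-\int_0^T(S^\nu_t+b\eta_t)^\top\eta_t\dt$. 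For the terminal penalty, using that the positive semi-definite $\psi$ may be taken symmetric (since $x^\top\psi x=x^\top\tfrac{\psi+\psi^\top}{2}x$), the product rule yields $\diff\paren{(Q^{I,\eta}_t)^\top\psi Q^{I,\eta}_t}=2\eta_t^\top\psi Q^{I,\eta}_t\dt$, hence $\EE[(Q^{I,\eta}_T)^\top\psi Q^{I,\eta}_T]=\EE[(q^I)^\top\psi q^I]+2\EE\sqb{\int_0^T\eta_t^\top\psi Q^{I,\eta}_t\dt}$. Collecting these pieces produces the stated expression for $J^I$ (with the $\phi$ in $\EE[(q^I)^\top\phi q^I]$ read as $\psi$). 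The broker case is completely analogous: applying the same product rule to $(S^\nu_t)^\top Q^{B,\nu,\eta}_t$ with $\diff Q^{B,\nu,\eta}_t=(\nu_t-\eta_t)\dt$ and to $(Q^{B,\nu,\eta}_t)^\top\phi Q^{B,\nu,\eta}_t$ causes the $(S^\nu_t)^\top(\nu_t-\eta_t)$ contribution to cancel the midprice piece in $X^{B,\nu,\eta}_T$; combining the $h\nu_t$ term arising from $\diff S^\nu_t$ with the $-2\phi(\nu_t-\eta_t)$ cross term from the quadratic penalty then produces the asserted coefficients $(h-2\phi)\nu_t$ and $+2\phi\eta_t$ in front of $Q^B_t$.

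The main technical obstacle is confirming that the two stochastic integrals produced by the product rule are genuine martingales rather than merely local ones; this relies on the uniform bound \eqref{sigbound} on $\sigma$ together with the $\BS^2$-bounds on the inventories derived in the setup, and without it the cancellation of the $(S^\nu_t)^\top\eta_t$ and $(S^\nu_t)^\top(\nu_t-\eta_t)$ contributions in expectation would be unjustified. The remainder of the work is careful algebraic bookkeeping, most delicate in the broker's case where the instantaneous impact matrix $h$, the inventory penalty $\phi$, and the transaction-cost pieces in $X^{B,\nu,\eta}_T$ must be combined correctly to produce the claimed running-cost integrand.
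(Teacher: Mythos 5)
Your proposal is correct and uses essentially the same core technique as the paper: It\^o's formula (phrased as the product rule applied separately to $(S^\nu_t)^\top Q_t$ and to the quadratic terminal penalty), martingality of the resulting stochastic integrals via the uniform bound \eqref{sigbound} on $\sigma$ together with the $\BS^2$-bounds on the inventories, and algebraic bookkeeping of the cancellations. The one organizational difference is that you establish $z\in\BS^2_\FF$ (hence $S^\nu\in\BS^2_\FF$) via BDG up front so that the raw criteria \eqref{ji}--\eqref{jb} are visibly in $L^1$ before any rewriting; the paper instead applies It\^o to the full expression $X_T+(S_T-\phi Q_T)^\top Q_T$ first, so that the $(S^\nu_t)^\top(\nu_t-\eta_t)\dt$ cancellation eliminates $S^\nu$ from the running-cost integrand entirely, and only then checks integrability of the resulting expression --- thereby never needing $z$ or $S^\nu$ to be square-integrable and avoiding BDG altogether. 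Both routes are sound. Your parenthetical remark ``(with the $\phi$ in $\EE[(q^I)^\top\phi q^I]$ read as $\psi$)'' is a correct catch: the initial value in the It\^o expansion for $J^I$ is $(S_0-\psi Q^I_0)^\top Q^I_0 = S_0^\top q^I - (q^I)^\top\psi q^I$, so the lemma statement (and the paper's own proof) should display $\psi$, not $\phi$, in that constant term.
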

\begin{proof}
Take $(\nu,\eta)\in\HH^2_\GG\times\HH^2_\FF$. By Lemma \ref{yqh2}, It\^o's formula implies
\begin{align*}
    &X^B_T+(S_T-\phi Q^B_T)^\top Q^B_T
    \\
    &=x^B+(S_0-\phi q^B)^\top q^B+\int_0^T\dX^B_t+\sum_{i=1}^K\int_0^T\paren{S^{(i)}_t-2\sum_{j=1}^K\phi_{i,j} Q^{B,(j)}_t}\dQ^{B,(i)}_t+\sum_{i=1}^K\int_0^TQ^{B,(i)}_t\diff 
    S\spi_t
    \\
    &=x^B+(S_0-\phi q^B)^\top q^B+\int_0^T(-(S_t+a\nu_t)^\top\nu_t+(S_t+b\eta_t)^\top\eta_t)\dt
    \\
    &\quad\ +\int_0^T\sum_{i=1}^K\left[\paren{S^{(i)}_t-2\sum_{j=1}^K\phi_{i,j} Q^{B,(j)}_t}\paren{\nu^{(i)}_t-\eta^{(i)}_t}+Q^{B,(i)}_t\paren{\sum_{j=1}^K\paren{h_{i,j}\nu^{(j)}_t-p_{i,j}Y^{(j)}_t}+\alpha^{(i)}_t}\right]\dt
    \\
    &\quad\ +\sum_{i=1}^K\int_0^TQ^{B,(i)}_t\sum_{j=1}^K\sigma^{(i,j)}_t\diff W^{(j)}_t
    \\
    &=x^B+(S_0-\phi q^B)^\top q^B+\int_0^T\cb{-\nu_t^\top a\nu_t+\eta_t^\top b\eta_t+(\alpha_t+h\nu_t-pY_t-2\phi(\nu_t-\eta_t))^\top Q^B_t}\dt
    \\
    &\quad\ +\sum_{i,j=1}^K\int_0^TQ^{B,(i)}_t\sigma^{(i,j)}_t\diff W^{(j)}_t
\end{align*}
and similarly,
\begin{align*}
    &X^I_T+(S_T-\psi Q^I_T)^\top Q^I_T\\
    &=x^I+(S_0-\psi q^I)^\top q^I+\int_0^T\cb{-\eta_t^\top b\eta_t+(\alpha_t+h\nu_t-pY_t-2\psi\eta_t)^\top Q^I_t}\dt+\sum_{i,j=1}^K\int_0^TQ^{I,(i)}_t\sigma^{(i,j)}_t\diff W^{(j)}_t.
\end{align*}
By assumption \eqref{sigbound} and as $\Vert Q^B\Vert_{\HH^2}<\infty$, we have that
\begin{align*}
    \EE\sqb{\int_0^T\abs{Q^{B,(i)}_t\sigma^{(i,j)}_t}^2\dt}\leq L^2\Vert Q^B\Vert_{\HH^2}^2<\infty,
\end{align*}
therefore, $\paren{\int_0^tQ^{B,(i)}_s\sigma^{(i,j)}_s\diff W^{(j)}_s}_{t\in[0,T]}$ is an $\FF$-martingale, and hence $\EE\sqb{\int_0^TQ^{B,(i)}_s\sigma^{(i,j)}_s\diff W^{(j)}_s}=0$. Similarly, $\EE\sqb{\int_0^TQ^{I,(i)}_s\sigma^{(i,j)}_s\diff W^{(j)}_s}=0$. Moreover, since $\nu,\eta,\alpha,Y$ are all in $\HH^2_\FF$, we have
\begin{align*}
    &\EE\sqb{\int_0^T\abs{-\nu_t^\top a\nu_t+\eta_t^\top b\eta_t+(\alpha_t+h\nu_t-pY_t-2\phi(\nu_t-\eta_t))^\top Q^B_t}\dt}
    \\&\leq\EE\sqb{\int_0^T\paren{\Vert a\Vert|\nu_t|^2+\Vert b\Vert|\eta_t|^2+(|\alpha_t|+\Vert h \Vert|\nu_t|+\Vert p \Vert|Y_t|+2\Vert\phi\Vert|\nu_t-\eta_t|)|Q^B_t|}\dt}
    \\&\leq \Vert a\Vert\Vert\nu\Vert_{\HH^2}^2+\Vert b\Vert\Vert\eta\Vert_{\HH^2}^2+\paren{\Vert\alpha\Vert_{\HH^2}+\Vert h \Vert\Vert\nu\Vert_{\HH^2}+\Vert p \Vert\Vert Y\Vert_{\HH^2}+2\Vert\phi\Vert\Vert\nu-\eta\Vert_{\HH^2}}\Vert Q^B\Vert_{\HH^2}<\infty.
\end{align*}
We also have that
\begin{align*}
    \EE\sqb{\int_0^T(Q^B_t)^\top r^BQ^B_t\dt}\leq \Vert r^B\Vert\Vert Q^B\Vert_{\HH^2}^2<\infty,
\end{align*}
\begin{align*}
    &\EE\sqb{\int_0^T\abs{-\eta_t^\top b\eta_t+(\alpha_t+ h \nu_t- p  Y_t-2\psi\eta_t)^\top Q^I_t}\dt}
    \\&\leq \Vert b\Vert\Vert\eta\Vert_{\HH^2}^2+\paren{\Vert\alpha\Vert_{\HH^2}+\Vert h \Vert\Vert\nu\Vert_{\HH^2}+\Vert p \Vert\Vert Y\Vert_{\HH^2}+2\Vert\psi\Vert\Vert\eta\Vert_{\HH^2}}\Vert Q^I\Vert_{\HH^2}<\infty,
\end{align*}
\begin{align*}
    \EE\sqb{\int_0^T(Q^I_t)^\top r^IQ^I_t\dt}\leq \Vert r^I\Vert\Vert Q^I\Vert_{\HH^2}^2<\infty,
\end{align*}
\begin{align*}
    \EE\sqb{\abs{x^B+(S_0-\phi q^B)^\top q^B}}\leq\EE[|x^B|]+\EE[|S_0|^2]^{1/2}\EE[|q^B|^2]^{1/2}+\Vert\phi\Vert\EE[|q^B|^2]<\infty,
\end{align*}
and
\begin{align*}
    \EE\sqb{\abs{x^I+(S_0-\psi q^I)^\top q^I}}\leq\EE[|x^I|]+\EE[|S_0|^2]^{1/2}\EE[|q^I|^2]^{1/2}+\Vert\psi\Vert\EE[|q^I|^2]<\infty.
\end{align*}
Therefore,
\begin{align*}
    J^I(\nu,\eta)=\EE\sqb{x^I+(S_0-\psi q^I)^\top q^I}+\EE\sqb{\int_0^T\cb{-\eta_t^\top b\eta_t+(\alpha_t+h\nu_t-p Y_t-2\psi\eta_t-r^IQ^I_t)^\top Q^I_t}\dt}\in\RR
\end{align*}
and
\begin{align*}
    J^B(\nu,\eta)&=\EE\sqb{x^B+(S_0-\phi q^B)^\top q^B}
    \\
    &\quad\ +\EE\sqb{\int_0^T\cb{-\nu_t^\top a\nu_t+\eta_t^\top b\eta_t+(\alpha_t+( h -2\phi)\nu_t- p  Y_t+2\phi\eta_t-r^BQ^B_t)^\top Q^B_t}\dt}\in\RR.
\end{align*}
\end{proof}

We end this section by defining the notion of Nash equilibrium between the broker and the informed trader.

\begin{dfn}
    A pair $(\nu,\eta)\in\HH^2_\GG\times\HH^2_\FF$ is a \emph{Nash equilibrium} between the broker and the informed trader if $\nu$ is a maximizer of $J^B(\cdot,\eta)$ over $\HH^2_\GG$ and $\eta$ is a maximizer of $J^I(\nu,\cdot)$ over $\HH^2_\FF$.
\end{dfn}

\section{Convex analysis of the performance criteria}\label{pc}
In this section, we study the convexity and differentiability of  the borker's and informed trader's criteria. Specifically, we show $J^I$ and $J^B$ are strictly concave and G\^auteaux-differentiable in the component over which the corresponding agent maximizes. We then characterize the maximizers in terms of the G\^auteaux derivatives.

First, we obtain the concavity of the informed trader's criterion.
\begin{prp}\label{cvi}
    For $\nu\in\HH^2_\GG$, the functional $J^I(\nu,\cdot):\HH^2_\FF\to\RR$ is concave.
\end{prp}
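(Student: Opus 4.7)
The plan is to leverage the running-cost representation of $J^I$ from Lemma~\ref{welldfn} together with the affine dependence $Q^{I,\eta}_t = q^I + \int_0^t \eta_s\ds$ to recognize $\eta \mapsto J^I(\nu,\eta)$ as a quadratic functional on $\BS^2_\FF$, with $\nu$ (and hence $Y^\nu$, $\alpha$, $q^I$) held fixed. Concavity of a quadratic functional is equivalent to non-positivity of its purely quadratic part, so the task reduces to controlling the latter.

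Concretely, I would fix $\eta_1,\eta_2\in\BS^2_\FF$ and $\lambda\in(0,1)$, set $\eta_\lambda := \lambda\eta_1 + (1-\lambda)\eta_2$, $\tilde\eta := \eta_1-\eta_2$, and $\tilde Q_t := \int_0^t \tilde\eta_s\ds$. All terms in $J^I$ that are at most affine in $\eta$ --- the constant $S_0^\top\EE[q^I]-\EE[(q^I)^\top\phi q^I]$, the contribution of $(\alpha_t+\fh\nu_t-\fp Y_t)^\top Q^{I,\eta}_t$, and the $q^I$-containing parts of $-2\eta_t^\top\psi Q^{I,\eta}_t$ and $-(Q^{I,\eta}_t)^\top r^I Q^{I,\eta}_t$ --- cancel when forming $\lambda J^I(\nu,\eta_1)+(1-\lambda)J^I(\nu,\eta_2)-J^I(\nu,\eta_\lambda)$, and a direct expansion then yields the quadratic identity
\begin{equation*}
\lambda J^I(\nu,\eta_1)+(1-\lambda)J^I(\nu,\eta_2)-J^I(\nu,\eta_\lambda) = -\lambda(1-\lambda)\,\EE\sqb{\int_0^T\paren{\tilde\eta_t^\top b\,\tilde\eta_t + 2\tilde\eta_t^\top\psi\tilde Q_t + \tilde Q_t^\top r^I\tilde Q_t}\dt}.
\end{equation*}
It therefore suffices to show that the bracketed expectation is non-negative.

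The only term of indeterminate sign is the cross term $2\tilde\eta_t^\top\psi\tilde Q_t$, which I would handle by a pathwise integration by parts: since $\psi = \psi^\top$ (being positive semi-definite) and $\tilde Q_0 = 0$, we have
\begin{equation*}
\int_0^T 2\tilde\eta_t^\top \psi\tilde Q_t \dt \;=\; \int_0^T \frac{\diff}{\diff t}\sqb{\tilde Q_t^\top \psi \tilde Q_t}\dt \;=\; \tilde Q_T^\top \psi \tilde Q_T \;\geq\; 0.
\end{equation*}
Combined with $\tilde\eta_t^\top b\,\tilde\eta_t \geq 0$ (positive definiteness of $b$) and $\tilde Q_t^\top r^I\tilde Q_t \geq 0$ (positive semi-definiteness of $r^I$), this gives a pathwise non-negative integrand, so the expectation is non-negative and concavity of $J^I(\nu,\cdot)$ follows.

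The main --- and essentially only --- obstacle is that the cross-coupling $\eta^\top\psi Q^I$ has indefinite sign pointwise and could a priori destroy concavity; the key observation is that its time integral integrates by parts into the manifestly non-negative boundary form $\tilde Q_T^\top\psi\tilde Q_T$, after which positive (semi-)definiteness of $b,\psi,r^I$ does the rest. The integrability required to interchange expectation and integration has already been verified in the proof of Lemma~\ref{welldfn}, so no additional measure-theoretic work is needed.
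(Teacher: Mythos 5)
Your proposal is correct and follows the same basic structure as the paper's proof: expand the convexity difference using the running-cost form from Lemma~\ref{welldfn}, observe that all constant and affine contributions cancel, and reduce concavity to showing the purely quadratic residual
\begin{equation*}
\EE\sqb{\int_0^T\paren{\tilde\eta_t^\top b\,\tilde\eta_t + 2\tilde\eta_t^\top\psi\tilde Q_t + \tilde Q_t^\top r^I\tilde Q_t}\dt}\geq 0,
\end{equation*}
with the $b$ and $r^I$ terms handled directly by positive semi-definiteness. The one place you diverge from the paper is in the cross term: the paper rewrites it as the iterated integral $\EE\sqb{\int_0^T\int_0^t \tilde\eta_t^\top\psi\,\tilde\eta_s\,\ds\,\dt}$, verifies $L^1$ integrability, applies Fubini to swap $s$ and $t$, and uses the symmetry of $\psi$ to symmetrize it into $\tfrac12\EE\sqb{\tilde Q_T^\top\psi\tilde Q_T}$; you instead note that $\tilde Q$ is absolutely continuous with $\tilde Q_0=0$, so $\int_0^T 2\tilde\eta_t^\top\psi\tilde Q_t\,\dt = \tilde Q_T^\top\psi\tilde Q_T$ holds pathwise by the fundamental theorem of calculus, and then take expectations. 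Both arguments rely identically on the symmetry of $\psi$ and both land on $\EE\sqb{\tilde Q_T^\top\psi\tilde Q_T}\geq 0$, but yours eliminates the Fubini step (and the integrability check needed to justify it) in favor of a purely deterministic, $\omega$-by-$\omega$ integration by parts, which is a modest simplification of the paper's argument.
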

\begin{proof}
    Fix $\nu\in\HH^2_\GG$ and $\eta,\kappa\in\HH^2_\FF$. Let $\rho\in(0,1)$. First, we have
    \begin{align*}
        Q^{I,\rho\eta+(1-\rho)\kappa}_t=q^I+\int_0^t((\rho\eta_s+(1-\rho)\kappa_s)\ds
        &=\rho\,\paren{q^I+\int_0^t\eta_s\ds}+(1-\rho)\,\paren{q^I+\int_0^t\kappa_s\ds}
        \\
        &=\rho\,Q^{I,\eta}_t+(1-\rho)\,Q^{I,\kappa}_t.
    \end{align*}
    Next, by lemma \ref{welldfn},
    \begin{align}
        &J^I(\nu,\rho\eta+(1-\rho)\kappa)\nonumber
        \\
        &=\EE\sqb{x^I+(S_0-\psi q^I)^\top q^I}+\EE\sqb{\int_0^T\cb{(\alpha_t+ h \nu_t- p  Y_t)^\top\paren{\rho Q^{I,\eta}_t+(1-\rho)Q^{I,\kappa}_t}}\dt}\nonumber
        \\
        &\quad\ +\EE\sqb{\int_0^T\cb{-(\rho\eta_t+(1-\rho)\kappa_t)^\top b(\rho\eta_t+(1-\rho)\kappa_t)}\dt}\nonumber\\
        &\quad\ +\rho\,\EE\sqb{\int_0^T\paren{-2\psi\eta_t-r^IQ^{I,\eta}_t}^\top\paren{\rho Q^{I,\eta}_t+(1-\rho)Q^{I,\kappa}_t}\dt}\nonumber\\
        &\quad\ +(1-\rho)\,\EE\sqb{\int_0^T\paren{-2\psi\kappa_t-r^IQ^{I,\kappa}_t}^\top\paren{\rho Q^{I,\eta}_t+(1-\rho)Q^{I,\kappa}_t}\dt}\nonumber\\
        &=\rho\, J^I(\nu,\eta)+(1-\rho)\,J^I(\nu,\kappa)\nonumber\\
        &\quad\ +\rho\,(1-\rho)\,\EE\sqb{\int_0^T\cb{\eta_t^\top b\eta_t+\kappa_t^\top b\kappa_t-\eta_t^\top b\kappa_t-\kappa_t^\top b\eta_t}\dt}\nonumber
        \\
        &\quad\ +\rho\,(1-\rho)\,\EE\sqb{\int_0^T\paren{-2\psi\eta_t-r^IQ^{I,\eta}_t}^\top\paren{-Q^{I,\eta}_t+Q^{I,\kappa}_t}\dt}\nonumber
        \\
        &\quad\ +\rho\,(1-\rho)\,\EE\sqb{\int_0^T\paren{-2\psi\kappa_t-r^IQ^{I,\kappa}_t}^\top\paren{Q^{I,\eta}_t-Q^{I,\kappa}_t}\dt}\nonumber
        \\
        &=\rho\, J^I(\nu,\eta)+(1-\rho)\,J^I(\nu,\kappa)+\rho\,(1-\rho)\,\EE\sqb{\int_0^T(\eta_t-\kappa_t)^\top b(\eta_t-\kappa_t)\dt}\nonumber
        \\
        &\quad\ +2\,\rho\,(1-\rho)\,\EE\sqb{\int_0^T(\eta_t-\kappa_t)^\top\psi\paren{Q^{I,\eta}_t-Q^{I,\kappa}_t}\dt}\nonumber
        \\
        &\quad\ +\rho\,(1-\rho)\,\EE\sqb{\int_0^T\paren{Q^{I,\eta}_t-Q^{I,\kappa}_t}^\top r^I\paren{Q^{I,\eta}_t-Q^{I,\kappa}_t}\dt}.\label{jicvsplit}
    \end{align}
    Notice
    \begin{align}
        \EE\sqb{\int_0^T(\eta_t-\kappa_t)^\top\psi\paren{Q^{I,\eta}_t-Q^{I,\kappa}_t}\dt}
        =\EE\sqb{\int_0^T\int_0^t(\eta_t-\kappa_t)^\top\psi(\eta_s-\kappa_s)\ds\dt}.\label{etapsiq}
    \end{align}
    Further, as
    \begin{align*}
        \EE\sqb{\int_0^T\int_0^t\abs{(\eta_t-\kappa_t)^\top\psi(\eta_s-\kappa_s)}\ds\dt}&\leq\Vert\psi\Vert\EE\sqb{\int_0^T\int_0^t|\eta_s-\kappa_s||\eta_t-\kappa_t|\ds\dt}
        \\&\leq\Vert\psi\Vert\EE\sqb{\paren{\int_0^T|\eta_t-\kappa_t|\dt}^2}
        \\&\leq T\Vert\psi\Vert\EE\sqb{\int_0^T|\eta_t-\kappa_t|^2\dt}=T\Vert\psi\Vert\Vert\eta-\kappa\Vert_{\HH^2}^2<\infty,
    \end{align*}
    Fubini's theorem implies
    \begin{align*}
        \EE\sqb{\int_0^T\int_0^t(\eta_t-\kappa_t)^\top\psi(\eta_s-\kappa_s)\ds\dt}
        &=\EE\sqb{\int_0^T\int_s^T(\eta_t-\kappa_t)^\top\psi(\eta_s-\kappa_s)\dt\ds}\\
        &=\EE\sqb{\int_0^T\int_t^T(\eta_s-\kappa_s)^\top\psi(\eta_t-\kappa_t)\ds\dt}\\
        &=\EE\sqb{\int_0^T\int_t^T\paren{(\eta_s-\kappa_s)^\top\psi(\eta_t-\kappa_t)}^\top\ds\dt}\\
        &=\EE\sqb{\int_0^T\int_t^T(\eta_t-\kappa_t)^\top\psi(\eta_s-\kappa_s)\ds\dt}.
    \end{align*}
    Combining this with \eqref{etapsiq} gives
    \begin{align}
        \EE\sqb{\int_0^T(\eta_t-\kappa_t)^\top\psi\paren{Q^{I,\eta}_t-Q^{I,\kappa}_t}\dt}
        &=\tfrac{1}{2}\EE\sqb{\int_0^T\int_0^T(\eta_t-\kappa_t)^\top\psi(\eta_s-\kappa_s)\ds\dt}\nonumber
        \\
        &=\tfrac{1}{2}\EE\sqb{\paren{Q^{I,\eta}_T-Q^{I,\kappa}_T}^\top\psi\paren{Q^{I,\eta}_T-Q^{I,\kappa}_T}}\geq 0.\label{jicvp1}
    \end{align}
    due to the positive semi-definiteness of $\psi$. As $r^I$ and $b$ are positive semi-definite, we have
    \begin{align}
        \EE\sqb{\int_0^T\paren{Q^{I,\eta}_t-Q^{I,\kappa}_t}^\top r^I\paren{Q^{I,\eta}_t-Q^{I,\kappa}_t}\dt}\geq0\label{jicvp2}
    \end{align}
    and
    \begin{align}
        \EE\sqb{\int_0^T(\eta_t-\kappa_t)^\top b(\eta_t-\kappa_t)\dt}\geq 0.\label{jicvp3}
    \end{align}
    Combining (\ref{jicvsplit}), (\ref{jicvp1}), (\ref{jicvp2}), and (\ref{jicvp3}) gives
    \begin{align*}
        J^I(\nu,\rho\eta+(1-\rho)\kappa)\geq\rho J^I(\nu,\eta)+(1-\rho)J^I(\nu,\kappa).
    \end{align*}
\end{proof}

Second, we obtain concavity of the  broker's criterion.
Note that, unlike in \cite{cartea2024nashequilibriumbrokerstraders}, we obtain the following result without any further restriction on $a$, $p$, $h$, and $T$.
\begin{prp}\label{cvb}
    For $\eta\in\HH^2_\FF$, the functional $J^B(\cdot,\eta):\HH^2_\GG\to\RR$ is concave.    
\end{prp}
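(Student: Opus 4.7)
My plan is to imitate the strategy of Proposition~\ref{cvi}. Fix $\eta\in\BS^2_\FF$, $\nu,\nu'\in\BS^2_\GG$ and $\rho\in(0,1)$, set $\bar\nu = \rho\nu + (1-\rho)\nu'$, and aim to show the concavity defect
\begin{align*}
\Delta := J^B(\bar\nu,\eta) - \rho\, J^B(\nu,\eta) - (1-\rho)\,J^B(\nu',\eta)
\end{align*}
is nonnegative. The first observation is that $\nu\mapsto Q^{B,\nu,\eta}$ and $\nu\mapsto Y^\nu$ are affine, so $Q^{B,\bar\nu,\eta} = \rho\, Q^{B,\nu,\eta} + (1-\rho)\,Q^{B,\nu',\eta}$ and $Y^{\bar\nu} = \rho\, Y^\nu + (1-\rho)\,Y^{\nu'}$.

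Next, I would combine this with the running-cost expression for $J^B$ in Lemma~\ref{welldfn} together with the generic bilinear identity $\bar u^\top M\bar v - \rho\, u^\top M v - (1-\rho)u'^\top M v' = -\rho(1-\rho)(u-u')^\top M(v-v')$, applied to each quadratic or bilinear-in-$\nu$ term of the integrand (linear-in-$\nu$ terms contributing $0$). A direct computation should then yield
\begin{align*}
\Delta = \rho(1-\rho)\,\EE\sqb{\int_0^T \cb{\delta_t^\top a\,\delta_t - \delta_t^\top(h-2\phi)D_t + E_t^\top p\,D_t + D_t^\top r^B D_t}\dt},
\end{align*}
where $\delta := \nu-\nu'$, $D_t := \int_0^t \delta_s\ds = Q^{B,\nu,\eta}_t - Q^{B,\nu',\eta}_t$, and $E := Y^\nu - Y^{\nu'}$ solves $\diff E_t = (h\,\delta_t - p\,E_t)\dt$ with $E_0=0$.

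I would then bound the four pieces in the integrand separately. The summands $\int_0^T \delta^\top a\,\delta\dt$ and $\int_0^T D^\top r^B D\dt$ are nonnegative by the positive (semi-)definiteness of $a$ and $r^B$. For the cross term, the symmetry of $h-2\phi$ combined with the Fubini manoeuvre of~\eqref{jicvp1} gives
\begin{align*}
-\int_0^T \delta_t^\top (h-2\phi)D_t\dt = \tfrac{1}{2}\,D_T^\top(2\phi-h)D_T,
\end{align*}
which is nonnegative by the standing assumption that $\phi - \tfrac{1}{2}h$ is positive semi-definite.

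The main obstacle is the coupling term $\EE\sqb{\int_0^T E_t^\top p\,D_t\dt}$, which I would handle coordinatewise using the diagonality of $p$ and $h$. For each index $i$, if $p_i = 0$ or $h_i = 0$ the scalar contribution vanishes (noting that $h_i = 0$ forces $E^{(i)}\equiv 0$ via its source-free ODE); otherwise, integrating $\diff E^{(i)}_t = (h_i\delta^{(i)}_t - p_iE^{(i)}_t)\dt$ from $0$ to $t$ yields $D^{(i)}_t = h_i^{-1}E^{(i)}_t + (p_i/h_i)\int_0^t E^{(i)}_s\ds$, and substituting together with the identity $\int_0^T E^{(i)}_t\int_0^t E^{(i)}_s\ds\,\dt = \tfrac{1}{2}\paren{\int_0^T E^{(i)}_s\ds}^2$ gives
\begin{align*}
p_i\int_0^T E^{(i)}_t D^{(i)}_t\dt = \frac{p_i}{h_i}\int_0^T (E^{(i)}_t)^2\dt + \frac{p_i^2}{2h_i}\paren{\int_0^T E^{(i)}_s\ds}^2 \geq 0.
\end{align*}
Summing over $i$ and combining with the other three nonnegative contributions yields $\Delta\geq 0$, establishing concavity.
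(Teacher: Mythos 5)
Your proposal is correct and follows essentially the same route as the paper: expand $J^B$ via Lemma~\ref{welldfn}, exploit the affine dependence of $Q^{B,\nu}$ and $Y^\nu$ on $\nu$, split the concavity defect into the same four terms, handle the $a$ and $r^B$ terms by positive semi-definiteness, the $(2\phi-h)$ cross term by the Fubini/symmetrization argument, and the $Y^\top p\,Q^B$ coupling coordinatewise via the ODE for $Y^\nu-Y^{\nu'}$ and the diagonality of $p$ and $h$. The only cosmetic difference is that you invoke a generic bilinear identity to organize the computation rather than expanding $J^B(\bar\nu,\eta)$ term by term; the resulting decomposition is the same.
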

\begin{proof}
    Fix $\eta\in\HH^2_\FF$ and $\nu,\zeta\in\HH^2_\GG$. Let $\rho\in(0,1)$. First, we have
    \begin{align*}
        Q^{B,\rho\nu+(1-\rho)\zeta}_t
        &=q^B+\int_0^t((\rho\nu_s+(1-\rho)\zeta_s-\eta_s)\ds\\
        &=\rho\paren{q^B+\int_0^t(\nu_s-\eta_s)\ds}+(1-\rho)\paren{q^B+\int_0^t(\zeta_s-\eta_s)\ds}=\rho\, Q^{B,\nu}_t+(1-\rho)\,Q^{B,\zeta}_t
    \end{align*}
    and
    \begin{align*}
        Y^{\rho\nu+(1-\rho)\zeta}_t&=\ep^{-tp}y+\int_0^t\ep^{(s-t)p}h(\rho\nu_s+(1-\rho)\zeta_s)\ds
        \\&=\rho\paren{\ep^{-tp}y+\int_0^t\ep^{(s-t)p}h\nu_s\ds}+(1-\rho)\paren{\ep^{-tp}y+\int_0^t\ep^{(s-t)p}h\zeta_s\ds}\\
        &=\rho Y^{\nu}_t+(1-\rho)Y^\zeta_t.
    \end{align*}
    Next, by Lemma \ref{welldfn},
    \begin{align}
        &J^B(\rho\nu+(1-\rho)\zeta,\eta)\nonumber\\
        &=\EE\sqb{x^B+(S_0-\phi q^B)^\top q^B}+\EE\sqb{\int_0^T\cb{\eta_t^\top b\eta_t+(\alpha_t+2\phi\eta_t)^\top\paren{\rho Q^{B,\nu}_t+(1-\rho)Q^{B,\zeta}_t}}\dt}\nonumber\\
        &\quad\ +\EE\sqb{\int_0^T\cb{-(\rho\nu_t+(1-\rho)\zeta_t)^\top a(\rho\nu_t+(1-\rho)\zeta_t)}\dt}\nonumber
        \\
        &\quad\ +\rho\EE\sqb{\int_0^T\paren{( h -2\phi)\nu_t- p  Y^\nu_t-r^BQ^{B,\nu}_t}^\top\paren{\rho Q^{B,\nu}_t+(1-\rho)Q^{B,\zeta}_t}\dt}\nonumber\\
        &\quad\ +(1-\rho)\EE\sqb{\int_0^T\paren{( h -2\phi)\zeta_t- p  Y^\zeta_t-r^BQ^{B,\zeta}_t}^\top\paren{\rho Q^{B,\nu}_t+(1-\rho)Q^{B,\zeta}_t}\dt}\nonumber\\
        &=\rho J^B(\nu,\eta)+(1-\rho)J^B(\zeta,\eta)\nonumber\\
        &\quad\ +\rho(1-\rho)\EE\sqb{\int_0^T\cb{\nu_t^\top a\nu_t+\zeta_t^\top a\zeta_t-\nu_t^\top a\zeta_t-\zeta_t^\top a\nu_t}\dt}\nonumber\\
        &\quad\ +\rho(1-\rho)\EE\sqb{\int_0^T\paren{( h -2\phi)\nu_t- p  Y^\nu_t-r^BQ^{B,\nu}_t}^\top\paren{-Q^{B,\nu}_t+Q^{B,\zeta}_t}\dt}\nonumber\\
        &\quad\ +\rho(1-\rho)\EE\sqb{\int_0^T\paren{( h -2\phi)\zeta_t- p  Y^\zeta_t-r^BQ^{B,\zeta}_t}^\top\paren{Q^{B,\nu}_t-Q^{B,\zeta}_t}\dt}\nonumber\\
        &=\rho J^B(\nu,\eta)+(1-\rho)J^B(\zeta,\eta)+\rho(1-\rho)\EE\sqb{\int_0^T(\nu_t-\zeta_t)^\top a(\nu_t-\zeta_t)\dt}\nonumber\\
        &\quad\ +\rho(1-\rho)\EE\sqb{\int_0^T(\nu_t-\zeta_t)^\top(2\phi- h )\paren{Q^{B,\nu}_t-Q^{B,\zeta}_t}\dt}\nonumber\\
        &\quad\ +\rho(1-\rho)\EE\sqb{\int_0^T\paren{Y^\nu_t-Y^\zeta_t}^\top p \paren{Q^{B,\nu}_t-Q^{B,\zeta}_t}\dt}\nonumber\\
        &\quad\ +\rho(1-\rho)\EE\sqb{\int_0^T\paren{Q^{B,\nu}_t-Q^{B,\zeta}_t}^\top r^B\paren{Q^{B,\nu}_t-Q^{B,\zeta}_t}\dt}.\label{jbcvsplit}
    \end{align}
    Notice
    \begin{align}
        \EE\sqb{\int_0^T(\nu_t-\zeta_t)^\top(2\phi- h )\paren{Q^{B,\nu}_t-Q^{B,\zeta}_t}\dt}=\EE\sqb{\int_0^T\int_0^t(\nu_t-\zeta_t)^\top(2\phi- h )(\nu_s-\zeta_s)\ds\dt}.\label{nuphiq}
    \end{align}
    Further, as
    \begin{align*}
        \EE\sqb{\int_0^T\int_0^t\abs{(\nu_t-\zeta_t)^\top(2\phi- h )(\nu_s-\zeta_s)}\ds\dt}&\leq\Vert2\phi- h \Vert\EE\sqb{\int_0^T\int_0^t|\nu_s-\zeta_s||\nu_t-\zeta_t|\ds\dt}
        \\&\leq\Vert2\phi-h\Vert\EE\sqb{\paren{\int_0^T|\nu_t-\zeta_t|\dt}^2}
        \\&\leq T\Vert2\phi-h\Vert\EE\sqb{\int_0^T|\nu_t-\zeta_t|^2\dt}
        \\&=T\Vert2\phi-h\Vert\Vert\nu-\zeta\Vert_{\HH^2}^2<\infty,
    \end{align*}
    Fubini's theorem implies
    \begin{align*}
        \EE\sqb{\int_0^T\int_0^t(\nu_t-\zeta_t)^\top(2\phi- h )(\nu_s-\zeta_s)\ds\dt}&=\EE\sqb{\int_0^T\int_s^T(\nu_t-\zeta_t)^\top(2\phi- h )(\nu_s-\zeta_s)\dt\ds}\\
        &=\EE\sqb{\int_0^T\int_t^T(\nu_s-\zeta_s)^\top(2\phi- h )(\nu_t-\zeta_t)\ds\dt}\\&=\EE\sqb{\int_0^T\int_t^T\paren{(\nu_s-\zeta_s)^\top(2\phi- h )(\nu_t-\zeta_t)}^\top\ds\dt}\\&=\EE\sqb{\int_0^T\int_t^T(\nu_t-\zeta_t)^\top(2\phi- h )(\nu_s-\zeta_s)\ds\dt}.
    \end{align*}
    Combining this and \eqref{nuphiq} gives
    \begin{align}
        \EE\sqb{\int_0^T(\nu_t-\zeta_t)^\top(2\phi- h )\paren{Q^{B,\nu}_t-Q^{B,\zeta}_t}\dt}
        &=\tfrac{1}{2}\EE\sqb{\int_0^T\int_0^T(\nu_t-\zeta_t)^\top(2\phi- h )(\nu_s-\zeta_s)\ds\dt}\nonumber\\
        &=\tfrac{1}{2}\EE\sqb{\paren{Q^{B,\nu}_T-Q^{B,\zeta}_T}^\top(2\phi- h )\paren{Q^{B,\nu}_T-Q^{B,\zeta}_T}}\geq 0,\label{jbcvp1}
    \end{align}
    due to the positive semi-definiteness of $2\phi- h $. Since $r^B$ and $a$ are positive semi-definite, we have
    \begin{align}
        \EE\sqb{\int_0^T\paren{Q^{B,\nu}_t-Q^{B,\zeta}_t}^\top r^B\paren{Q^{B,\nu}_t-Q^{B,\zeta}_t}\dt}\geq0\label{jbcvp2}
    \end{align}
    and
    \begin{align}
        \EE\sqb{\int_0^T(\nu_t-\zeta_t)^\top a(\nu_t-\zeta_t)\dt}\geq0\label{jbcvp3}.
    \end{align}
    The dynamics of $Y$ and $Q^B$ implies
    \begin{equation}
        Y^\nu_t-Y^\zeta_t= h \int_0^t(\nu_s-\zeta_s)\ds- p \int_0^t\paren{Y^\nu_s-Y^\zeta_s}\ds= h \paren{Q^{B,\nu}_t-Q^{B,\zeta}_t}- p \int_0^t\paren{Y^\nu_s-Y^\zeta_s}\ds.\label{eq:yandq}
    \end{equation}
    Since $p$ and $h$ commute, $\ep^{up}$ and $h$ commute as well, thus from \eqref{solny} we see that
    \begin{align*}
        Y^{\nu}_t-Y^\zeta_t=h\int_0^t\ep^{(s-t)p}(\nu_s-\zeta_s)\ds.
    \end{align*}
    Let $h^\dagger$ be the pseudoinverse of $h$. Since $p$ and $h$ are positive semi-definite and commute, they are simultaneously diagonalizable (see, for example, Theorem 5.76 of \cite{ladr}), that is, there is an invertible $K\times K$ matrix $O$ such that $p=O\hat{p}O^{-1}$ and $h=O\hat{h}O^{-1}$, where $\hat{p}$ and $\hat{h}$ are diagonal matrices whose diagonals consist of eigenvalues of $p$ and $h$, which are nonnegative, respectively. Then $h^\dagger=O\hat{h}^\dagger O^{-1}$, where $\hat{h}^\dagger$ is the diagonal matrix obtained by replacing each nonzero entry of $\hat{h}$ with its reciprocal. Thus $h^\dagger$ is positive semi-definite. Since $ph^\dagger=O\hat{p}O^{-1}O\hat{h}^\dagger O^{-1}=O\hat{p}\hat{h}^\dagger O^{-1}$, $ph^\dagger$ is positive semi-definite. It follows from \eqref{eq:yandq} that
    \begin{align}
        &\int_0^T\paren{Y^\nu_t-Y^\zeta_t}^\top p \paren{Q^{B,\nu}_t-Q^{B,\zeta}_t}\dt\nonumber
        \\&=\int_0^T\paren{\int_0^t\ep^{(s-t)p}(\nu_s-\zeta_s)\ds}^\top hp\paren{Q^{B,\nu}_t-Q^{B,\zeta}_t}\dt\nonumber
        \\&=\int_0^T\paren{\int_0^t\ep^{(s-t)p}(\nu_s-\zeta_s)\ds}^\top ph\paren{Q^{B,\nu}_t-Q^{B,\zeta}_t}\dt\nonumber
        \\&=\int_0^T\paren{\int_0^t\ep^{(s-t)p}(\nu_s-\zeta_s)\ds}^\top phh^\dagger h\paren{Q^{B,\nu}_t-Q^{B,\zeta}_t}\dt\nonumber
        \\&=\int_0^T\paren{\int_0^t\ep^{(s-t)p}(\nu_s-\zeta_s)\ds}^\top phh^\dagger\paren{Y^\nu_t-Y^\zeta_t+p \int_0^t\paren{Y^\nu_s-Y^\zeta_s}\ds}\dt\nonumber
        \\&=\int_0^T\paren{\int_0^t\ep^{(s-t)p}(\nu_s-\zeta_s)\ds}^\top hph^\dagger\paren{Y^\nu_t-Y^\zeta_t+p \int_0^t\paren{Y^\nu_s-Y^\zeta_s}\ds}\dt\nonumber
        \\&=\int_0^T\paren{Y^\nu_t-Y^\zeta_t}^\top ph^\dagger\paren{Y^\nu_t-Y^\zeta_t+p \int_0^t\paren{Y^\nu_s-Y^\zeta_s}\ds}\dt\nonumber
        \\&=\int_0^T\paren{Y^\nu_t-Y^\zeta_t}^\top ph^\dagger\paren{Y^\nu_t-Y^\zeta_t}\dt+\int_0^T\paren{Y^\nu_t-Y^\zeta_t}^\top ph^\dagger p\int_0^t\paren{Y^\nu_s-Y^\zeta_s}\ds\dt\nonumber
        \\&=\int_0^T\paren{Y^\nu_t-Y^\zeta_t}^\top ph^\dagger\paren{Y^\nu_t-Y^\zeta_t}\dt+\frac{1}{2}\paren{p\int_0^T\paren{Y^\nu_t-Y^\zeta_t}\dt}^\top h^\dagger p\int_0^T\paren{Y^\nu_t-Y^\zeta_t}\dt\geq 0\,.\label{jbcvp4}
    \end{align}
    Combining (\ref{jbcvsplit}), (\ref{jbcvp1}), (\ref{jbcvp2}), (\ref{jbcvp3}), and (\ref{jbcvp4}) gives
    \begin{align*}
        J^B(\rho\nu+(1-\rho)\zeta,\eta)\geq\rho\,J^B(\nu,\eta)+(1-\rho)\,J^B(\zeta,\eta).
    \end{align*}
\end{proof}

Next, we provide some important results regarding G\^ateaux-differentiability and maximizing concave functions. For a real vector space $V$ and a functional $f$ on $V$, we say $f$ is \emph{G\^ateaux-differentiable} at $v\in V$ if there exists a linear functional $v^*$ on $V$ such that the canonical pairing
\begin{align*}
    \langle v^*,u\rangle=\lim_{\varepsilon\to0^+}\frac{f(v+\varepsilon u)-f(v)}{\varepsilon}
\end{align*}
for all $u\in V$, in which case $v^*$ is called the \emph{G\^ateaux-derivative} of $f$ at $v$ and is denoted by $\cD f(v)$.

\begin{lem}\label{optcd}
    Let $V$ be a vector space and $f$ be a concave functional on $V$. Suppose $f$ is G\^ateaux-differentiable at $v\in V$. Then $f$ has a maximum at $v$ if and only if $\cD f(v)=0$.
\end{lem}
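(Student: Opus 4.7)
The plan is to prove both implications directly from the definition of the Gâteaux derivative and the concavity of $f$, with no machinery beyond what is stated. This is a standard first-order optimality characterization for concave functionals, and the proof is short.

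For the forward direction, suppose $f$ attains its maximum at $v$. Fix any $u\in V$ and any $\varepsilon>0$. Then $f(v+\varepsilon u)-f(v)\leq 0$, so dividing by $\varepsilon$ and letting $\varepsilon\to 0^+$ yields $\langle \cD f(v),u\rangle\leq 0$. Applying the same argument with $u$ replaced by $-u$ and using linearity of $\cD f(v)$ gives $\langle \cD f(v),u\rangle\geq 0$. Hence $\langle \cD f(v),u\rangle=0$ for all $u\in V$, i.e., $\cD f(v)=0$.

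For the reverse direction, suppose $\cD f(v)=0$. Fix $u\in V$ and $\varepsilon\in(0,1)$. By concavity,
\begin{align*}
f(v+\varepsilon(u-v))=f((1-\varepsilon)v+\varepsilon u)\geq (1-\varepsilon)f(v)+\varepsilon f(u),
\end{align*}
so
\begin{align*}
\frac{f(v+\varepsilon(u-v))-f(v)}{\varepsilon}\geq f(u)-f(v).
\end{align*}
Letting $\varepsilon\to 0^+$ on the left-hand side gives $\langle \cD f(v),u-v\rangle\geq f(u)-f(v)$, and since $\cD f(v)=0$ the left side is $0$. Thus $f(v)\geq f(u)$ for every $u\in V$, so $v$ is a maximizer.

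There is no real obstacle here; the only subtlety is that the forward direction uses $\varepsilon>0$ only (so we need to test both $u$ and $-u$ to extract equality from a one-sided derivative), while the reverse direction leverages the familiar concave inequality $f((1-\varepsilon)v+\varepsilon u)\geq (1-\varepsilon)f(v)+\varepsilon f(u)$ in the form of a difference quotient bound, passing to the limit $\varepsilon\to 0^+$. Both steps use only the one-sided limit that appears in the definition of Gâteaux-differentiability given just before the statement.
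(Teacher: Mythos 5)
Your proof is correct and follows essentially the same route as the paper: both directions use the one-sided Gâteaux limit directly, with the forward direction testing $u$ and $-u$, and the reverse direction rearranging the concavity inequality $f((1-\varepsilon)v+\varepsilon u)\geq(1-\varepsilon)f(v)+\varepsilon f(u)$ into a difference quotient bound before passing to the limit. The two arguments differ only in minor algebraic presentation.
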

\begin{proof}
    Suppose $f$ has a maximum at $v$. Let $u\in V$. For $\varepsilon>0$,
    \begin{align*}
        \frac{f(v+\varepsilon u)-f(v)}{\varepsilon}\leq 0
    \end{align*}
    and
    \begin{align*}
        \frac{f(v-\varepsilon u)-f(v)}{\varepsilon}\leq 0.
    \end{align*}
    Hence
    \begin{align*}
        \langle\cD f(v),u\rangle=\lim_{\varepsilon\to0^+}\frac{f(v+\varepsilon u)-f(v)}{\varepsilon}\leq 0
    \end{align*}
    and
    \begin{align*}
        -\langle\cD f(v),u\rangle=\langle\cD f(v),-u\rangle=\lim_{\varepsilon\to0^+}\frac{f(v-\varepsilon u)-f(v)}{\varepsilon}\leq 0,
    \end{align*}
     where the linearity of $\cD f(v)$ is used in the second calculation. This means $\langle\cD f(v),u\rangle=0$. Hence $\cD f(v)=0$.

    Conversely, suppose $\cD f(v)=0$. Let $u\in V$. For $\varepsilon\in(0,1)$,
    \begin{align*}
        f(v)
        &=f(u)-\frac{\varepsilon f(u)-\varepsilon f(v)}{\varepsilon}\\
        &=f(u)-\frac{\varepsilon f(u)+(1-\varepsilon) f(v)-f(v)}{\varepsilon}\\
        &\geq f(u)-\frac{f(\varepsilon u+(1-\varepsilon)v)-f(v)}{\varepsilon} =f(u)-\frac{f(v+\varepsilon(u-v))-f(v)}{\varepsilon},
    \end{align*}
    where the concavity of $f$ is used for the inequality. Taking $\varepsilon\to 0^+$ gives $f(v)\geq f(u)-\langle\cD f(v),u-v\rangle=f(u)$. Hence $f$ has a maximum at $v$.
\end{proof}

The next two propositions pertain to the G\^ateaux-differentiability of the informed trader's and the broker's performance criteria.
\begin{prp}\label{gti}
    For $\nu\in\HH^2_\GG$, $J^I(\nu,\cdot):\HH^2_\FF\to\RR$ is G\^ateaux-differentiable at every $\eta\in\HH^2_\FF$ with G\^ateaux-derivative given by
    \begin{align*}
        \langle\cD J^I(\nu,\cdot)|_\eta,\kappa\rangle=\EE\sqb{\int_0^T\kappa_t^\top\paren{-2b\eta_t-2\psi Q^{I,\eta}_t+\int_t^T\paren{\alpha_s+ h \nu_s- p  Y_s-2\psi\eta_s-2r^IQ^{I,\eta}_s}\ds}\dt}
    \end{align*}
    for all $\kappa\in\HH^2_\FF$.
\end{prp}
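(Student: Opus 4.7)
The plan is to compute the incremental quotient $\varepsilon^{-1}(J^I(\nu,\eta+\varepsilon\kappa)-J^I(\nu,\eta))$ directly from the running-cost form of $J^I$ given in Lemma \ref{welldfn}, pass to the limit as $\varepsilon\to 0^+$, and then reorganize the resulting expression using Fubini so that $\kappa_t$ appears as an outer factor.

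First I would exploit linearity of $Q^I$ in $\eta$: since $\dQ^{I,\eta}_t=\eta_t\dt$, one has
\begin{equation*}
Q^{I,\eta+\varepsilon\kappa}_t=Q^{I,\eta}_t+\varepsilon\,\Delta_t,\qquad \Delta_t:=\int_0^t\kappa_s\ds.
\end{equation*}
Plugging this into the expression for $J^I$ in Lemma \ref{welldfn} and subtracting $J^I(\nu,\eta)$, every term is either linear or quadratic in $\varepsilon$, so dividing by $\varepsilon$ and taking $\varepsilon\to 0^+$ produces the linear part, namely
\begin{align*}
\lim_{\varepsilon\to 0^+}\tfrac{1}{\varepsilon}\paren{J^I(\nu,\eta+\varepsilon\kappa)-J^I(\nu,\eta)}
&=\EE\sqb{\int_0^T\cb{-2\eta_t^\top b\kappa_t-2\kappa_t^\top\psi Q^{I,\eta}_t}\dt}\\
&\quad+\EE\sqb{\int_0^T\paren{\alpha_t+ h \nu_t- p  Y_t-2\psi\eta_t-2r^IQ^{I,\eta}_t}^\top\Delta_t\dt}.
\end{align*}
The quadratic terms vanish in the limit; their passage to zero is immediate because, using the $\BS^2$-estimates already established (e.g. $\Vert\Delta\Vert_{\BS^2}\leq T\Vert\kappa\Vert_{\BS^2}$ and $Q^{I,\eta}\in\BS^2_\FF$), their expectations are bounded by a constant times $\varepsilon$, with the constant depending only on $\Vert\kappa\Vert_{\BS^2}$, $\Vert\eta\Vert_{\BS^2}$, $\Vert\psi\Vert$, $\Vert r^I\Vert$, and $\Vert b\Vert$.

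Second, I would convert the terms containing $\Delta_t=\int_0^t\kappa_s\ds$ to a form featuring $\kappa_t^\top(\cdot)$. For a generic integrable $\FF$-adapted process $X$ taking values in $\RR^K$, Fubini's theorem (whose applicability follows from the $\BS^2$-bounds exactly as in \eqref{etapsiq}) gives
\begin{equation*}
\EE\sqb{\int_0^T X_t^\top\Delta_t\dt}=\EE\sqb{\int_0^T\int_0^tX_t^\top\kappa_s\ds\dt}=\EE\sqb{\int_0^T\kappa_t^\top\int_t^TX_s\ds\dt},
\end{equation*}
where the last equality relabels $(s,t)\mapsto(t,s)$ and uses that $X_t^\top\kappa_s$ is a scalar. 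Applying this with $X=\alpha+ h \nu- p  Y-2\psi\eta-2r^IQ^{I,\eta}$ (using also that $\psi,r^I,b$ are symmetric to move them across the transpose), and combining with the two pointwise terms $-2\eta_t^\top b\kappa_t=\kappa_t^\top(-2b\eta_t)$ and $-2\kappa_t^\top\psi Q^{I,\eta}_t$, yields exactly the claimed formula. Linearity of $\kappa\mapsto\langle \cD J^I(\nu,\cdot)|_\eta,\kappa\rangle$ is then manifest.

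The only non-routine step is justifying dominated convergence / Fubini for the double integrals; but these rest on the uniform-in-$\varepsilon$ $\BS^2$-bounds collected in Section \ref{setup} and reproduced in the concavity proof (Proposition \ref{cvi}), so no new machinery is required. The rest is bookkeeping.
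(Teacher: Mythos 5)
Your proposal is correct and follows essentially the same route as the paper: substitute $Q^{I,\eta+\varepsilon\kappa}=Q^{I,\eta}+\varepsilon\Delta$, observe the difference quotient splits into an $\varepsilon$-linear part and an $\varepsilon^2$ remainder, and then rewrite the double integral via Fubini so that $\kappa_t$ appears as the outer factor. You are in fact slightly more explicit than the paper about why the $\varepsilon^2$ terms vanish and about the integrability needed to justify Fubini, but the underlying argument is identical.
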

\begin{proof}
    Fix $\nu\in\HH^2_\GG$ and $\eta,\kappa\in\HH^2_\FF$. For $\varepsilon>0$,
    \begin{align*}
        Q^{I,\eta+\varepsilon\kappa}_t=q^I+\int_0^t(\eta_s+\varepsilon\kappa_s)\ds=q^I+\int_0^t\eta_s\ds+\varepsilon\int_0^t\kappa_s\ds=Q^{I,\eta}_t+\varepsilon\int_0^t\kappa_s\ds,
    \end{align*}
    so
    \begin{align*}
        &J^I(\nu,\eta+\varepsilon\kappa)-J^I(\nu,\eta)\\
        &=\EE\sqb{\int_0^T\cb{-(\eta_t+\varepsilon\kappa_t)^\top b(\eta_t+\varepsilon\kappa_t)+\eta_t^\top b\eta_t+(\alpha_t+ h \nu_t- p  Y_t)^\top\paren{\varepsilon\int_0^t\kappa_s\ds}}\dt}\\
        &\quad\ +\EE\sqb{\int_0^T-2\cb{(\eta_t+\varepsilon\kappa_t)^\top\psi\paren{Q^{I,\eta}_t+\varepsilon\int_0^t\kappa_s\ds}-\eta_t^\top\psi Q^{I,\eta}_t}\dt}\\
        &\quad\ +\EE\sqb{\int_0^T-\cb{\paren{Q^{I,\eta}_t+\varepsilon\int_0^t\kappa_s\ds}^\top r^I\paren{Q^{I,\eta}_t+\varepsilon\int_0^t\kappa_s\ds}-\paren{Q^{I,\eta}_t}^\top r^I Q^{I,\eta}_t}\dt}\\
        &=\varepsilon\EE\sqb{\int_0^T\kappa_t^\top\paren{-2b\eta_t-2\psi Q^{I,\eta}_t}\dt+\int_0^T\int_0^t\kappa_s^\top\paren{\alpha_t+ h \nu_t- p  Y_t-2\psi\eta_t-2r^IQ^{I,\eta}_t}\ds\dt}\\
        &\quad\ -\varepsilon^2\EE\sqb{\int_0^T\cb{\kappa_t^\top b\kappa_t+2\kappa_t^\top\psi\int_0^t\kappa_s\ds+\paren{\int_0^t\kappa_s\ds}^\top r^I\int_0^t\kappa_s\ds}}\\
        &=\varepsilon\EE\sqb{\int_0^T\kappa_t^\top\paren{-2b\eta_t-2\psi Q^{I,\eta}_t}\dt+\int_0^T\int_t^T\kappa_t^\top\paren{\alpha_s+ h \nu_s- p  Y_s-2\psi\eta_s-2r^IQ^{I,\eta}_s}\ds\dt}\\
        &\quad\ -\varepsilon^2\EE\sqb{\int_0^T\cb{\kappa_t^\top b\kappa_t+2\kappa_t^\top\psi\int_0^t\kappa_s\ds+\paren{\int_0^t\kappa_s\ds}^\top r^I\int_0^t\kappa_s\ds}}\\
        &=\varepsilon\EE\sqb{\int_0^T\kappa_t^\top\cb{-2b\eta_t-2\psi Q^{I,\eta}_t+\int_t^T\paren{\alpha_s+ h \nu_s- p  Y_s-2\psi\eta_s-2r^IQ^{I,\eta}_s}\ds}\dt}\\
        &\quad\ -\varepsilon^2\EE\sqb{\int_0^T\cb{\kappa_t^\top b\kappa_t+2\kappa_t^\top\psi\int_0^t\kappa_s\ds+\paren{\int_0^t\kappa_s\ds}^\top r^I\int_0^t\kappa_s\ds}}.
    \end{align*}
    It follows that
    \begin{align*}
        \langle\cD J^I(\nu,\cdot)|_\eta,\kappa\rangle
        &=\lim_{\varepsilon\to0^+}\frac{J^I(\nu,\eta+\varepsilon\kappa)-J^I(\nu,\eta)}{\varepsilon}\\
        &=\EE\sqb{\int_0^T\kappa_t^\top\cb{-2b\eta_t-2\psi Q^{I,\eta}_t+\int_t^T\paren{\alpha_s+ h \nu_s- p  Y_s-2\psi\eta_s-2r^IQ^{I,\eta}_s}\ds}\dt}.
    \end{align*}
    The map $\kappa\mapsto\langle\cD J^I(\nu,\cdot)|_\eta,\kappa\rangle$ is clearly linear.
\end{proof}

\begin{prp}\label{gtb}
    For $\eta\in\HH^2_\FF$, $J^B(\cdot,\eta):\HH^2_\GG\to\RR$ is G\^ateaux-differentiable at every $\nu\in\HH^2_\GG$ with G\^ateaux-derivative given by
    \begin{align*}
        \langle\cD J^B(\cdot,\eta)|_\nu,\zeta\rangle
        &=\EE\left[\left.\int_0^T\zeta_t^\top\right\{-2a\nu_t-(2\phi- h )Q^{B,\nu}_t\right.\\
        &\quad\quad\quad\quad\quad\quad\quad+\left.\left.\int_t^T\paren{\alpha_s+2\phi\eta_s-(2\phi- h )\nu_s- p  Y^\nu_s-\paren{ h \ep^{(t-s) p } p +2r^B}Q^{B,\nu}_s}\ds\right\}\dt\right]
    \end{align*}
    for all $\zeta\in\HH^2_\GG$.
\end{prp}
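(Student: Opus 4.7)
The plan is to follow the same pattern as the proof of Proposition \ref{gti}: fix $\zeta \in \BS^2_\GG$, expand $J^B(\nu+\varepsilon\zeta,\eta) - J^B(\nu,\eta)$ using the running-cost formulation of Lemma \ref{welldfn}, extract the linear-in-$\varepsilon$ term, rearrange it via Fubini so that the integrand has $\zeta_t^\top$ on the left, and verify that the quadratic-in-$\varepsilon$ remainder has finite expectation so that dividing by $\varepsilon$ and sending $\varepsilon \downarrow 0$ produces the claimed formula.

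First I would record the two key linearities induced by the perturbation: by the same computation as in the proof of Proposition \ref{cvb}, one has
\begin{align*}
Q^{B,\nu+\varepsilon\zeta}_t = Q^{B,\nu}_t + \varepsilon\,\widetilde{Q}_t,
\qquad
Y^{\nu+\varepsilon\zeta}_t = Y^{\nu}_t + \varepsilon\,\widetilde{Y}_t,
\end{align*}
where $\widetilde{Q}_t \coloneqq \int_0^t \zeta_s\,\ds$ and $\widetilde{Y}_t \coloneqq \int_0^t \ep^{(s-t)p}\,h\,\zeta_s\,\ds$. Plugging these into the running-cost form of $J^B$ from Lemma \ref{welldfn} and collecting powers of $\varepsilon$ yields
\begin{align*}
J^B(\nu+\varepsilon\zeta,\eta)-J^B(\nu,\eta) = \varepsilon\, A(\zeta) + \varepsilon^2\, R(\zeta),
\end{align*}
where $A(\zeta)$ collects the cross-terms between the perturbation and the baseline and $R(\zeta)$ collects the purely quadratic perturbation. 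Explicitly,
\begin{align*}
A(\zeta) = \EE\sqb{\int_0^T\cb{-2\zeta_t^\top a\nu_t + \paren{(h-2\phi)\zeta_t - p\,\widetilde{Y}_t - r^B\,\widetilde{Q}_t}^\top Q^{B,\nu}_t + \paren{\alpha_t + (h-2\phi)\nu_t - p\,Y^\nu_t + 2\phi\eta_t - r^B Q^{B,\nu}_t}^\top \widetilde{Q}_t}\dt}.
\end{align*}

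The main step, and the only genuinely technical one, is to rewrite $A(\zeta)$ so that each $\ds\dt$ double integral has the $\zeta$ factor on the outside. Exactly as in the proof of Proposition \ref{gti}, a Fubini argument (whose absolute-integrability hypothesis is verified by the same $\BS^2$-bounds used in Lemma \ref{welldfn} and Proposition \ref{cvb}) converts $\int_0^T F_t^\top \widetilde{Q}_t \dt$ into $\int_0^T \zeta_t^\top \int_t^T F_s \ds \dt$. The only new wrinkle is the $\widetilde{Y}$ term: since $h$, $p$, and $\ep^{(s-t)p}$ are diagonal and therefore commute and are their own transposes, $(\ep^{(s-t)p}h\zeta_s)^\top p\,Q^{B,\nu}_t = \zeta_s^\top h\,\ep^{(s-t)p} p\,Q^{B,\nu}_t$, and a Fubini swap (justified by $\Vert\ep^{(s-t)p}\Vert\leq 1$ for $s\leq t$ and the $\BS^2$-bound on $Q^{B,\nu}$) gives
\begin{align*}
\EE\sqb{\int_0^T (p\,\widetilde{Y}_t)^\top Q^{B,\nu}_t\,\dt} = \EE\sqb{\int_0^T \zeta_t^\top \int_t^T h\,\ep^{(t-s)p}\,p\,Q^{B,\nu}_s\,\ds\,\dt}.
\end{align*}
Collecting the four rearranged contributions and using $(h-2\phi) = -(2\phi-h)$ produces exactly the expression claimed in the proposition.

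Finally I would dispatch the remainder: $R(\zeta)$ is a sum of terms each bounded in absolute value by a constant depending on $T$, $\Vert a\Vert$, $\Vert h\Vert$, $\Vert p\Vert$, $\Vert r^B\Vert$, $\Vert 2\phi-h\Vert$ times $\Vert\zeta\Vert_{\BS^2}^2$, so $R(\zeta)\in\RR$ and $\varepsilon R(\zeta) \to 0$ as $\varepsilon\to 0^+$. Hence
\begin{align*}
\lim_{\varepsilon\to 0^+}\frac{J^B(\nu+\varepsilon\zeta,\eta)-J^B(\nu,\eta)}{\varepsilon} = A(\zeta),
\end{align*}
which is the stated formula. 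Linearity of $\zeta \mapsto A(\zeta)$ is immediate by inspection, completing the verification of G\^ateaux-differentiability. I expect the principal obstacle to be purely bookkeeping: keeping track of the transpose/commutation identities for the diagonal matrices $h$, $p$, $\ep^{up}$ when moving them under the Fubini swap in the $\widetilde Y$ term, since every other manipulation is a direct analog of Proposition \ref{gti}.
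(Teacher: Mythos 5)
Your proposal is correct and follows essentially the same route as the paper's proof: expand $J^B(\nu+\varepsilon\zeta,\eta)-J^B(\nu,\eta)$ via the running-cost form from Lemma~\ref{welldfn}, isolate the linear and quadratic terms in $\varepsilon$, and use Fubini (including the diagonal-commutation trick for the $\widetilde Y$ term to produce the $h\ep^{(t-s)p}p$ kernel) to collect the linear term into the claimed $\int_0^T\zeta_t^\top\{\cdots\}\dt$ form. The paper carries out exactly this expansion and the same $\ds\dt\to\dt\ds$ swap in one continuous display rather than naming $A(\zeta)$ and $R(\zeta)$, but the steps and justifications coincide.
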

\begin{proof}
    Fix $\eta\in\HH^2_\FF$ and $\nu,\zeta\in\HH^2_\GG$. For $\varepsilon>0$,
    \begin{align*}
        Q^{B,\nu+\varepsilon\zeta}_t=q^B+\int_0^t(\nu_s+\varepsilon\zeta_s)\ds=q^B+\int_0^t\nu_s\ds+\varepsilon\int_0^t\zeta_s\ds=Q^{B,\nu}_t+\varepsilon\int_0^t\zeta_s\ds
    \end{align*}
    and
    \begin{align*}
        Y^{\nu+\varepsilon\zeta}_t=\ep^{-tp}y+\int_0^t\ep^{(s-t)p}h(\nu_s+\varepsilon\zeta_s)\ds=Y^{\nu}_t+\varepsilon \int_0^t\ep^{(s-t)p}h\zeta_s\ds,
    \end{align*}
    so
    \begin{align*}
        &J^B(\nu+\varepsilon\zeta,\eta)-J^B(\nu,\eta)
        \\
        &=\EE\sqb{\int_0^T\cb{-(\nu_t+\varepsilon\zeta_t)^\top a(\nu_t+\varepsilon\zeta_t)+\nu_t^\top a\nu_t+(\alpha_t+2\phi\eta_t)^\top\paren{\varepsilon\int_0^t\zeta_s\ds}}\dt}
        \\
        &\quad\ +\EE\sqb{\int_0^T-\cb{(\nu_t+\varepsilon\zeta_t)^\top(2\phi- h )\paren{Q^{B,\nu}_t+\varepsilon\int_0^t\zeta_s\ds}-\nu_t^\top(2\phi- h )Q^{B,\nu}_t}\dt}
        \\
        &\quad\ +\EE\sqb{\int_0^T-\cb{\paren{Y^\nu_t+\varepsilon\int_0^t\ep^{(s-t) p } h \zeta_s\ds}^\top p \paren{Q^{B,\nu}_t+\varepsilon\int_0^t\zeta_s\ds}-\paren{Y^\nu_t}^\top p  Q^{B,\nu}_t}\dt}
        \\
        &\quad\ +\EE\sqb{\int_0^T-\cb{\paren{Q^{B,\nu}_t+\varepsilon\int_0^t\zeta_s\ds}^\top r^B\paren{Q^{B,\nu}_t+\varepsilon\int_0^t\zeta_s\ds}-(Q^{B,\nu}_t)^\top r^B(Q^{B,\nu}_t)}\dt}
        \\
        &=\varepsilon\EE\left[\int_0^T\zeta_t^\top\paren{-2a\nu_t-(2\phi- h )Q^{B,\nu}_t}\dt\right.
        \\
        &\quad\quad\quad+\left.\int_0^T\int_0^t\zeta_s^\top\paren{\alpha_t+2\phi\eta_t-(2\phi- h )\nu_t- p  Y^\nu_t-\paren{ h \ep^{(s-t) p } p +2r^B}Q^{B,\nu}_t}\ds\dt\right]
        \\
        &\quad\ -\varepsilon^2\EE\sqb{\int_0^T\cb{\zeta_t^\top a\zeta_t+\paren{(2\phi- h )\zeta_t+\int_0^t\paren{ p \ep^{(s-t) p } h +r^B}\zeta_s\ds}^\top\int_0^t\zeta_s\ds}}
        \\
        &=\varepsilon\EE\left[\int_0^T\zeta_t^\top\paren{-2a\nu_t-(2\phi- h )Q^{B,\nu}_t}\dt\right.
        \\
        &\quad\quad\quad+\left.\int_0^T\int_t^T\zeta_t^\top\paren{\alpha_s+2\phi\eta_s-(2\phi- h )\nu_s- p  Y^\nu_s-\paren{ h \ep^{(t-s) p } p +2r^B}Q^{B,\nu}_s}\ds\dt\right]
        \\
        &\quad\ -\varepsilon^2\EE\sqb{\int_0^T\cb{\zeta_t^\top a\zeta_t+\paren{(2\phi- h )\zeta_t+\int_0^t\paren{ p \ep^{(s-t) p } h +r^B}\zeta_s\ds}^\top\int_0^t\zeta_s\ds}}
        \\
        &=\varepsilon\EE\left[\int_0^T\zeta_t^\top\left\{-2a\nu_t-(2\phi- h )Q^{B,\nu}_t\phantom{\int_t^T}\right.\right.
        \\
        &\hspace*{7em}+\left.\left.\int_t^T\paren{\alpha_s+2\phi\eta_s-(2\phi- h )\nu_s- p  Y^\nu_s-\paren{ h \ep^{(t-s) p } p +2r^B}Q^{B,\nu}_s}\ds\right\}\dt\right]
        \\
        &\quad\ -\varepsilon^2\EE\sqb{\int_0^T\cb{\zeta_t^\top a\zeta_t+\paren{(2\phi- h )\zeta_t+\int_0^t\paren{ p \ep^{(s-t) p } h +r^B}\zeta_s\ds}^\top\int_0^t\zeta_s\ds}}.
    \end{align*}
    It follows that
    \begin{align*}
        \langle\cD J^B(\cdot,\eta)|_\nu,\zeta\rangle
        &=\lim_{\varepsilon\to0^+}\frac{J^B(\nu+\varepsilon\zeta,\eta)-J^B(\nu,\eta)}{\varepsilon}\\
        &=\EE\left[\left.\int_0^T\zeta_t^\top\right\{-2a\nu_t-(2\phi- h )Q^{B,\nu}_t\right.\\
        &\quad\quad\quad\quad\quad\quad\ +\left.\left.\int_t^T\paren{\alpha_s+2\phi\eta_s-(2\phi- h )\nu_s- p  Y^\nu_s-\paren{ h \ep^{(t-s) p } p +2r^B}Q^{B,\nu}_s}\ds\right\}\dt\right].
    \end{align*}
    Again, the map $\zeta\mapsto\langle\cD J^B(\cdot,\eta)|_\nu,\zeta\rangle$ is linear.
\end{proof}

Next, we characterize the maximizers of $J^I(\nu,\cdot)$ and $J^B(\cdot,\eta)$ using the concavity and G\^ateaux-differentiability we just proved. Due the partial information that the broker works in, this involves estimating processes using some filtration. We have to ensure all filtered processes that arise are admissible, which requires them to be progressively measurable. This is achieved through Lemma \ref{progproj}. We first state a well-known result.

\begin{thm}\label{cadmart}
    Let $(\Omega,\F,\PP)$ be a complete measure space on which a filtration $\mathbb{Y}=(\mathcal{Y}_t)$ satisfying the usual conditions is defined. Then every $\mathbb{Y}$-martingale has a c\`ad-l\`ag modification.
\end{thm}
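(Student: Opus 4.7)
The plan is the classical Doob regularization procedure: apply the upcrossing inequality on the countable dense set of rationals, pass to right-limits to obtain a c\`ad-l\`ag candidate, and then identify it with $M$ via uniform integrability. I would fix an arbitrary horizon $T>0$ and work with the martingale $M=(M_t)_{t\in[0,T]}$; the case of an unbounded interval follows by patching modifications constructed on each $[0,n]$. Because $M_s=\EE[M_T\mid\mathcal{Y}_s]$ for every $s\leq T$, the family $\cb{M_s:s\in[0,T]}$ is uniformly integrable, a fact used crucially in the identification step.

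For each pair of rationals $a<b$ in $[0,T]$, I would apply Doob's upcrossing inequality to the discrete-time martingale $(M_r)_{r\in F}$ indexed by a finite $F\subset \QQ\cap[0,T]$, then pass to an increasing exhaustion of $\QQ\cap[0,T]$ using monotone convergence of the upcrossing counts to obtain $\EE\sqb{U([a,b];\,M|_{\QQ\cap[0,T]})}\leq \EE\sqb{(M_T-a)^+}/(b-a)<\infty$. Combining this over the countable family of rational pairs with a similar application of Doob's maximal inequality to control $\sup_{r\in\QQ\cap[0,T]}|M_r|$, there is an event $\Omega_0$ of full probability on which the restriction $r\mapsto M_r(\omega)$ to rationals admits finite right- and left-limits at every point of $[0,T]$.

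I then define $\tilde M_t(\omega)\coloneqq\lim_{r\downarrow t,\, r\in\QQ\cap[0,T]} M_r(\omega)$ on $\Omega_0$ and $\tilde M\equiv 0$ on the null complement. By construction $\tilde M$ is c\`ad-l\`ag, and each $\tilde M_t$ is a pointwise limit of $\mathcal{Y}_{r}$-measurable random variables with $r\downarrow t$, hence $\mathcal{Y}_{t+}$-measurable. The standing usual conditions give $\mathcal{Y}_{t+}=\mathcal{Y}_t$, and completeness of the probability space takes care of measurability on the exceptional null set, so $\tilde M$ is $\mathbb{Y}$-adapted.

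The delicate step, which I expect to be the main obstacle, is showing $\tilde M_t=M_t$ almost surely for every fixed $t$. Picking rationals $r_n\downarrow t$, I have $M_{r_n}\to\tilde M_t$ a.s.\ on $\Omega_0$, and the uniform integrability above upgrades this to $L^1$ convergence. For any $A\in\mathcal{Y}_t$ the inclusion $A\in\mathcal{Y}_{r_n}$ combined with the martingale identity yields $\int_A M_{r_n}\,\diff\PP=\int_A M_T\,\diff\PP=\int_A M_t\,\diff\PP$; sending $n\to\infty$ produces $\int_A \tilde M_t\,\diff\PP=\int_A M_t\,\diff\PP$, which forces $\tilde M_t=M_t$ a.s.\ since $\tilde M_t$ is $\mathcal{Y}_t$-measurable. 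This is precisely where right-continuity of $\mathbb{Y}$ is essential: without it one could only conclude $\tilde M_t=\EE[M_T\mid\mathcal{Y}_{t+}]$ a.s., which generally differs from $M_t=\EE[M_T\mid\mathcal{Y}_t]$.
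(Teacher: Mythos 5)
The paper does not prove this statement itself; it simply cites Theorem~3.13 in Chapter~1 of Karatzas--Shreve. Your proof is a correct and faithful reproduction of the classical Doob regularization argument (upcrossing bound along rationals, right-limit candidate, identification via uniform integrability and $\mathcal{Y}_{t+}=\mathcal{Y}_t$), which is precisely the proof behind that citation; the only cosmetic point is that the definition $\tilde M_t=\lim_{r\downarrow t,\,r\in\QQ\cap[0,T]}M_r$ degenerates at $t=T$, so one should explicitly set $\tilde M_T\coloneqq M_T$ (or extend $M$ constantly beyond $T$) before claiming the modification on all of $[0,T]$.
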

\begin{proof}
    See Theorem 3.13 in Chapter 1 of \cite{ks}.
\end{proof}

In the remainder of this article, we assume all martingales adapted to a filtration satisfying usual conditions have c\`ad-l\`ag sample paths.
\begin{lem}\label{progproj}
    Let $(\Omega,\F,\PP)$ be a complete measure space on which a filtration $\YY=(\cY_t)_{t\in[0,T]}$ satisfying the usual conditions is defined. Then for each $\RR^K$-valued jointly measurable process $\xi$ such that $\EE\sqb{\int_0^T|\xi_t|^2\dt}<\infty$, there is a unique $\hat{\xi}\in\HH^2_\YY$ such that $\hat{\xi}_t=\EE[\xi_t|\cY_t]$ a.s. for a.e. $t\in[0,T]$. We call $\hat{\xi}$ the \emph{projection of $\xi$ onto $\YY$}.
\end{lem}
\begin{proof}
    To see uniqueness, suppose both $\hat{\xi}$ and $\Tilde{\xi}$ satisfy the desired properties. Then $\hat{\xi}_t-\Tilde{\xi}_t=0$ a.s. for a.e. $t\in[0,T]$, therefore
    \begin{align*}
        \Vert\hat{\xi}-\Tilde{\xi}\Vert_{\HH^2}^2=\EE\sqb{\int_0^T\abs{\hat{\xi}_t-\Tilde{\xi}_t}^2\dt}=\int_0^T\EE\sqb{\abs{\hat{\xi}_t-\Tilde{\xi}_t}^2}\dt=0.
    \end{align*}
    
    To see existence, first assume $\xi$ is an elementary processes such that
    \begin{align*}
        \xi_t(\omega)=\sum_{i=1}^{p-1}\xi^{(i)}(\omega)\mathbf{1}_{(t_{i},t_{i+1}]}(t),
    \end{align*}
    where $0=t_0<t_1<\cdots<t_p=T$, and $\xi^{(i)}$ are bounded $\F$-measurable random variables. For each $i$, consider the $(\cY_t)_{t\in[t_i,t_{i+1}]}$-martingale $(\hat{\xi}^{(i)}_t)_{t\in[t_i,t_{i+1}]}$ defined by $\hat{\xi}^{(i)}_t=\EE[\xi^{(i)}|\cY_t]$, which can be chosen to have c\`ad-l\`ag paths by Theorem \ref{cadmart} because $\YY$ satisfies the usual conditions. By further setting $\hat{\xi}^{(i)}_t=0$ for $t\in[0,t_i)$ and $\hat{\xi}^{(i)}_t=\hat{\xi}^{(i)}_{t_{i+1}}$ for $t\in(t_{i+1},T]$, we obtain a $\YY$-adapted c\`ad-l\`ag process $(\hat{\xi}^{(i)}_t)_{t\in[0,T]}$, which is $\YY$-progressively measurable. Define the process $(\hat{\xi}_{t})_{t\in[0,T]}$ by
    \begin{align*}
        \hat{\xi}_t(\omega)=\sum_{i=1}^{p-1}\hat{\xi}^{(i)}_t(\omega)\mathbf{1}_{(t_{i},t_{i+1}]}(t).
    \end{align*}
    Then $\hat{\xi}$ is a $\YY$-progressively measurable process as a sum of products of $\YY$-progressively measurable processes. By construction, $\hat{\xi}_t=\EE\left[\left.\xi_t\,\right|\,\cY_t\right]$ a.s. for all $t\in[0,T]$.

    Next, let $\xi$ be an arbitrary jointly measurable process s.t. $\EE\sqb{\int_0^T|\xi_t|^2\dt}<\infty$. By considering $\xi$ as an adapted process with respect to its natural filtration $\YY^\xi$, it has a modification $\Bar{\xi}\in\HH^2_{\YY^\xi}$. By Proposition 5.3 in \cite{lg}, there is a sequence $(\xi^{[n]})_n$ of elementary processes converging to $\Bar{\xi}$ in $\mathbb{H}^2$. By the argument above, there is a $\YY$-progressively measurable process $\hat{\xi}^{[n]}$ such that $\hat{\xi}^{[n]}_t=\EE\left[\left.\xi^{[n]}_t\,\right|\,\cY_t\right]$ a.s. for all $t\in[0,T]$ for each $n$. Since
    \begin{align*}
        \EE\sqb{\int_0^T\abs{\hat{\xi}^{[n]}_t-\hat{\xi}^{[m]}_t}^2\dt}&=\int_0^T\EE\sqb{\abs{\EE[\xi^{[n]}_t|\cY_t]-\EE[\xi^{[m]}_t|\cY_t]}^2}\dt
        \\&\leq\int_0^T\EE\sqb{\EE\sqb{\left.\abs{\xi^{[n]}_t-\xi^{[m]}_t}^2\right|\cY_t}}\dt
        \\&=\int_0^T\EE\sqb{\abs{\xi^{[n]}_t-\xi^{[m]}_t}^2}\dt
        \\&=\EE\sqb{\int_0^T\abs{\xi^{[n]}_t-\xi^{[m]}_t}^2\dt},
    \end{align*}
    $(\hat{\xi}^{[n]})_n$ is Cauchy in $\mathbb{H}^2$, thus converging to a process $\hat{\xi}\in\HH^2_\YY$, i.e.,
    \begin{align*}
        \lim_{n\to0}\EE\sqb{\int_0^T\abs{\hat{\xi}^{[n]}_t-\hat{\xi}_t}^2\dt}=\lim_{n\to0}\int_0^T\EE\sqb{\abs{\hat{\xi}^{[n]}_t-\hat{\xi}_t}^2}\dt=0.
    \end{align*}
    This implies there is a subsequence $(n_k)_k$ such that
    \begin{align*}
        \lim_{k\to\infty}\EE\sqb{\abs{\hat{\xi}^{[n_k]}_t-\hat{\xi}_t}^2}=0\quad\text{for a.e. }t\in[0,T].
    \end{align*}
    Recall that $(\xi^{[n]})_n$ converges to $\Bar{\xi}$ in $\mathbb{H}^2$, so there is a further subsequence $(n_{k_j})_j$ such that
    \begin{align*}
        \lim_{j\to\infty}\EE\sqb{\abs{\xi^{[n_{k_j}]}_t-\Bar{\xi}_t}^2}=0\quad\text{for a.e. }t\in[0,T].
    \end{align*}
    It follows that for a.e. $t\in[0,T]$,
    \begin{align*}
        \left(\EE\left[\left|\hat{\xi}_t-\EE\left[\left.\Bar{\xi}_t\,\right|\,\cY_t\right]\right|^2\right]\right)^{1/2}
        &
        \leq\left(\EE\sqb{\abs{\hat{\xi}_t-\hat{\xi}^{[n_{k_j}]}_t}^2}\right)^{1/2}+
        \left(\EE\sqb{\abs{\hat{\xi}^{[n_{k_j}]}_t-\EE[\Bar{\xi}_t|\cY_t]}^2}\right)^{1/2}
        \\
        &=
        \left(\EE\sqb{\abs{\hat{\xi}_t-\hat{\xi}^{[n_{k_j}]}_t}^2}\right)^{1/2}
        +\left(\EE\sqb{\abs{\EE\sqb{\left.\xi^{[n_{k_j}]}_t-\Bar{\xi}_t\right|\cY_t}}^2}\right)^{1/2}
        \\&
        \leq \left(\EE\sqb{\abs{\hat{\xi}_t-\hat{\xi}^{[n_{k_j}]}_t}^2}\right)^{1/2}
        +\left(\EE\sqb{\abs{\xi^{[n_{k_j}]}_t-\Bar{\xi}_t}^2}\right)^{1/2}
        \\&\xrightarrow{j\to\infty}0,
    \end{align*}
    meaning that $\hat{\xi}_s=\EE[\Bar{\xi}_s|\cY_s]=\EE[\xi_s|\cY_s]$ a.s.
\end{proof}

The next two propositions provide conditions which separately maximize the informed trader's and the broker's performance criteria with the other agent's strategy held fixed.
\begin{prp}\label{jimax}
    Let $\nu\in\HH^2_\GG$. Then $\eta$ maximizes $J^I(\nu,\cdot)$ over $\HH^2_{\FF}$ if and only if
    \begin{equation}
        \eta_t=\tfrac{1}{2}\,b^{-1}\EE\sqb{\left.-2\psi Q^{I,\eta}_T+\int_t^T\paren{\alpha_s+ h \nu_s- p  Y_s-2r^IQ^{I,\eta}_s}\ds\right|\F_t}\quad\text{a.s. for a.e. }t.\label{eqn:eta-fbsde}
    \end{equation}
\end{prp}
\begin{proof}
    Fix $\nu\in\HH^2_{\GG}$ and $\eta\in\HH^2_{\FF}$. By Proposition \ref{cvi}, Lemma \ref{optcd}, and Proposition \ref{gti}, $\eta$ maximizes $J^I(\nu,\cdot)$ over $\HH^2_{\FF}$ if and only if
    \begin{align}
        \langle\cD J^I(\nu,\cdot)|_{\eta},\kappa\rangle=0,\quad\forall\kappa\in\HH^2_{\FF},\label{duali}
    \end{align}
    where
    \begin{align}
        \langle\cD J^I(\nu,\cdot)|_{\eta},\kappa\rangle
        &=\EE\sqb{\int_0^T\kappa_t^\top\cb{-2b\eta_t-2\psi Q^{I}_t+\int_t^T\paren{\alpha_s+ h \nu_s- p  Y_s-2\psi\eta_s-2r^IQ^{I}_s}\ds}\dt}\nonumber
        \\&=\int_0^T\EE\sqb{\kappa_t^\top\cb{-2b\eta_t-2\psi Q^{I}_T+\int_t^T\paren{\alpha_s+ h \nu_s- p  Y_s-2r^IQ^{I}_s}\ds}}\dt\nonumber
        \\&=\int_0^T\EE\sqb{\EE\sqb{\left.\kappa_t^\top\cb{-2b\eta_t-2\psi Q^{I}_T+\int_t^T\paren{\alpha_s+ h \nu_s- p  Y_s-2r^IQ^{I}_s}\ds}\right|\F_t}}\dt\nonumber\nonumber
        \\&=\int_0^T\EE\sqb{\kappa_t^\top\cb{-2b\eta_t+\EE\sqb{\left.-2\psi Q^{I}_T+\int_t^T\paren{\alpha_s+ h \nu_s- p  Y_s-2r^IQ^{I}_s}\ds\right|\F_t}}}\dt.\label{dualipj}
    \end{align}
    
    Assume (\ref{duali}) holds. In particular, by Proposition \ref{progproj} we can choose $\kappa\in\HH^2_{\FF}$ to be the process $\kappa_t=-2b\eta_t+\kappa^\prime_t$, where $\kappa^\prime$ is the process in $\HH^2_{\FF}$ such that
    \begin{equation*}
        \kappa^\prime_t=\EE\sqb{\left.-2\psi Q^{I}_T+\int_t^T\paren{\alpha_s+ h \nu_s- p  Y_s-2r^IQ^{I}_s}\ds\right|\F_t}\quad\text{a.s. for a.e. }t 
    \end{equation*}
    Hence for a.e. $t$,
    \begin{align*}
        \EE\sqb{\kappa_t^\top\cb{-2b\eta_t+\EE\sqb{\left.-2\psi Q^{I}_T+\int_t^T\paren{\alpha_s+ h \nu_s- p  Y_s-2r^IQ^{I}_s}\ds\right|\F_t}}}
        \\=\EE\sqb{\abs{-2b\eta_t+\EE\sqb{\left.-2\psi Q^{I}_T+\int_t^T\paren{\alpha_s+ h \nu_s- p  Y_s-2r^IQ^{I}_s}\ds\right|\F_t}}^2}.
    \end{align*}
    Then \eqref{duali} and \eqref{dualipj} imply
    \begin{align*}
        0=\int_0^T\EE\sqb{\abs{-2b\eta_t+\EE\sqb{\left.-2\psi Q^{I}_T+\int_t^T\paren{\alpha_s+ h \nu_s- p  Y_s-2r^IQ^{I}_s}\ds\right|\F_t}}^2}\dt,
    \end{align*}
    so
    \begin{align*}
        -2b\eta_t+\EE\sqb{\left.-2\psi Q^{I}_T+\int_t^T\paren{\alpha_s+ h \nu_s- p  Y_s-2r^IQ^{I}_s}\ds\right|\F_t}=0\quad\text{a.s. for a.e. }t.
    \end{align*}
    \eqref{eqn:eta-fbsde} then follows.
    
    Conversely, \eqref{eqn:eta-fbsde} implies \eqref{duali} due to \eqref{dualipj}.
\end{proof}

\begin{prp}\label{jbmax}
    Let $\eta\in\HH^2_\FF$. Then $\nu$ maximizes $J^B(\cdot,\eta)$ over $\HH^2_{\GG}$ if and only if
    \begin{align}
        \nu_t=\tfrac{1}{2}a^{-1}\EE\sqb{\left.-(2\phi- h )Q^{B}_T+\int_t^T\paren{\alpha_s+ h \eta_s- pY_s-\paren{ h \ep^{(t-s) p } p +2r^B}Q^{B}_s}\ds\right|\G_t}\quad\text{a.s. for a.e. }t.\label{eqn:nu-fbsde}
    \end{align}
\end{prp}
\begin{proof}
    Fix $\nu\in\HH^2_{\GG}$ and $\eta\in\HH^2_{\FF}$. By Proposition \ref{cvb}, Lemma \ref{optcd}, and Proposition \ref{gtb}, $\nu$ maximizes $J^B(\cdot,\eta)$ over $\HH^2_{\GG}$ if and only if
    \begin{align}
        \langle\cD J^B(\cdot,\eta)|_{\nu},\zeta\rangle=0,\quad\forall\zeta\in\HH^2_{\GG},\label{dualb}
    \end{align}
    where
    \begin{align}
        \langle\cD J^B(\cdot,\eta)|_\nu,\zeta\rangle
        &=\EE\left[\left.\int_0^T\zeta_t^\top\right\{-2a\nu_t-(2\phi- h )Q^{B}_t\right.\nonumber
        \\&\quad\quad\quad+\left.\left.\int_t^T\paren{\alpha_s+2\phi\eta_s-(2\phi- h )\nu_s- p  Y_s-\paren{ h \ep^{(t-s) p } p +2r^B}Q^{B}_s}\ds\right\}\dt\right]\nonumber
        \\&=\left.\left.\left.\int_0^T\EE\right[\zeta_t^\top\right\{-2a\nu_t+\EE\right[-(2\phi- h )Q^{B}_T\nonumber
        \\&\quad\quad\quad+\left.\left.\left.\left.\int_t^T\paren{\alpha_s+ h \eta_s- p  Y_s-\paren{ h \ep^{(t-s) p } p +2r^B}Q^{B}_s}\ds\right|\G_t\right]\right\}\right]\dt.\label{dualbpj}
    \end{align}
    
    Assume (\ref{dualb}) holds. In particular, by Proposition \ref{progproj} we can choose $\zeta\in\HH^2_{\GG}$ to be the process $\zeta_t=-2a\nu_t+\zeta^\prime_t$, where $\zeta^\prime$ is the process in $\HH^2_{\GG}$ such that
    \begin{equation*}
        \zeta^\prime_t=\EE\sqb{\left.-(2\phi- h )Q^{B}_T+\int_t^T\paren{\alpha_s+ h \eta_s- p  Y_s-\paren{ h \ep^{(t-s) p } p +2r^B}Q^{B}_s}\ds\right|\G_t}\quad\text{a.s. for a.e. }t 
    \end{equation*}
    Hence for a.e. $t$,
    \begin{align*}
        \EE\sqb{\zeta_t^\top\cb{-2a\nu_t+\EE\sqb{\left.-(2\phi- h )Q^{B}_T+\int_t^T\paren{\alpha_s+ h \eta_s- p  Y_s-\paren{ h \ep^{(t-s) p } p +2r^B}Q^{B}_s}\ds\right|\G_t}}}
        \\=\EE\sqb{\abs{-2a\nu_t+\EE\sqb{\left.-(2\phi- h )Q^{B}_T+\int_t^T\paren{\alpha_s+ h \eta_s- p  Y_s-\paren{ h \ep^{(t-s) p } p +2r^B}Q^{B}_s}\ds\right|\G_t}}^2}.
    \end{align*}
    Then \eqref{dualb}) and \eqref{dualbpj} imply
    \begin{align*}
        -2a\nu_t+\EE\sqb{\left.-(2\phi- h )Q^{B}_T+\int_t^T\paren{\alpha_s+ h \eta_s- p  Y_s-\paren{ h \ep^{(t-s) p } p +2r^B}Q^{B}_s}\ds\right|\G_t}=0\quad\text{a.s. for a.e. }t.
    \end{align*}
    \eqref{eqn:nu-fbsde} then follows.
    
    Conversely, \eqref{eqn:nu-fbsde} implies \eqref{dualb} due to \eqref{dualbpj}.
\end{proof}

\section{Nash Equilibria}\label{ne}

In this section, we characterize the existence and uniqueness of  Nash equilibria in a small time regime and then further characterize the equilibria as a system of FBSDEs.
\begin{thm}\label{neexist}
    If the time horizon $T$ satisfies $C(T)<1$, where
    \begin{align}
        C(T)\coloneqq T^2\max\left\{
        \begin{array}{c}
            \Vert a^{-1}\Vert^2\paren{2\Vert\phi\Vert+\Vert h\Vert}^2+\tfrac{1}{2}\paren{\Vert b^{-1}\Vert^2+1}\Vert h\Vert^2+1,
            \\ \Vert a^{-1}\Vert^2(2\Vert\phi\Vert+2\Vert h\Vert)^2+4\Vert b^{-1}\Vert^2\Vert\psi\Vert^2+\tfrac{3}{2},
            \\ \tfrac{1}{2}\Vert p\Vert^2\paren{\Vert a^{-1}\Vert^2+\Vert b^{-1}\Vert^2},
            \\ \tfrac{1}{2}\Vert a^{-1}\Vert^2\paren{\Vert h\Vert\Vert p\Vert+2\Vert r^B\Vert}^2,
            \\ 2\Vert b^{-1}\Vert^2\Vert r^I\Vert^2
        \end{array}
        \right\},\label{ct}
    \end{align}
    then there exists a unique Nash equilibrium between the broker and the informed trader.
\end{thm}
\begin{proof}
Suppose $C(T)$ defined in the statement of the theorem satisfies $C(T)<1$. Equip $\TT\coloneqq\HH^2_\GG\times\HH^2_\FF\times\HH^2_\GG\times\HH^2_\FF\times\HH^2_\FF$ with the norm
\begin{equation*}
    \Vert(\nu,\eta,Y,Q^B,Q^I)\Vert_\TT\coloneqq\paren{\Vert\nu\Vert_{\HH^2}^2+\Vert\eta\Vert_{\HH^2}^2+\Vert Y\Vert_{\HH^2}^2+\Vert Q^B\Vert_{\HH^2}^2+\Vert Q^I\Vert_{\HH^2}^2}^{1/2},
\end{equation*}
Then $(\TT,\Vert\cdot\Vert_\TT)$ is a Banach space. For $\Upsilon=(\nu,\eta,Y,Q^B,Q^I)\in\TT$, define the quintuple $\Phi$ of processes in the following way: $\Phi_{\cdot,1}(\Upsilon)$ is the element in $\HH^2_{\GG}$ such that
\begin{align*}
    &\Phi_{t,1}(\Upsilon)
    \\&=\tfrac{1}{2}a^{-1}\EE\sqb{\left.-(2\phi-h)\paren{q^B+\int_0^T(\nu_s-\eta_s)\ds}+\int_t^T\paren{\alpha_s+h\eta_s- p  Y_s-\paren{h\ep^{(t-s) p }p+2r^B}Q^{B}_s}\ds\right|\G_t}
\end{align*}
a.s. for a.e. $t\in[0,T]$; $\Phi_{\cdot,2}(\Upsilon)$ is the element in $\HH^2_{\FF}$ such that
\begin{align*}
    \Phi_{t,2}(\Upsilon)=\tfrac{1}{2}b^{-1}\EE\sqb{\left.-2\psi\paren{q^I+\int_0^T\eta_s\ds}+\int_t^T\paren{\alpha_s+h\nu_s-p Y_s-2r^IQ^{I}_s}\ds\right|\F_t}
\end{align*}
a.s. for a.e. $t\in[0,T]$;
\begin{align*}
    &\Phi_{t,3}(\Upsilon)=\ep^{-tp}y+\int_0^t\ep^{(s-t)p}h\nu_s\ds,
    \\
    &\Phi_{t,4}(\Upsilon)=q^B+\int_0^t(\nu_s-\eta_s)\ds, \qquad \text{and}
    \\
    &\Phi_{t,5}(\Upsilon)=q^I+\int_0^t\eta_s\ds.
\end{align*}
By Lemma \ref{yqh2}, $\Phi$ is a map from $\TT$ to itself. By Proposition \ref{jimax} and Proposition \ref{jbmax}, there exists a unique Nash equilibrium between the broker and the informed trader if $\Phi$ admits a unique fixed point, i.e., there exists a unique $\Upsilon^*=(\nu^*,Y^*,\eta^*,Q^{B*},Q^{I*})\in\TT$ such that $\Phi(\Upsilon^*)=\Upsilon^*$. We next show that such a point does exist uniquely.

Take $\Upsilon=(\nu,Y,\eta,Q^B,Q^I)\in\TT$ and $\hat\Upsilon=(\hat\nu,\hat Y,\hat\eta,\hat Q^B,\hat Q^I)\in\TT$. Then for a.e. $t\in[0,T]$,
\begin{align*}
    &\EE\sqb{\abs{\Phi_{t,1}(\Upsilon)-\Phi_{t,1}(\hat\Upsilon)}^2}
    \\
    &=\EE\left[\left|\tfrac{1}{2}a^{-1}\EE\left[-(2\phi-h)\int_0^T\paren{(\nu_s-\hat{\nu}_s)-(\eta_s-\hat{\eta}_s)}\ds\right.\right.\right.
    \\
    &\hspace*{10em}+\left.\left.\left.\left.\int_t^T\paren{h(\eta_s-\hat \eta_s)- p(Y_s-\hat Y_s)-\paren{h\ep^{(t-s) p }p+2r^B}(Q^{B}_s-\hat Q^{B}_s)}\ds\right|\G_t\right]\right|^2\right]\\
    &\leq\tfrac{1}{4}\Vert a^{-1}\Vert^2\EE\left[\left(\Vert 2\phi-h\Vert\int_0^T|\nu_s-\hat{\nu}_s|\ds+(\Vert 2\phi-h\Vert+\Vert h\Vert)\int_0^T|\eta_s-\hat{\eta}_s|\ds\right.\right.
    \\
    &\hspace*{10em} +\left.\left.\int_t^T\paren{\Vert p\Vert|Y_s-\hat Y_s|+\paren{\Vert h\Vert\Vert p\Vert+2\Vert r^B\Vert}|Q^{B}_s-\hat Q^{B}_s|}\ds\right)^2\right]
    \\
    &\leq\Vert a^{-1}\Vert^2\EE\left[\left(\Vert 2\phi-h\Vert^2\paren{\int_0^T|\nu_s-\hat{\nu}_s|\ds}^2+(\Vert 2\phi-h\Vert+\Vert h\Vert)^2\paren{\int_0^T|\eta_s-\hat{\eta}_s|\ds}^2\right.\right.
    \\
    &\hspace*{10em} +\left.\left.\paren{\int_t^T\Vert p\Vert|Y_s-\hat Y_s|\ds}^2+\paren{\int_t^T\paren{\Vert h\Vert\Vert p\Vert+2\Vert r^B\Vert}|Q^{B}_s-\hat Q^{B}_s|\ds}^2\right)\right]
    \\
    &\leq\Vert a^{-1}\Vert^2\EE\left[\Vert 2\phi-h\Vert^2T\int_0^T|\nu_s-\hat{\nu}_s|^2\ds+(\Vert 2\phi-h\Vert+\Vert h\Vert)^2T\int_0^T|\eta_s-\hat{\eta}_s|^2\ds\right.
    \\
    &\hspace*{10em} +\left.(T-t)\int_0^T\Vert p\Vert^2|Y_s-\hat Y_s|^2\ds+(T-t)\int_0^T\paren{\Vert h\Vert\Vert p\Vert+2\Vert r^B\Vert}^2|Q^{B}_s-\hat Q^{B}_s|^2\ds\right]
    \\
    &\leq\Vert a^{-1}\Vert^2\left(\Vert 2\phi-h\Vert^2T\Vert\nu-\hat{\nu}\Vert_{\HH^2}^2+(\Vert 2\phi-h\Vert+\Vert h\Vert)^2T\Vert\eta-\hat{\eta}\Vert_{\HH^2}^2\phantom{(T-t)\paren{\Vert h\Vert\Vert p\Vert+2\Vert r^B\Vert}^2}\right.
    \\
    &\hspace*{10em} +\left.(T-t)\Vert p\Vert^2\Vert Y-\hat Y\Vert_{\HH^2}^2+(T-t)\paren{\Vert h\Vert\Vert p\Vert+2\Vert r^B\Vert}^2\Vert Q^{B}-\hat Q^{B}\Vert_{\HH^2}^2\right),
\end{align*}
so
\begin{align*}
    \norm{\Phi_{\cdot,1}(\Upsilon)-\Phi_{\cdot,1}(\hat\Upsilon)}_{\HH^2}^2&=\int_0^T\EE\sqb{\abs{\Phi_{t,1}(\Upsilon)-\Phi_{t,1}(\hat\Upsilon)}^2}\dt
    \\
    &\leq\Vert a^{-1}\Vert^2\left(\Vert 2\phi-h\Vert^2T^2\Vert\nu-\hat{\nu}\Vert_{\HH^2}^2+(\Vert 2\phi-h\Vert+\Vert h\Vert)^2T^2\Vert\eta-\hat{\eta}\Vert_{\HH^2}^2\phantom{\paren{\tfrac{T^2}{2}\Vert r^B\Vert}^2}\right.
    \\
    &\hspace*{6em} +\left.\tfrac{T^2}{2}\Vert p\Vert^2\Vert Y-\hat Y\Vert_{\HH^2}^2+\tfrac{T^2}{2}\paren{\Vert h\Vert\Vert p\Vert+2\Vert r^B\Vert}^2\Vert Q^{B}-\hat Q^{B}\Vert_{\HH^2}^2\right).
\end{align*}
Similarly, we have
\begin{align*}
    &\norm{\Phi_{\cdot,2}(\Upsilon)-\Phi_{\cdot,2}(\hat\Upsilon)}_{\HH^2}^2
    \\
    &
    \leq\Vert b^{-1}\Vert^2\paren{4\Vert\psi\Vert^2T^2\Vert \eta-\hat{\eta}\Vert_{\HH^2}^2+\tfrac{T^2}{2}\paren{\Vert h\Vert^2\Vert\nu-\hat{\nu}\Vert_{\HH^2}^2+\Vert p\Vert^2\Vert Y-\hat{Y}\Vert_{\HH^2}^2+4\Vert r^I\Vert^2\Vert Q^I-\hat{Q}^I\Vert_{\HH^2}^2}}.
\end{align*}
We also have
\begin{align*}
    \norm{\Phi_{\cdot,3}(\Upsilon)-\Phi_{\cdot,3}(\hat\Upsilon)}_{\HH^2}^2=\EE\sqb{\int_0^T\abs{\int_0^t\ep^{(s-t)p}h\nu_s\ds}^2\dt}\leq \tfrac{T^2}{2}\Vert h\Vert^2\Vert\nu-\hat{\nu}\Vert_{\HH^2}^2,
\end{align*}
\begin{align*}
    \norm{\Phi_{\cdot,4}(\Upsilon)-\Phi_{\cdot,4}(\hat\Upsilon)}_{\HH^2}^2
    &=\EE\sqb{\int_0^T\abs{\int_0^t(\nu_s-\hat{\nu}_s-\eta_s+\hat{\eta}_s)\ds}^2\dt}
    \leq T^2(\Vert\nu-\hat{\nu}\Vert_{\HH^2}^2+\Vert\eta-\hat{\eta}\Vert_{\HH^2}^2),
\end{align*}
and
\begin{align*}
    \norm{\Phi_{\cdot,5}(\Upsilon)-\Phi_{\cdot,5}(\hat\Upsilon)}_{\HH^2}^2=\EE\sqb{\int_0^T\abs{\int_0^t(\eta_s-\hat{\eta}_s)\ds}^2\dt}\leq \tfrac{T^2}{2}\Vert\eta-\hat{\eta}\Vert_{\HH^2}^2.
\end{align*}
Combing all the estimates gives
\begin{align*}
    \norm{\Phi(\Upsilon)-\Phi(\hat\Upsilon)}_{\TT}^2&\leq T^2\paren{\Vert a^{-1}\Vert^2\Vert 2\phi-h\Vert^2+\tfrac{1}{2}\paren{\Vert b^{-1}\Vert^2+1}\Vert h\Vert^2+1}\Vert \nu-\hat{\nu}\Vert_{\HH^2}^2
    \\
    &\qquad\ +T^2\paren{\Vert a^{-1}\Vert^2(\Vert 2\phi-h\Vert+\Vert h\Vert)^2+4\Vert b^{-1}\Vert^2\Vert\psi\Vert^2+\tfrac{3}{2}}\Vert\eta-\hat{\eta}\Vert_{\HH^2}^2
    \\
    &\qquad\ +\tfrac{1}{2}T^2\Vert p\Vert^2\paren{\Vert a^{-1}\Vert^2+\Vert b^{-1}\Vert^2}\Vert Y-\hat{Y}\Vert_{\HH^2}^2
    \\
    &\qquad\ +\tfrac{1}{2}T^2\Vert a^{-1}\Vert^2\paren{\Vert h\Vert\Vert p\Vert+2\Vert r^B\Vert}^2\Vert Q^B-\hat{Q}^B\Vert_{\HH^2}^2
    \\
    &\qquad\ +2T^2\Vert b^{-1}\Vert^2\Vert r^I\Vert^2\Vert Q^I-\hat{Q}^I\Vert_{\HH^2}^2
    \\
    & \quad \leq C(T)\;\Vert\Upsilon-\hat\Upsilon\Vert_{\TT}^2,
\end{align*}
hence, $\Phi:\TT\to\TT$ is a contraction as $C(T)<1$. By Banach fixed point theorem, $\Phi$ admits a unique fixed point in $\TT$, as desired.
\end{proof}

\begin{thm}\label{nefbsde2}
    Let $\hat{\alpha}\in\HH^2_\GG$ be the projection of $\alpha$ onto $\GG$. $(\nu,\eta)\in\HH^2_\GG\times\HH^2_\FF$ is a Nash equilibrium between the broker and the trader if and only if there exist processes $Z,\hat{\eta},\hat{Q}^B,\hat{Q}^I,Y,M^B,N^B, O^B,P^B,R^B$ in $\HH^2_\GG$, where $M^B,N^B, O^B,P^B,R^B$ are $\GG$-martingales, and processes $Q^B,Q^I,M^I$ in $\HH^2_\FF$, where $M^I$ is an $\FF$-martingale, such that these processes together solve the system of FBSDEs
        \begin{align}
        \left\{
        \begin{aligned}
            \diff \nu_t&=-\tfrac{1}{2}a^{-1}\sqb{\paren{\hat{\alpha}_t-(2r^B+hp)\hat{Q}^B_t-p Y_t+h(\hat{\eta}_t+p Z_t)}\!\!\dt+\diff M^B_t},&
            \nu_T&=-\tfrac{1}{2}a^{-1}(2\phi-h)\hat{Q}^B_T,
            \\[0.25em]
            \diff \eta_t&=-\tfrac{1}{2}b^{-1}\sqb{(\alpha_t-2r^IQ^{I}_t+h\nu_t-p Y_t)\dt+\diff M^I_t},
            & \eta_T&=-b^{-1}\psi Q^{I}_T,
            \\[0.25em]
            \diff Z_t&=p(Z_t-\hat{Q}^B_t)\dt+\diff N^B_t,
            & Z_T&=0,
            \\[0.25em]
            \diff \hat{\eta}_t&=-\tfrac{1}{2}b^{-1}\sqb{\paren{\hat{\alpha}_t-2r^I\hat{Q}^I_t+h\nu_t-pY_t}\dt+\diff O^B_t},
            &
            \hat{\eta}_T&=-b^{-1}\psi\hat{Q}^I_T,
            \\[0.25em]
            \diff \hat{Q}^B_t&=(\nu_t-\hat{\eta}_t)\dt+\diff P^B_t,
            & 
            \hat{Q}^B_T&=\EE[Q^B_T|\G_T],
            \\[0.25em]
            \diff \hat{Q}^I_t&=\hat{\eta}_t\dt+\diff R^B_t,
            & 
            \hat{Q}^I_T&=\EE[Q^I_T|\G_T],
        \end{aligned}
        \right.\label{fbsde2-1}
        \end{align}
        and
        \begin{align}
        \left\{
        \begin{aligned}
            \dQ^B_t&=(\nu_t-\eta_t)\dt, & Q^B_0&=q^B,
            \\
            \dQ^I_t&=\eta_t\dt, & Q^I_0&=q^I,
            \\
            \dY_t&=(h\nu_t-pY_t)\dt, &Y_0&=y.
        \end{aligned}
        \right.\label{fbsde2-2}
        \end{align}
\end{thm}
\begin{proof}
We proof the statement first by showing that from a Nash equilibria we can construct a set of processes that satisfies the FBSDE system, and second show that a set of processes satisfying the FBSDE is a Nash equilibria.

\noindent \textbf{Nash $\Rightarrow$ FBSDEs}:
Suppose $(\nu,\eta)\in\HH^2_\GG\times\HH^2_\FF$ is a Nash equilibrium between the broker and the trader. Define processes $Q^B$, $Q^I$ in $\HH^2_\FF$, and $Y$ in $\HH^2_\GG$ (see Lemma \ref{yqh2}) by
\begin{align*}
    Q^B_t=q^B+\int_0^t(\nu_s-\eta_s)\ds,\quad Q^I_t=q^I+\int_0^t\eta_s\ds,\quad Y_t=\ep^{-tp}y+\int_0^t\ep^{(s-t)p}\,h\,\nu_s\;\ds.
\end{align*}
By Proposition \ref{jbmax}, for a.e. $t\in[0,T]$, almost surely we have
\begin{align}
    2\,a\,\nu_t&=\EE\sqb{\left.-(2\phi- h )Q^{B}_T+\int_t^T\paren{\alpha_s+ h \eta_s- p  Y_s-\paren{ h \ep^{(t-s) p } p +2r^B}Q^{B}_s}\ds\right|\G_t}
    \nonumber
    \\
    &=\EE\Big[-(2\phi- h )Q^{B}_T\Big|\G_t\Big]
    +\int_t^T\EE\sqb{\left.\alpha_s+ h \eta_s- p  Y_s-\paren{ h \ep^{(t-s) p } p +2r^B}Q^{B}_s\right|\G_t}\ds
    \nonumber
    \\
    &=\EE\Big[-(2\phi- h )Q^{B}_T\Big|\G_t\Big]
    +\int_t^T\EE\sqb{\left.\EE\sqb{\left.\alpha_s+ h \eta_s- p  Y_s-\paren{ h \ep^{(t-s) p } p +2r^B}Q^{B}_s\right|\G_s}\right|\G_t}\ds
    \nonumber
    \\
    &=\EE\sqb{\left.-(2\phi- h )Q^{B}_T+\int_t^T\paren{\hat{\alpha}_s+ h \EE[\eta_s|\G_s]- p  Y_s-\paren{ h \ep^{(t-s) p } p +2r^B}\EE[Q^{B}_s|\G_s]}\ds\right|\G_t}
    \nonumber
    \\
    &=\EE\sqb{\left.-(2\phi- h )Q^{B}_T+\int_0^T\paren{\hat{\alpha}_s+ h \EE[\eta_s|\G_s]- p  Y_s-2r^B\EE[Q^{B}_s|\G_s]}\ds\right|\G_t}
    \nonumber
    \\
    &\quad\ -h\ep^{tp}\EE\sqb{\left.\int_0^T\ep^{-sp}p\EE[Q^{B}_s|\G_s]\ds\right|\G_t}-\int_0^t\paren{\hat{\alpha}_s+ h \EE[\eta_s|\G_s]- p  Y_s-2r^B\EE[Q^{B}_s|\G_s]}\ds
    \nonumber
    \\
    &\quad\ +h\ep^{tp}\int_0^t\ep^{-sp}p\EE[Q^{B}_s|\G_s]\ds. 
    \label{eqn:nu-rewrite}
\end{align}
Let $\hat{\eta}\in\HH^2_\GG$ and $\hat{Q}^B\in\HH^2_\GG$ be the projections of $\eta$ and $Q^B$ onto $\GG$, respectively.  Define processes $\Tilde{N}$ and $Z$ in $\HH^2_\GG$ by
\begin{align*}
    \Tilde{N}_t&=\EE\sqb{\left.\int_0^T\ep^{-sp}p\hat{Q}^B_s\ds\right|\G_t}\quad\text{and}\quad Z_t=\ep^{tp}\paren{\Tilde{N}_t-\int_0^t\ep^{-sp}p \hat{Q}^B_s\ds}.
\end{align*}
Then $M^B$ and $\Tilde{N}$ are $\GG$-martingales, which are c\`ad-l\`ag. By generalized It\^o's formula and the fact that a c\`ad-l\`ag path has at most countably many jumps (so $u-$ can be replaced by $u$ when integrating with respect to $\du$), we have
\begin{align*}
    Z^{(i)}_t
    &=Z^{(i)}_T-\sum_{j=1}^K\int_t^T\paren{\sum_{k=1}^Kp_{i,k}\left[\ep^{up}\right]^{k,j}\paren{\Tilde{N}^{(j)}_{u-}-\int_0^u\sqb{\ep^{-sp}p\hat{Q}^B_s}^{j}\ds}-\sqb{\ep^{up}}^{i,j}\sqb{\ep^{-up}p \hat{Q}^B_u}^{j}}\du
    \\
    &\quad\ -\sum_{j=1}^K\int_t^T\sqb{\ep^{up}}^{i,j}\diff\Tilde{N}^{(j)}_u-\sum_{t<u\leq T}\paren{\sum_{j=1}^K\sqb{\ep^{up}}^{i,j}(\Tilde{N}^{(j)}_u-\Tilde{N}^{(j)}_{u-})-\sum_{j=1}^K\sqb{\ep^{up}}^{i,j}(\Tilde{N}^{(j)}_u-\Tilde{N}^{(j)}_{u-})}
    \\
    &=Z^{(i)}_T-\int_t^T\paren{\sum_{k=1}^Kp_{i,k}\sum_{j=1}^K\left[\ep^{up}\right]^{k,j}\paren{\Tilde{N}^{(j)}_{u-}-\int_0^u\sqb{\ep^{-sp}p\hat{Q}^B_s}^{j}\ds}-\sqb{p\hat{Q}^B_u}^{i}}\du-\sum_{j=1}^K\int_t^T\sqb{\ep^{up}}^{i,j}\diff\Tilde{N}^{(j)}_u
    \\
    &=Z^{(i)}_T-\sqb{\int_t^Tp(Z_{u}-\hat{Q}^B_u)\du}^{i}-\sum_{j=1}^K\int_t^T\sqb{\ep^{up}}^{i,j}\diff\Tilde{N}^{(j)}_u,
\end{align*}
where
\begin{equation*}
    Z_T=\ep^{Tp}\paren{\EE\sqb{\left.\int_0^T\ep^{-sp}p \hat{Q}^B_s\ds\right|\G_T}-\int_0^T\ep^{-sp}p \hat{Q}^B_s\ds}=0.
\end{equation*}
Define the $\RR^K$-valued $\GG$-local martingale $N^B$ by $N^{B,(i)}_t=\sum_{j=1}^K\int_0^t\sqb{\ep^{up}}^{i,j}\diff\Tilde{N}^{(j)}_u$. Then
\begin{align*}
    Z_t=-\int_t^Tp(Z_{u}-\hat{Q}^B_u)\du-(N^B_T-N^B_t).
\end{align*}
As each $\Tilde{N}^{(j)}$ is a $\GG$-martingale with
\begin{align*}
    \EE[|\Tilde{N}^{(j)}_t|^2]\leq\EE[|\Tilde{N}_t|^2]=\EE\sqb{\abs{\EE\sqb{\left.\int_0^T\ep^{-sp}p\hat{Q}^B_s\ds\right|\G_t}}^2}\leq\EE\sqb{\abs{\int_0^T\ep^{-sp}p\hat{Q}^B_s\ds}^2}\leq T\Vert p\Vert^2\Vert \hat{Q}^B\Vert_{\HH}^2
\end{align*}
for all $t\in[0,T]$, we have $\EE[\langle\Tilde{N}^{(j)}\rangle_t]=\EE[|\Tilde{N}^{(j)}_t|^2]$ for all $t\in[0,T]$. Let $E=\sup_{u\in[0,T],i,j}\abs{\sqb{\ep^{up}}^{i,j}}^2$. Then
\begin{align*}
   \EE\sqb{\int_0^T\abs{\sqb{\ep^{up}}^{i,j}}^2 \diff\langle\Tilde{N}^{(j)}\rangle_u}
    \leq E\EE[\langle\Tilde{N}^{(j)}\rangle_T]=E\EE[|\Tilde{N}^{(j)}_T|^2]\leq ET\Vert p\Vert^2\Vert \hat{Q}^B\Vert_{\HH}^2<\infty,
\end{align*}
so each $\paren{\int_0^t\sqb{\ep^{up}}^{i,j}\diff\Tilde{N}^{(j)}_u}_{t\in[0,T]}$ is a $\GG$-martingale with
\begin{align*}
    \EE\sqb{\int_0^T\abs{\int_0^t\sqb{\ep^{up}}^{i,j}\diff\Tilde{N}^{(j)}_u}^2\dt}=\int_0^T\EE\sqb{\abs{\int_0^t\sqb{\ep^{up}}^{i,j}\diff\Tilde{N}^{(j)}_u}^2}\dt&=\int_0^T\EE\sqb{\int_0^t\abs{\sqb{\ep^{up}}^{i,j}}^2 \diff\langle\Tilde{N}^{(j)}\rangle_u}\dt
    \\
    &\leq ET^2\Vert p\Vert^2\Vert \hat{Q}^B\Vert_{\HH}^2.
\end{align*}
Consequently, $N^B$ is a $\GG$-martingale that lies in $\HH^2_\GG$. We further define the $\GG$-martingale $M^B\in\HH^2_\GG$ by
\begin{align*}
    M^B_t=-\EE\sqb{\left.-(2\phi-h)Q^{B}_T+\int_0^T\paren{\hat{\alpha}_s+h\hat{\eta}_s-p Y_s-2r^B\hat{Q}^B_s}\ds\right|\G_t}+h N^B_t.
\end{align*}
For $\hat{Q}^B$, we have
\begin{align*}
    \hat{Q}^B_t
    &=\EE\sqb{\left.Q^B_T-\int_t^T(\nu_s-\eta_s)\ds\,\right|\,\G_t}
    \\
    &=\EE\sqb{\left.Q^B_T-\int_t^T\left(\nu_s-\EE\big[\eta_s\big|\G_s\big]\right)\ds\,\right|\,\G_t}
    =\EE\sqb{\left.Q^B_T-\int_0^T(\nu_s-\hat{\eta}_s)\ds\,\right|\,\G_t}+\int_0^t(\nu_s-\hat{\eta}_s)\ds.
\end{align*}
Define $\GG$-martingale $P^B\in\HH^2_\GG$ by $P^B_t=\EE\sqb{\left.Q^B_T-\int_0^T(\nu_s-\hat{\eta}_s)\ds\right|\G_t}$. Then $\hat{Q}^B$ satisfies
\begin{align*}
    \hat{Q}^B_t=P^B_t+\int_0^t(\nu_s-\hat{\eta}_s)\ds=P^B_T+\int_0^T(\nu_s-\hat{\eta}_s)\ds-\int_t^T(\nu_s-\hat{\eta}_s)\ds-\int_t^T\diff P^B_s,
\end{align*}
where
\begin{align*}
    P^B_T+\int_0^T(\nu_s-\hat{\eta}_s)\ds=\EE\sqb{\left.Q^B_T-\int_0^T(\nu_s-\hat{\eta}_s)\ds\,\right|\,\G_T}+\int_0^T(\nu_s-\hat{\eta}_s)\ds=\EE\big[Q^B_T\,|\,\G_T\big],
\end{align*}
With these processes identified,  it follows from \eqref{eqn:nu-rewrite} that
\begin{align*}
    2\,a\,\nu_t&=-\int_0^t\paren{\hat{\alpha}_s+ h \hat{\eta}_s- p  Y_s-2r^B\hat{Q}^B_s}\ds-hZ_t-M^B_t+hN^B_t
    \\
    &=-M^B_T+hN^B_T-\int_0^T\paren{\hat{\alpha}_s+ h \hat{\eta}_s- p  Y_s-2r^B\hat{Q}^B_s}\ds
    \\
    &\quad\ +\int_t^T\paren{\hat{\alpha}_s+ h \hat{\eta}_s- p  Y_s-(2r^B+hp)\hat{Q}^B_s+hpZ_s}\ds+(M^B_T-M^B_t)
    \\
    &=-(2\phi-h)\;\EE\left[Q^B_T\,|\,\G_T\right]
    \\
    &\quad\ +\int_t^T\paren{\hat{\alpha}_s+ h \hat{\eta}_s- p  Y_s-(2r^B+hp)\hat{Q}^B_s+hpZ_s}\ds+(M^B_T-M^B_t).
\end{align*}

Furthermore, by Proposition \ref{jimax}, for a.e. $t\in[0,T]$, almost surely we have
\begin{align}
    2\,b\,\eta_t
    &=\EE\sqb{\left.-2\psi Q^{I}_T+\int_t^T\paren{\alpha_s+ h \nu_s- p  Y_s-2r^IQ^{I}_s}\ds\,\right|\,\F_t}
    \nonumber
    \\
    &=\EE\sqb{\left.-2\psi Q^{I}_T+\int_0^T\paren{\alpha_s+ h \nu_s- p  Y_s-2r^IQ^{I}_s}\ds\,\right|\,\F_t}-\int_0^t\paren{\alpha_s+ h \nu_s- p  Y_s-2r^IQ^{I}_s}\ds,
    \label{eqn:eta-rewrite}
\end{align}
therefore,
\begin{align}
    2\,b\,\hat{\eta}_t
    &=\EE\sqb{\left.-2\psi Q^{I}_T+\int_t^T\paren{\alpha_s+h\nu_s-pY_s-2r^IQ^{I}_s}\ds\,\right|\,\G_t}
    \nonumber
    \\
    &=\EE\sqb{\left.-2\psi Q^{I}_T+\int_t^T\paren{\hat{\alpha}_s+h\nu_s-pY_s-2r^I\EE[Q^{I}_s|\G_s]}\ds\,\right|\,\G_t}
    \nonumber
    \\
    &=\EE\sqb{\left.-2\psi Q^{I}_T+\int_0^T\paren{\hat{\alpha}_s+h\nu_s-pY_s-2r^I\EE[Q^{I}_s|\G_s]}\ds\,\right|\,\G_t}
    \nonumber
    \\
    &\qquad -\int_0^t\paren{\hat{\alpha}_s+h\nu_s-pY_s-2r^I\EE\big[Q^{I}_s|\G_s\big]}\ds.
    \label{eqn:hat-eta-rewrite}
\end{align}
Let $\hat{Q}^I\in\HH^2_\GG$ be the projection of $\hat{Q}^I$ onto $\GG$. Define process $\FF$-martingale $M^I\in\HH^2_\FF$, and $\GG$-martingale $O^B\in\HH^2_\GG$ by
\begin{align*}
    M^I_t&=-\EE\sqb{\left.-2\psi Q^{I}_T+\int_0^T\paren{\alpha_s+h\nu_s-pY_s-2r^IQ^{I}_s}\ds\,\right|\,\F_t}, \qquad \text{and}
    \\
    O^B_t&=-\EE\sqb{\left.-2\psi Q^{I}_T+\int_0^T\paren{\hat{\alpha}_s+h\nu_s-pY_s-2r^I\hat{Q}^I_s}\ds\,\right|\,\G_t}.
\end{align*}
Then, from \eqref{eqn:eta-rewrite}, we have that
\begin{align*}
    2\,b\,\eta_t&=-M^I_t-\int_0^t\paren{\alpha_s+h\nu_s-pY_s-2r^IQ^{I}_s}\ds\\
    &=-M^I_T-\int_0^T\paren{\alpha_s+h\nu_s-pY_s-2r^IQ^{I}_s}\ds+\int_t^T\paren{\alpha_s+h\nu_s-pY_s-2r^IQ^{I}_s}\ds+(M^I_T-M^I_t)
\end{align*}
and, from \eqref{eqn:hat-eta-rewrite}, we have that
\begin{align*}
    2\,b\,\hat{\eta}_t&=-O^B_t-\int_0^t\paren{\hat{\alpha}_s+h\nu_s-pY_s-2r^I\hat{Q}^{I}_s}\ds\\
    &=-O^B_T-\int_0^T\paren{\hat{\alpha}_s+h\nu_s-pY_s-2r^I\hat{Q}^{I}_s}\ds+\int_t^T\paren{\hat{\alpha}_s+h\nu_s-pY_s-2r^I\hat{Q}^{I}_s}\ds+(O^B_T-O^B_t),
\end{align*}
where
\begin{align*}
    &-M^I_T-\int_0^T\paren{\alpha_s+h\nu_s-pY_s-2r^IQ^{I}_s}\ds\\
    &=\EE\sqb{\left.-2\psi Q^{I}_T+\int_0^T\paren{\alpha_s+h\nu_s-pY_s-2r^IQ^{I}_s}\ds\right|\F_T}-\int_0^T\paren{\alpha_s+h\nu_s-pY_s-2r^IQ^{I}_s}\ds=-2\psi Q^{I}_T,
\end{align*}
and
\begin{align*}
    &-O^B_T-\int_0^T\paren{\hat{\alpha}_s+h\nu_s-pY_s-2r^I\hat{Q}^{I}_s}\ds\\
    &=\EE\sqb{\left.-2\psi Q^{I}_T+\int_0^T\paren{\hat{\alpha}_s+h\nu_s-pY_s-2r^I\hat{Q}^I_s}\ds\right|\G_T}-\int_0^T\paren{\hat{\alpha}_s+h\nu_s-pY_s-2r^I\hat{Q}^{I}_s}\ds\\
    &=-2\,\psi\,\EE\big[Q^{I}_T\big|\G_T\big].
\end{align*}
Finally,
\begin{align*}
    \hat{Q}^I_t&=\EE\sqb{\left.Q^I_T-\int_t^T\eta_s\ds\right|\G_t}=\EE\sqb{\left.Q^I_T-\int_t^T\EE[\eta_s|\G_s]\ds\right|\G_t}=\EE\sqb{\left.Q^I_T-\int_0^T\hat{\eta}_s\ds\right|\G_t}+\int_0^t\hat{\eta}_s\ds,
\end{align*}
where $R^B\in\HH^2_\GG$ defined by $R^B_t=\EE\sqb{\left.Q^I_T-\int_0^T\hat{\eta}_s\ds\right|\G_t}$ is a $\GG$-martingale. Hence
\begin{align*}
    \hat{Q}^I_t=R^B_t+\int_0^t\hat{\eta}_s\ds=R^B_T+\int_0^T\hat{\eta}_s\ds-\int_t^T\hat{\eta}_s\ds-\int_t^T\diff R^B_s,
\end{align*}
where
\begin{align*}
    R^B_T+\int_0^T\hat{\eta}_s\ds=\EE\sqb{\left.Q^I_T-\int_0^T\hat{\eta}_s\ds\right|\G_T}+\int_0^T\hat{\eta}_s\ds=\EE[Q^I_T|\G_T].
\end{align*}
Putting these results together, we find that the FBSDEs \eqref{fbsde2-1}-\eqref{fbsde2-2} are solved by these processes. Hence, we have shown that starting with a  $(\nu,\eta)\in\HH^2_\GG\times\HH^2_\FF$ being a Nash equilibrium between the broker and the trader, the corresponding processes defined above satisfy the required system of FBSDEs.

\noindent \textbf{FBSDEs $\Rightarrow$ Nash}:
Conversely, assume that the collection of processes in the statement of the theorem satisfy the FBSDEs \eqref{fbsde2-1}-\eqref{fbsde2-2}. By integrating  $\eta$ and using the terminal conditions, we may write
    \begin{align*}
        2\,b\,\eta_t=-2\,\psi\, Q^I_T+\int_t^T\paren{\alpha_s-2r^IQ^I_s+h\nu_s-pY_s}\ds+M^I_T-M^I_t.
    \end{align*}
    As $\eta$ is $\FF$-adapted and $M^I$ is an $\FF$-martingale,
    \begin{align}
    2\,b\,\eta_t = 2\,b\,\EE\big[\eta_t\big|\F_t\big] = \EE\sqb{\left.-2\,\psi\, Q^I_T+\int_t^T\paren{\alpha_s-2r^IQ^I_s+h\nu_s-pY_s}\ds\,\right|\,\F_t},\label{2betat}
    \end{align}
    hence, by Proposition \ref{jimax}, $\eta$ maximizes $J^I(\nu,\cdot)$ over $\HH^2_\FF$. Furthermore, this implies
    \begin{align*}
        \EE\big[\eta_t\,|\,\G_t\big]&=-\tfrac{1}{2}b^{-1}\EE\sqb{\left.2\psi Q^I_T-\int_t^T\paren{\alpha_s-2r^IQ^I_s+h\nu_s-pY_s}\ds\right|\G_t}
        \\
        &=-\tfrac{1}{2}b^{-1}\sqb{2\psi\EE[Q^I_T|\G_t]-\int_t^T\paren{\EE[\alpha_s|\G_t]-2r^I\EE\Big[\EE\big[Q^I_s\big|\G_s\big]\,\Big|\,\G_t\Big]+\EE\big[h\nu_s-pY_s\,\big|\,\G_t\big]}\ds}
        \\
        &=-\tfrac{1}{2}b^{-1}\EE\sqb{\left.2\psi Q^I_T-\int_t^T\paren{\alpha_s-2r^I\EE[Q^I_s|\G_s]+h\nu_s-pY_s}\ds\right|\G_t}.
    \end{align*}
    The dynamics of $\hat{\eta}$, $\hat{Q}^I$, $Q^I$ implies
    \begin{align*}
        \EE[\hat{\eta}_t|\G_t]
        &=-\tfrac{1}{2}b^{-1}\EE\sqb{\left.2\psi\EE[Q^I_T|\G_T]-\int_t^T\paren{\hat{\alpha}_s-2r^I\hat{Q}^I_s+h\nu_s-pY_s}\ds+O^B_T-O^B_t\right|\G_t}\\
        &=-\tfrac{1}{2}b^{-1}\EE\sqb{\left.2\psi Q^I_T-\int_t^T\paren{\alpha_s-2r^I\EE[\hat{Q}^I_s|\G_s]+h\nu_s-pY_s}\ds\right|\G_t},
    \end{align*}
    \begin{align*}
        \EE[\hat{Q}^I_t|\G_t]=\EE\left[\left.\EE[Q^I_T|\G_T]-\int_t^T\hat{\eta}_s\ds-R^B_T+R^B_t\right|\G_t\right]=\EE\sqb{\left.Q^I_T-\int_t^T\EE[\hat{\eta}_s|\G_s]\ds\right|\G_t},
    \end{align*}
    and
    \begin{align*}
        \EE[Q^I_t|\G_t]=\EE\left[\left.Q^I_T-\int_t^T\eta_s\ds\right|\G_t\right]=\EE\sqb{\left.Q^I_T-\int_t^T\EE[\eta_s|\G_s]\ds\right|\G_t}.
    \end{align*}
    Hence
    \begin{align*}
        \EE[\eta_t-\hat{\eta}_t|\G_t]=-b^{-1}r^I\EE\sqb{\left.\int_t^T\EE[Q^I_s-\hat{Q}^I_s|\G_s]\ds\right|\G_t}
    \end{align*}
    and
    \begin{align*}
        \EE[Q^I_t-\hat{Q}^I_t|\G_t]=-\EE\sqb{\left.\int_t^T\EE[\eta_s-\hat{\eta}_s|\G_s]\ds\right|\G_t}.
    \end{align*}
    
    For $(\gamma,\xi)\in\HH^2_\GG\times\HH^2_\GG$, consider the pair of processes $\Psi(\gamma,\xi)$ defined by
    \begin{align*}
        \Psi(\gamma,\xi)_t=\paren{-b^{-1}r^I\EE\sqb{\left.\int_t^T\xi_s\ds\right|\G_t}\;,\;-\EE\sqb{\left.\int_t^T\gamma_s\ds\right|\G_t}}.
    \end{align*}
    $\Psi$ defines a linear map from $\HH^2_\GG\times\HH^2_\GG$ to itself because for every $C>0$,
    \begin{align*}
        \EE\sqb{\int_0^T\ep^{Ct}\abs{-b^{-1}r^I\EE\sqb{\left.\int_t^T\xi_s\ds\right|\G_t}}^2\dt}
        &\leq\Vert b^{-1}r^I\Vert^2\;\int_0^T\ep^{Ct}\EE\sqb{\EE\sqb{\left.\int_t^T|\xi_s|\ds\right|\G_t}^2}\dt
        \\
        &\leq\Vert b^{-1}r^I\Vert^2\;\int_0^T\ep^{Ct}\EE\sqb{\paren{\int_t^T|\xi_s|\ds}^2}\dt
        \\
        &\leq T\,\Vert b^{-1}r^I\Vert^2\,\EE\sqb{\int_0^T\ep^{Ct}\int_t^T|\xi_s|^2\ds\dt}
        \\
        &= T\,\Vert b^{-1}r^I\Vert^2\,\EE\sqb{\int_0^T|\xi_s|^2\int_0^s\ep^{Ct}\dt\ds}
        \\
        &= T\,\Vert b^{-1}r^I\Vert^2\,\EE\sqb{\int_0^T|\xi_s|^2\frac{\ep^{Cs}-1}{C}\ds}
        \\
        &\leq \frac{T\,\Vert b^{-1}r^I\Vert^2}{C}\,\EE\sqb{\int_0^T\ep^{Cs}|\xi_s|^2\ds}
    \end{align*}
    and, similarly,
    \begin{align*}
        \EE\sqb{\int_0^T\ep^{Ct}\abs{-\EE\sqb{\left.\int_t^T\gamma_s\ds\right|\G_t}}^2\dt}\leq
        \frac{T}{C}\,\EE\sqb{\int_0^T\ep^{Cs}|\gamma_s|^2\ds}.
    \end{align*}
    By choosing $C> T\max\{\Vert b^{-1}r^I\Vert^2,1\}$, these two estimates also imply $\Psi$ is a contraction on $\HH^2_\GG\times\HH^2_\GG$ when equipped with the norm
    \begin{align*}
        \Vert(\gamma,\xi)\Vert_{\HH^2_\GG\times\HH^2_\GG}^2\coloneqq\Vert\gamma\Vert_{\HH^2,C}^2+\Vert\xi\Vert_{\HH^2,C}^2,
    \end{align*}
    where $\Vert\cdot\Vert_{\HH,C}$ is the norm on $\HH^2_\GG$ defined by
    \begin{align*}
        \Vert\zeta\Vert_{\HH^2,C}^2\coloneqq\EE\sqb{\int_0^T\ep^{Ct}|\zeta_t|^2\dt},
    \end{align*}
    which is equivalent to $\Vert\cdot\Vert_{\HH}$. Therefore, $0$ is the unique fixed point of $\Psi$, which in particular implies
    \begin{equation}\EE\big[\eta_t\,|\,\G_t\big]=\EE\big[\hat{\eta}_t\,|\,\G_t\big].\label{etahateta}
    \end{equation}
    
    Next, the dynamics of $Q^B$ and $\hat{Q}^B$ implies
    \begin{align}
        \EE\big[\hat{Q}^B_t\big|\G_t\big]
        &=\EE\sqb{\left.\EE\big[Q^B_T\big|\G_T\big]-\int_t^T(\nu_s-\hat{\eta}_s)\ds-P^B_T+P^B_t\right|\G_t}\nonumber
        \\
        &=\EE\sqb{\left.\EE\big[Q^B_T\big|\G_t\big]-\int_t^T(\nu_s-\EE\big[\hat{\eta}_s\big|\G_s\big])\ds\right|\G_t}\nonumber
        \\
        &=\EE\sqb{\left.\EE\big[Q^B_T\big|\G_t\big]-\int_t^T\left(\nu_s-\EE\big[\eta_s\big|\G_s\big]\right)\ds\right|\G_t}\nonumber
        \\
        &=\EE\sqb{\left.Q^B_T-\int_t^T(\nu_s-\eta_s)\ds\right|\G_t}\nonumber
        \\
        &=\EE\big[Q^B_t\big|\G_t\big]\label{qbhatqb}.
    \end{align}
    The dynamics of $\nu$ and $Z$ implies
    \begin{align}
        &2\,a\,\nu_t\nonumber
        \\
        &=-(2\phi-h)\EE\big[Q^{B}_T\big|\G_T\big]+\int_t^T\left(\hat{\alpha}_s-(2r^B+hp)\hat{Q}^B_s-p Y_s+h\hat{\eta}_s+hp Z_s\right)\ds+(M^B_T-M^B_t)\nonumber
        \\
        &=-(2\phi-h)\,\EE\big[Q^{B}_T\big|\G_T\big]+\int_t^T\left(\hat{\alpha}_s-2r^B\hat{Q}^B_s-p Y_s+h\hat{\eta}_s\right)\ds-h\, Z_t-h(N^B_T-N^B_t)+(M^B_T-M^B_t).\label{eq:2anu}
    \end{align}
    Moreover, the dynamics of $Z$ implies it is a c\`ad-l\`ag $\GG$-semimartingale, therefore, the generalized It\^o's formula implies the $i$th component of $\ep^{-tp}Z_t$ satisfies
    \begin{align*}
        \sqb{\ep^{-tp}Z_t}^{i}&=\sum_{j}\sqb{\ep^{-tp}}^{i,j}Z^{(j)}_t
        \\
        &=\sum_{j}\int_t^T\paren{\sum_kp_{i,k}\sqb{\ep^{-up}}^{k,j}Z^{(j)}_u-\sqb{\ep^{-up}}^{i,j}\sum_kp_{j,k}\paren{Z^{(k)}_u-\hat{Q}^{B,(k)}_u}}\du
        \\
        &\quad\ -\sum_j\int_t^T\sqb{\ep^{-up}}^{i,j}\diff N^{B,(j)}_u
        \\
        &=\int_t^T\paren{\sqb{p\ep^{-up}Z_u}^{i}-\sqb{\ep^{-up}p\paren{Z_u-\hat{Q}^B_u}}^{i}}\du-\sum_j\int_t^T\sqb{\ep^{-up}}^{i,j}\diff N^{B,(j)}_u
        \\
        &=\int_t^T\sqb{\ep^{-up}p\hat{Q}^B_u}^{i}\du-\sum_j\int_t^T\sqb{\ep^{-up}}^{i,j}\diff N^{B,(j)}_u.
    \end{align*}
    Since $N^B$ is a $\GG$-martingale that lies in $\HH^2_\GG$, similarly as above, the process $\paren{\int_0^t\sqb{\ep^{-sp}}^{i,j}\diff N^{B,(j)}_s}_{t\in[0,T]}$ is a $\GG$-martingale, thus
    \begin{equation}
        \sqb{\ep^{-tp}Z_t}^{i}=\EE\sqb{\left.\sqb{\ep^{-tp}Z_t}^{i}\right|\G_t}=\EE\sqb{\left.\int_t^T\sqb{\ep^{-up}p\hat{Q}^B_u}^{i}\du\right|\G_t}.\label{eq:epz}
    \end{equation}
    As $\nu$ is $\GG$-adapted and $M^B$ is a $\GG$-martingale, \eqref{etahateta}, \eqref{qbhatqb}, \eqref{eq:2anu}, and \eqref{eq:epz} gives
    \begin{align*}
        2\,a\,\nu_t&=2\,a\,\EE[\nu_t|\G_t]
        \\
        &=-(2\phi-h)\EE[Q^{B}_T|\G_t]+\EE\sqb{\left.\int_t^T\left(\hat{\alpha}_s-p Y_s+h\hat{\eta}_s-2r^B\hat{Q}^B_s\right)\ds\right|\G_t}-h\ep^{tp}\ep^{-tp}Z_t
        \\
        &=-(2\phi-h)\EE[Q^{B}_T|\G_t]+\EE\sqb{\left.\int_t^T\left(\hat{\alpha}_s-p Y_s+h\hat{\eta}_s-2r^B\hat{Q}^B_s\right)\ds\right|\G_t}-h\ep^{tp}\EE\sqb{\left.\int_t^T\ep^{-sp}p\hat{Q}^B_u\ds\right|\G_t}
        \\
        &=-(2\phi-h)\EE[Q^{B}_T|\G_t]+\EE\sqb{\left.\int_t^T\left(\hat{\alpha}_s-p Y_s+h\hat{\eta}_s-2r^B\hat{Q}^B_s\right)\ds\right|\G_t}-\EE\sqb{\left.\int_t^Th\ep^{(t-s)p}p\hat{Q}^B_s\ds\right|\G_t}
        \\
        &=-(2\phi-h)\EE[Q^{B}_T|\G_t]+\EE\sqb{\left.\int_t^T\left(\alpha_s-p Y_s+h\hat{\eta}_s-\paren{h\ep^{(t-s)p}p+2r^B}\hat{Q}^B_s\right)\ds\right|\G_t}
        \\
        &=-(2\phi-h)\EE[Q^{B}_T|\G_t]+\EE\sqb{\left.\int_t^T\left(\alpha_s-p Y_s+h\EE[\hat{\eta}_s|\G_s]-\paren{h\ep^{(t-s)p}p+2r^B}\EE[\hat{Q}^B_s|\G_s]\right)\ds\right|\G_t}
        \\
        &=-(2\phi-h)\EE[Q^{B}_T|\G_t]+\EE\sqb{\left.\int_t^T\left(\alpha_s-p Y_s+h\EE[\eta_s|\G_s]-\paren{h\ep^{(t-s)p}p+2r^B}\EE[Q^B_s|\G_s]\right)\ds\right|\G_t}
        \\
        &=-(2\phi-h)\EE[Q^{B}_T|\G_t]+\EE\sqb{\left.\int_t^T\left(\alpha_s-p Y_s+h\eta_s-\paren{h\ep^{(t-s)p}p+2r^B}Q^B_s\right)\ds\right|\G_t}.
    \end{align*}
    Hence, by Proposition \ref{jbmax},  $\nu$ maximizes $J^B(\cdot,\eta)$ over $\HH^2_\GG$. By definition, $(\nu,\eta)$ is a Nash equilibrium between the broker and the informed trader.
\end{proof}

\section{Perturbation Analysis}

The FBSDE system \eqref{fbsde2-1}, or equivalently \eqref{eqn:eta-fbsde} and \eqref{eqn:nu-fbsde} jointly, that characterizes the Nash equilibria is challenging to solve in generality. In this section, we provide a perturbation analysis that allows us to approximate the solution to arbitrary accuracy in terms of the strength of the transient impact parameter $h$. The approximation we develop is represented as an asymptotic series, where each term in the series admits a closed form solution in terms of the previous coefficients in the expansion. Specifically we obtain a representation of the form
\[
\eta_t=\sum_{m=0}^Mh^m\eta^m_t+R^{\eta,M,h}_t,
\quad\text{ and } \quad \nu_t=\sum_{m=0}^Mh^m\nu^m_t+R^{\nu,M,h}_t\,,
\]
where we provide explicit representations of $(\eta_t^m,\nu_t^m)_m$.
Moreover, we prove that the error terms, $R^{\eta,M,h}_t$ and $R^{\nu,M,h}_t$, from truncating the series at a given order $M$, are both $o(M)$. While the results in this section could be generalized to arbitrary dimensions, in the remainder of this section, for simplicity, we assume $D=K=1$ (i.e., the broker and informed trader are dealing in a single asset) to avoid commutativity issues.

To this end, the following lemma provides the FBSDEs that each term in the series satisfies, and proves that a unique solution exists, as well as provides the explicit form of the solution.
\begin{lem}\label{coeff}
    Assume $C(T)$ defined in \eqref{ct} satisfies $C(T)<1$. Then there is a unique sequence $\paren{\paren{\eta^m,Q^{I,m}}}_{m=0}^\infty$ in $\HH^2_\FF\times\HH^2_\FF$ and a unique sequence $\paren{\paren{\nu^m,Q^{B,m}}}_{m=0}^\infty$ in $\HH^2_\GG\times\HH^2_\FF$ that solve the following equations recursively:
    \begin{equation}
        \left\{
        \begin{aligned}
            \eta^0_t&=\tfrac{1}{2b}\EE\left[\left.-2\,\psi\, Q^{I,0}_T+\int_t^T\paren{\alpha_s-p\,\ep^{-sp}\,y-2\,r^I\,Q^{I,0}_s}\ds\right|\F_t\right]
            \\
            Q^{I,0}_t&=q^I+\int_0^t\eta^0_s\ds
        \end{aligned}
        \right.,\label{eqn:eta-0}
    \end{equation}
    \begin{equation}
        \left\{
        \begin{aligned}
            \nu^0_t&=\tfrac{1}{2a}\EE\left[\left.-2\,\phi\, Q^{B,0}_T+\int_t^T\left(\alpha_s-p\,\ep^{-sp}\,y-2\,r^B\,Q^{B,0}_s\right)\ds\right|\G_t\right]
            \\
            Q^{B,0}_t&=q^B+q^I-Q^{I,0}_t+\int_0^t\nu^0_s\ds
        \end{aligned}
        \right.,\label{eqn:nu-0}
    \end{equation}
    and for $m\geq 1$,
    \begin{equation}
        \left\{
        \begin{aligned}
            \eta^m_t&=\tfrac{1}{2b}\EE\left[\left.-2\,\psi\, Q^{I,m}_T+\int_t^T\paren{\nu^{m-1}_s-p\int_0^s\ep^{(u-s)p}\,\nu^{m-1}_u\du-2\,r^I\,Q^{I,m}_s}\ds\right|\F_t\right]
            \\
            Q^{I,m}_t&=\int_0^t\eta^m_s\ds
        \end{aligned}
        \right.,\label{eqn:eta-m}
    \end{equation}
    \begin{equation}
        \left\{
        \begin{aligned}
            \nu^m_t&=\tfrac{1}{2a}\EE\left[-2\phi Q^{B,m}_T+Q^{B,m-1}_T\phantom{\int_t^T}\right.
            \\
            &\hspace*{4em}\left.\left.
            +\int_t^T\left(\eta^{m-1}_s-p\int_0^s\ep^{(u-s)p}\,\nu^{m-1}_u\du-\ep^{(t-s)p}\,p\,Q^{B,m-1}_s-2\,r^B\,Q^{B,m}_s\right)\ds\right|\G_t\right]
            \\
            Q^{B,m}_t&=-Q^{I,m}_t+\int_0^t\nu^m_s\ds
        \end{aligned}
        \right..\label{eqn:nu-m}
    \end{equation}
    Moreover, they admit the explicit recursive formulae:
    \begin{align*}
        Q^{I,0}_t&=q^I-\tfrac{\psi}{b}\int_0^t\ep^{\int_u^t\mff_s^I\ds+\int_u^T\mff_s^I\ds}q^I\du+\tfrac1{2b}\int_0^t\EE\sqb{\left.\int_u^T\ep^{\int_u^t\mff_s^I\ds+\int_u^r\mff_s^I\ds}\paren{\alpha_r-p\,\ep^{-rp}\,y-2\,r^I\,q^I}\diff r\right|\F_u}\du,
        \\
        \eta^0_t&=-\tfrac{\psi}{b}\ep^{\int_t^T\mff_s^I\ds}q^I+\tfrac{1}{2b}\EE\sqb{\left.\int_t^T\ep^{\int_t^u\mff_s^I\ds}\paren{\alpha_u-p\,\ep^{-up}\,y-2\,r^I\,q^I}\du\right|\F_t}+\mff_t^I\paren{Q^{I,0}_t-q^I},
        \\
        Q^{B,0}_t&=q^B+q^I-Q^{I,0}_t-\tfrac{\phi}{a}\int_0^t\ep^{\int_u^t\mff_s^B\ds+\int_u^T\mff_s^B\ds}\EE\sqb{\left.q^B+q^I-Q^{I,0}_T\right|\G_u}\du
        \\
        &\quad\ +\tfrac1{2a}{\int_0^t\EE\sqb{\left.\int_u^T\ep^{\int_u^t\mff_s^B\ds+\int_u^r\mff_s^B\ds}\paren{\alpha_r-p\,\ep^{-rp}\,y-2\,r^B\,\paren{q^B+q^I-Q^{I,0}_r}}\diff r\right|\G_u}\du}
        \\
        \nu^0_t&=-\tfrac{\phi}{a}\ep^{\int_t^T\mff_s^B\ds}\;\EE\sqb{\left.q^B+q^I-Q^{I,0}_T\right|\G_t}
        \\
        &\quad\ 
        +\tfrac{1}{2a}\EE\sqb{\left.\int_t^T\ep^{\int_t^u\mff_s^B\ds}\paren{\alpha_u-p\,\ep^{-up}\,y-2\,r^B\,\paren{q^B+q^I-Q^{I,0}_u}}\du\right|\G_t}+\mff^B_t\paren{Q^{B,0}_t-q^B-q^I+Q^{I,0}_t},
    \end{align*}
    and for $m\geq 1$,
    \begin{align*}
        Q^{I,m}_t&=\tfrac1{2b}\int_0^t\EE\sqb{\left.\int_u^T\ep^{\int_u^t\mff_s^I\ds+\int_u^r\mff_s^I\ds}\paren{\nu^{m-1}_r-p\int_0^r\ep^{(s-r)p}\,\nu^{m-1}_s\ds}\diff r\right|\F_u}\du,
        \\
        \eta^m_t&=\tfrac{1}{2b}\EE\sqb{\left.\int_t^T\ep^{\int_t^u\mff_s^I\ds}\paren{\nu^{m-1}_u-p\int_0^u\ep^{(s-u)p}\nu^{m-1}_s\ds}\du\right|\F_t}+\mff^I_tQ^{I,m}_t,
        \\
        Q^{B,m}_t&=-Q^{I,m}_t\tfrac1{2a}\int_0^t\ep^{\int_u^t\mff_s^B\ds+\int_u^T\mff_s^B\ds}\EE\sqb{\left.Q^{B,m-1}_T+2\phi Q^{I,m}_T\right|\G_u}\du
        \\
        &\quad\ 
        +\tfrac1{2a}\int_0^t\EE\left[\int_u^T\ep^{\int_u^t\mff_s^B\ds+\int_u^r\mff_s^B\ds}\right.
        \\
        &\hspace*{5em}\left.\left.\phantom{\sqb{\int_u^T}}\paren{\eta^{m-1}_r-p\paren{\int_0^r\ep^{(s-r)p}\,\nu^{m-1}_s\ds+Q^{B,m-1}_r}+2r^BQ^{I,m}_r}\diff r\right|\G_u\right]\du
        \\
        &\quad\ +
        \tfrac{p^2}{2a}\int_0^t\EE\sqb{\left.\int_u^T\int_r^T\ep^{\int_u^t\mff_s^B\ds+\int_u^r\mff_s^B\ds+(r-z)p}Q^{B,m-1}_z\diff z\diff r\right|\G_u}\du,
        \\
        \nu^m_t&=\tfrac{1}{2a}\ep^{\int_t^T\mff_s^B\ds}\EE\sqb{\left.Q^{B,m-1}_T+2\phi Q^{I,m}_T\right|\G_t}
        \\
        &\quad\ 
        +\tfrac{1}{2a}\EE\sqb{\left.\int_t^T\ep^{\int_t^u\mff_s^B\ds}\paren{\eta^{m-1}_u-p\paren{\int_0^u\ep^{(s-u)p}\nu^{m-1}_s\ds+Q^{B,m-1}_u}+2r^BQ^{I,m}_u}\du\right|\G_t}
        \\
        &\quad\
        +\tfrac{p^2}{2a}\EE\sqb{\left.\int_t^T\int_u^T\ep^{\int_t^u\mff_s^B\ds+(u-r)p}Q^{B,m-1}_r\diff r\du\right|\G_t}+\mff^B_t\paren{Q^{B,m}_t+Q^{I,m}_t},
    \end{align*}
    where $\mff_s^I:=f\paren{s,{\scriptscriptstyle\frac{\psi}{b}},{\scriptscriptstyle\frac{r^I}{b}}}$, $\mff_s^B:=f\paren{s,{\scriptscriptstyle\frac{\phi}{a}},{\scriptscriptstyle\frac{r^B}{a}}}$ and
    \begin{align*}
        f(t,c,k)&:=\left\{
        \begin{array}{ll}
            0, & k=0,\;c=0 
            \\[0.5em]
            -\paren{T-t+c^{-1}}^{-1}, & k=0,\;c>0
            \\[1em]
            -\sqrt{k}\;\tfrac{(\sqrt{k}+c)-(\sqrt{k}-c)\ep^{2\sqrt{k}-(T-t)}}{(\sqrt{k}+c)-(\sqrt{k}-c)\ep^{-2\sqrt{k}(T-t)}}, & k>0
        \end{array}
        \right..
    \end{align*}
\end{lem}
\begin{proof}
    Note that every equation is of the form
    \begin{align}
        \left\{
        \begin{aligned}
            \beta_t&=\EE\sqb{\left.-c\varphi_T+\Xi+\int_t^T\paren{A_s-k\varphi_s}\ds\right|\cY_t}-\ep^{tp}\EE\sqb{\left.\int_t^TB_s\ds\right|\cY_t}
            \\\varphi_t&=\gamma_t+\int_0^t\beta_s\ds
        \end{aligned}
        \right.,\label{eq:betavarphi}
    \end{align}
    where $\YY=(\cY_t)_{t\in[0,T]}$ is either $\FF$ or $\GG$, $(c,k)$ is either $\paren{\tfrac{\phi}{a},\tfrac{r^B}{a}}$ or $\paren{\tfrac{\psi}{b},\tfrac{r^I}{b}}$, $\Xi$ is a known random variable, $A,B,\gamma$ are known $\FF$-progressively measurable processes such that $B$ and $\gamma$ are finite-variation processes, $\beta$ is an unknown $\YY$-progressively measurable processes, and $\varphi$ is an unknown $\FF$-progressively measurable processes. 
    
    Consider the map $\Gamma:\HH^2_\YY\times\HH^2_\YY\to\HH^2_\YY\times\HH^2_\YY$ defined by
    \begin{align*}
        \Gamma_{t,1}(\beta,\varphi)&=\EE\sqb{\left.-c\paren{\gamma_T+\int_0^T\beta_s\ds}+\Xi+\int_t^T(A_s-k\varphi_s)\ds\right|\cY_t}-\ep^{tp}\EE\sqb{\left.\int_t^TB_s\ds\right|\cY_t},
        \\
        \Gamma_{t,2}(\beta,\varphi)&=\gamma_t+\int_0^t\beta_s\ds.
    \end{align*}
    Then
    \begin{align*}
        \norm{\Gamma_{\cdot,1}(\beta,\varphi)-\Gamma_{\cdot,1}(\hat{\beta},\hat{\varphi})}_{\HH^2}^2&=\EE\sqb{\int_0^T\abs{\EE\sqb{\left.-c\int_0^T(\beta_s-\hat{\beta}_s)\ds-k\int_t^T(\varphi_s-\hat{\varphi}_s)\ds\right|\cY_t}}^2\dt}
        \\&\leq\int_0^T\EE\sqb{\abs{-c\int_0^T(\beta_s-\hat{\beta}_s)\ds-k\int_t^T(\varphi_s-\hat{\varphi}_s)\ds}^2}\dt
        \\&\leq T\EE\sqb{\paren{c\int_0^T|\beta_s-\hat{\beta}_s|\ds+k\int_0^T|\varphi_s-\hat{\varphi}_s|\ds}^2}
        \\&\leq2T^2\EE\sqb{c^2\int_0^T|\beta_s-\hat{\beta}_s|^2\ds+k^2\int_0^T|\varphi_s-\hat{\varphi}_s|^2\ds}\\
        \\&\leq2T^2(c^2\Vert\beta-\hat{\beta}\Vert_{\HH^2}^2+k^2\Vert\varphi-\hat{\varphi}\Vert_{\HH^2}^2)
    \end{align*}
    and
    \begin{align*}
        \norm{\Gamma_{\cdot,2}(\beta,\varphi)-\Gamma_{\cdot,2}(\hat{\beta},\hat{\varphi})}_{\HH^2}^2=\EE\sqb{\int_0^T\abs{\int_0^t(\beta_s-\hat{\beta}_s)\ds}^2\dt}&\leq\EE\sqb{T\paren{\int_0^T|\beta_s-\hat{\beta}_s|\ds}^2}
        \\&\leq\EE\sqb{T^2\int_0^T|\beta_s-\hat{\beta}_s|^2\ds}=T^2\Vert\beta-\hat{\beta}\Vert_{\HH^2}^2.
    \end{align*}
    Thus
    \begin{align*}
        \norm{\Gamma_{\cdot,1}(\beta,\varphi)-\Gamma_{\cdot,1}(\hat{\beta},\hat{\varphi})}_{\HH^2}^2+\norm{\Gamma_{\cdot,2}(\beta,\varphi)-\Gamma_{\cdot,2}(\hat{\beta},\hat{\varphi})}_{\HH^2}^2\leq T^2\max\{2c^2+1,2k^2\}(\Vert\beta-\hat{\beta}\Vert_{\HH^2}^2+\Vert\varphi-\hat{\varphi}\Vert_{\HH^2}^2).
    \end{align*}
    Since $T^2\max\{2c^2+1,2k^2\}\leq C(T)$, we see that $\Gamma$ is a contraction, which admits a unique fixed point, which is the unique solution to \eqref{eq:betavarphi}.
    
    To obtain an explicit form of the solution, we use the same method as in the proof of Theorem \ref{nefbsde2} to write down the FBSDE
    \begin{subequations}
    \begin{align}
        \diff\beta_t&=-(D_t-k\varphi_t)\dt+\diff M_t,\label{eq:beta}
        &
        \beta_T&=-c\varphi_T+\EE[\Xi|\cY_T],
        \\
        \diff\varphi_t&=\diff\gamma_t+\beta_t\dt,\label{eq:varphi}
        &
        \varphi_0&=\gamma_0,
    \end{align}
    \end{subequations}
    where
    \begin{equation}
        D_t=\EE\sqb{\left.A_t+p\,\ep^{tp}\,\int_t^TB_s\ds-\ep^{tp}\,B_t\right|\cY_t}\label{eq:D}
    \end{equation}
    and $M$ is a $\YY$-martingale. We solve this FBSDE using the ansatz
    \begin{equation}
        \beta_t=\ell_t+f(t)(\varphi_t-\gamma_t),\label{eq:ansatz}
    \end{equation}
    where $\ell$ is a $\YY$-progressively measurable process and $f$ is a deterministic function. Differentiating \eqref{eq:ansatz} and using \eqref{eq:varphi} give
    \begin{align*}
        \diff\beta_t&=\diff\ell_t+f^\prime(t)(\varphi_t-\gamma_t)\dt+f(t)\beta_t\dt
        \\&=\diff\ell_t+f^\prime(t)(\varphi_t-\gamma_t)\dt+f(t)\paren{\ell_t+f(t)(\varphi_t-\gamma_t)}\dt
        \\&=\diff\ell_t+(f(t)\ell_t-f(t)^2\gamma_t-f^\prime(t)\gamma_t)\dt+(f^\prime(t)+f(t)^2)\varphi_t\dt
    \end{align*}
    Due to this and \eqref{eq:beta}, it suffices to solve the ODE
    \begin{equation}
        f^\prime(t)=-f(t)^2+k,\quad f(T)=-c\label{eq:f}
    \end{equation}
    and the BSDE
    \begin{equation}
        \diff\ell_t=-(D_t-k\gamma_t+f(t)\ell_t)\dt+\diff M_t,\quad \ell_T=\EE[-c\gamma_T+\Xi|\cY_T].\label{eq:ell}
    \end{equation}
    The ODE \eqref{eq:f} admits the solution
    \begin{align*}
        f(t)&=\left\{
        \begin{array}{ll}
            0, & k=0,c=0 \\
            \tfrac{1}{t-T-\tfrac{1}{c}}, & k=0,c>0 \\
            \sqrt{k}\tfrac{(\sqrt{k}-c)\ep^{2\sqrt{k}(t-T)}-(\sqrt{k}+c)}{(\sqrt{k}-c)\ep^{2\sqrt{k}(t-T)}+(\sqrt{k}+c)}, & k>0
        \end{array}
        \right..
    \end{align*}
    To solve \eqref{eq:ell}, use the generalized It\^o's formula to write
    \begin{align*}
        &\ep^{\int_0^tf(s)\ds}\ell_t
        \\
        &=\ep^{\int_0^Tf(s)\ds}\EE[-c\gamma_T+\Xi|\cY_T]-\int_t^T\ep^{\int_0^uf(s)\ds}f(u)\ell_u\du-\int_t^T\ep^{\int_0^uf(s)\ds}\diff\ell_u
        \\&
        =\ep^{\int_0^Tf(s)\ds}\EE[-c\gamma_T+\Xi|\cY_T]+\int_t^T\ep^{\int_0^uf(s)\ds}(D_u-k\gamma_u)\du-\int_t^T\ep^{\int_0^uf(s)\ds}\diff M_u,
    \end{align*}
    therefore,
    \begin{equation}
        \ell_t=\ep^{\int_t^Tf(s)\ds}\EE[-c\gamma_T+\Xi|\cY_t]+\EE\sqb{\left.\int_t^T\ep^{\int_t^uf(s)\ds}(D_u-k\gamma_u)\du\right|\cY_t}.\label{eq:ellsoln}
    \end{equation}
    Having obtained $\ell$ and $f$, we substitute \eqref{eq:ansatz} into \eqref{eq:varphi} to get the ODE
    \begin{equation*}
        \diff(\varphi_t-\gamma_t)=(\ell_t+f(t)(\varphi_t-\gamma_t))\dt,\quad\varphi_0=\gamma_0,
    \end{equation*}
    whose solution is
    \begin{equation}
        \varphi_t=\gamma_t+\int_0^t\ep^{\int_u^tf(s)\ds}\ell_u\du.\label{eq:varphisoln}
    \end{equation}
    Finally, combining \eqref{eq:D}, \eqref{eq:ansatz}, \eqref{eq:ellsoln}, and \eqref{eq:varphisoln} gives
    \begin{align*}\beta_t=\ep^{\int_t^Tf(s)\ds}\EE[-c\gamma_T+\Xi|\cY_t]+\EE\sqb{\left.\int_t^T\ep^{\int_t^uf(s)\ds}\paren{A_u+p\ep^{up}\int_u^TB_s\ds-\ep^{up}B_u-k\gamma_u}\du\right|\cY_t} 
    \\
    +f(t)(\varphi_t-\gamma_t).
    \end{align*}
\end{proof}

Next, we prove that we may approximate the exact solution of the full FBSDE system by a polynomial series with coefficients given by Lemma \ref{coeff}.
\begin{thm}
    Let $h_0\geq 0$ and fix $T>0$ such that $C(T,h_0)<1$, where $C(T,h)$ is the quantity $C(T)$ defined in \eqref{ct} for a given $h$. For every nonnegative integer $M$ and every $h\in[0,h_0]$, there is $R^{\eta,M,h}\in\HH^2_\FF$ and $R^{\nu,M,h}\in\HH^2_\GG$ such that
    \begin{align*}
        \eta_t=\sum_{m=0}^Mh^m\eta^m_t+R^{\eta,M,h}_t,\quad\nu_t=\sum_{m=0}^Mh^m\nu^m_t+R^{\nu,M,h}_t
    \end{align*}
    and
    \begin{align*}
        \lim_{h\to 0^+}\norm{\frac{R^{\eta,M,h}}{h^M}}_{\HH^2}=0\quad\text{and} \quad 
        \lim_{h\to 0^+}\norm{\frac{R^{\nu,M,h}}{h^M}}_{\HH^2}=0,
    \end{align*}
    where $\eta^m$ and $\nu^m$ are processes defined in Lemma \ref{coeff}.
\end{thm}
\begin{proof}
It suffices to prove the result for $M\geq 2$. To this end, let 
\begin{align*}
    R^{\eta,M,h}_t:=\eta_t-\sum_{m=0}^Mh^m\eta^m_t
    \quad \text{and} \quad
    R^{\nu,M,h}_t:=\nu_t-\sum_{m=0}^Mh^m\nu^m_t.
\end{align*}
Plugging these into \eqref{eqn:eta-fbsde} and \eqref{eqn:nu-fbsde} gives
{
\footnotesize
\begin{align*}
    \eta_t&=\frac{1}{2b}\EE\left[-2\psi \paren{q^I+\int_0^T\paren{\sum_{m=0}^Mh^m\eta^m_s+R^{\eta,M,h}_s}\ds}\right.
    \\&\hspace*{4em}+\int_t^T\left(\alpha_s+h\paren{\sum_{m=0}^Mh^m\nu^m_s+R^{\nu,M,h}_s}\right.
    \\&\hspace*{4em}-p\paren{\ep^{-sp}y+h\int_0^s\ep^{-(s-u)p}\paren{\sum_{m=0}^Mh^m\nu^m_u+R^{\nu,M,h}_u}\du}
    \\&\hspace*{4em}\left.\left.\phantom{\paren{\int_0^T}}\left.-2r^I\paren{q^I+\int_0^s\paren{\sum_{m=0}^Mh^m\eta^m_u+R^{\eta,M,h}_u}\du}\right)\ds\right|\F_t\right]
    \\&=\frac{1}{2b}\EE\left[\left.-2\psi\paren{q^I+\int_0^T\eta^0_s\ds}+\int_t^T\paren{\alpha_s-p\ep^{-sp}y-2r^I\paren{q^I+\int_0^s\eta^0_u\du}}\ds\right|\F_t\right]
    \\&\quad\ +\sum_{m=1}^M\frac{h^m}{2b}\EE\left[\left.-2\psi\int_0^T\eta^m_s\ds+\int_t^T\paren{\nu^{m-1}_s-p\int_0^s\ep^{-(s-u)p}\nu^{m-1}_u\du-2r^I\int_0^s\eta^m_u\du}\ds\right|\F_t\right]
    \\&\quad\ +\frac{h^{M+1}}{2b}\EE\left[\left.\int_t^T\left(\nu^M_s-p\int_0^s\ep^{-(s-u)p}\nu^M_s\du\right)\ds\right|\F_t\right]
    \\&\quad\ +\frac{1}{2b}\EE\left[\left.-2\psi\int_0^TR^{\eta,M,h}_s\ds+\int_t^T\left(hR^{\nu,M,h}_s-hp\int_0^s\ep^{-(s-u)p}R^{\nu,M,h}_u\du-2r^I\int_0^sR^{\eta,M,h}_u\du\right)\ds\right|\F_t\right]
    \\&=\sum_{m=0}^Mh^m\eta^m_t+\frac{h^{M+1}}{2b}\EE\left[\left.\int_t^T\left(\nu^M_s-p\int_0^s\ep^{-(s-u)p}\nu^M_s\du\right)\ds\right|\F_t\right]
    \\&\quad\ +\frac{1}{2b}\EE\left[\left.-2\psi\int_0^TR^{\eta,M,h}_s\ds+\int_t^T\left(hR^{\nu,M,h}_s-hp\int_0^s\ep^{-(s-u)p}R^{\nu,M,h}_u\du-2r^I\int_0^sR^{\eta,M,h}_u\du\right)\ds\right|\F_t\right]
\end{align*}
}
and
{\footnotesize
\begin{align*}
    \nu_t&=\frac{1}{2a}\EE\left[-2\phi\paren{q^B+\int_0^T\paren{\sum_{m=0}^Mh^m(\nu^m_s-\eta^m_s)+(R^{\nu,M,h}_s-R^{\eta,M,h}_s)}\ds}\right.
    \\&\hspace*{4em}+h\paren{q^B+\int_0^T\paren{\sum_{m=0}^Mh^m(\nu^m_s-\eta^m_s)+(R^{\nu,M,h}_s-R^{\eta,M,h}_s)}\ds}
    \\&\hspace*{4em}+\int_t^T\left(\alpha_s+h\paren{\sum_{m=0}^Mh^m\eta^m_s+R^{\eta,M,h}_s}-p\paren{\ep^{-sp}y+h\int_0^s\ep^{-(s-u)p}\paren{\sum_{m=0}^Mh^m\nu^m_u+R^{\nu,M,h}_u}\du}\right.
    \\&\hspace*{4em}-h\ep^{(t-s)p}p\paren{q^B+\int_0^s\paren{\sum_{m=0}^Mh^m(\nu^m_u-\eta^m_u)+(R^{\nu,M,h}_u-R^{\eta,M,h}_u)}\du}
    \\&\hspace*{4em}\left.\left.\left.-2r^B\paren{q^B+\int_0^s\paren{\sum_{m=0}^Mh^m(\nu^m_u-\eta^m_u)+(R^{\nu,M,h}_u-R^{\eta,M,h}_u)}\du}\right)\ds\right|\G_t\right]
    \\&=\frac{1}{2a}\EE\left[\left.-2\phi\paren{q^B+\int_0^T(\nu^0_s-\eta^0_s)\ds}+\int_t^T\left(\alpha_s-p\ep^{-sp}y-2r^B\left(q^B+\int_0^s(\nu^0_u-\eta^0_u)\du\right)\right)\ds\right|\G_t\right]
    \\&\quad\ +\frac{h}{2a}\EE\left[-2\phi\int_0^T(\nu^{1}_s-\eta^{1}_s)\ds+q^B+\int_0^T(\nu^{0}_s-\eta^{0}_s)\ds\right.
    \\&\hspace*{4em}\quad\ +\int_t^T\left(\eta^{0}_s-p\int_0^s\ep^{-(s-u)p}\nu^{0}_u\du-\ep^{(t-s)p}p\paren{q^B+\int_0^s(\nu^{0}_u-\eta^{0}_u)\du}\right.
    \\&\hspace*{4em}\left.\left.\phantom{\paren{\int_0^T}}\left.-2r^B\int_0^s(\nu^{1}_u-\eta^{1}_u)\du\right)\ds\right|\G_t\right]
    \\&\quad\ +\sum_{m=2}^M\frac{h^m}{2a}\EE\left[-2\phi\int_0^T(\nu^{m}_s-\eta^{m}_s)\ds+\int_0^T(\nu^{m-1}_s-\eta^{m-1}_s)\ds\right.
    \\&\hspace*{6.5em}\quad\ +\int_t^T\left(\eta^{m-1}_s-p\int_0^s\ep^{-(s-u)p}\nu^{m-1}_u\du-\ep^{(t-s)p}p\int_0^s(\nu^{m-1}_u-\eta^{m-1}_u)\du\right.
    \\&\hspace*{4em}\left.\left.\phantom{\paren{\int_0^T}}\left.-2r^B\int_0^s(\nu^{m}_u-\eta^{m}_u)\du\right)\ds\right|\G_t\right]
    \\&\quad\ +\frac{h^{M+1}}{2a}\EE\left[\left.\int_0^T(\nu^{M}_s-\eta^{M}_s)\ds+\int_t^T\left(\eta^M_s-p\int_0^s\ep^{-(s-u)p}\nu^{M}_u\du-\ep^{(t-s)p}p\int_0^s(\nu^M_u-\eta^M_u)\du\right)\ds\right|\G_t\right]
    \\&\quad\ +\frac{1}{2a}\EE\left[-2\phi\int_0^T(R^{\nu,M,h}_s-R^{\eta,M,h}_s)\ds+h\int_0^T(R^{\nu,M,h}_s-R^{\eta,M,h}_s)\ds\right.
    \\&\hspace*{5em}+\left.\left.\int_t^T\left(hR^{\eta,M,h}_s-hp\int_0^s\ep^{-(s-u)p}R^{\nu,M,h}_u\du-\paren{h\ep^{(t-s)p}p+2r^B}\int_0^s(R^{\nu,M,h}_u-R^{\eta,M,h}_u)\du\right)\ds\right|\G_t\right]
    \\&=\sum_{m=0}^Mh^m\nu^m_t
    \\&\quad\ +\frac{h^{M+1}}{2a}\EE\left[\left.\int_0^T(\nu^{M}_s-\eta^{M}_s)\ds+\int_t^T\left(\eta^M_s-p\int_0^s\ep^{-(s-u)p}\nu^{M}_u\du-\ep^{(t-s)p}p\int_0^s(\nu^M_u-\eta^M_u)\du\right)\ds\right|\G_t\right]
    \\&\quad\ +\frac{1}{2a}\EE\left[-2\phi\int_0^T(R^{\nu,M,h}_s-R^{\eta,M,h}_s)\ds+h\int_0^T(R^{\nu,M,h}_s-R^{\eta,M,h}_s)\ds\right.
    \\&\hspace*{4.5em}+\left.\left.\int_t^T\left(hR^{\eta,M,h}_s-hp\int_0^s\ep^{-(s-u)p}R^{\nu,M,h}_u\du-\paren{h\ep^{(t-s)p}p+2r^B}\int_0^s(R^{\nu,M,h}_u-R^{\eta,M,h}_u)\du\right)\ds\right|\G_t\right].
\end{align*}
}
Next, using \eqref{eqn:eta-0}, \eqref{eqn:nu-0}, \eqref{eqn:eta-m}, and \eqref{eqn:nu-m}, we may simplify the above two equations to 
\begin{align*}
    R^{\eta,M,h}_t&=\frac{h^{M+1}}{2b}\EE\left[\left.\int_t^T\left(\nu^M_s-p\int_0^s\ep^{-(s-u)p}\nu^M_s\du\right)\ds\right|\F_t\right]
    \\&\quad\ +\frac{1}{2b}\EE\left[\left.-2\psi\int_0^TR^{\eta,M,h}_s\ds+\int_t^T\left(hR^{\nu,M,h}_s-hp\int_0^s\ep^{-(s-u)p}R^{\nu,M,h}_u\du-2r^I\int_0^sR^{\eta,M,h}_u\du\right)\ds\right|\F_t\right]
\end{align*}
and
{
\footnotesize
\begin{align*}
    R^{\nu,M,h}_t&=\frac{h^{M+1}}{2a}\EE\left[\left.\int_0^T(\nu^{M}_s-\eta^{M}_s)\ds+\int_t^T\left(\eta^M_s-p\int_0^s\ep^{-(s-u)p}\nu^{M}_u\du-\ep^{(t-s)p}p\int_0^s(\nu^M_u-\eta^M_u)\du\right)\ds\right|\G_t\right]
    \\&\quad\ +\frac{1}{2a}\EE\left[-2\phi\int_0^T(R^{\nu,M,h}_s-R^{\eta,M,h}_s)\ds+h\int_0^T(R^{\nu,M,h}_s-R^{\eta,M,h}_s)\ds\right.
    \\&\hspace*{4.5em}+\left.\left.\int_t^T\left(hR^{\eta,M,h}_s-hp\int_0^s\ep^{-(s-u)p}R^{\nu,M,h}_u\du-\paren{h\ep^{(t-s)p}p+2r^B}\int_0^s(R^{\nu,M,h}_u-R^{\eta,M,h}_u)\du\right)\ds\right|\G_t\right].
\end{align*}
}
Therefore,  $(R^{\nu,M,h},R^{\eta,M,h})$ are the first two components of a fix point of the map $\Psi^h:\TT\to\TT$ defined by
{
\footnotesize
\begin{align*}
    \Psi^h_{t,1}(\Upsilon)&=\frac{h^{M+1}}{2a}\EE\left[\left.\int_0^T(\nu^M_s-\eta^M_s)\ds+\int_t^T\left\{\eta^M_s-p\int_0^s\ep^{-(s-u)p}\nu^M_u\du-\ep^{(t-s)p}p\int_0^s(\nu^M_u-\eta^M_u)\du\right\}\ds\right|\G_t\right]
    \\&\quad\ +\frac{1}{2a}\EE\left[-(2\phi-h)\left.\int_0^T(\zeta_u-\xi_u)\ds+\int_t^T\left(h\xi_s-p\gamma_s-\paren{h\ep^{(t-s)p}p+2r^B}\kappa_s\right)\ds\right|\G_t\right],
    \\\Psi^h_{t,2}(\Upsilon)&=\frac{h^{M+1}}{2b}\EE\left[\left.\int_t^T\left\{\nu^M_s-p\int_0^s\ep^{-(s-u)p}\nu^M_u\du\right\}\ds\right|\F_t\right]
    \\&\quad\ +\frac{1}{2b}\EE\left[\left.-2\psi\int_0^T\xi_s\ds+\int_t^T\left(h\zeta_s-p\gamma_s-2r^I\theta_s\right)\ds\right|\F_t\right],
    \\\Psi^h_{t,3}(\Upsilon)&=h\int_0^t\ep^{-(t-s)p}\zeta_s\ds,\quad\Psi^h_{t,4}(\Upsilon)=\int_0^t(\zeta_s-\xi_s)\ds,\quad\Psi^h_{t,5}(\Upsilon)=\int_0^t\xi_s\ds
\end{align*}
}
for $\Upsilon=(\xi,\zeta,\gamma,\kappa,\theta)\in\TT$, where the space $\TT$ is defined in Theorem \ref{neexist}. Following along the same lines as in the proof of Theorem \ref{neexist}, we have that
\begin{align*}
    \norm{\Psi^h(\Upsilon)-\Psi^h(\hat{\Upsilon})}_\TT^2\leq C(T,h)\norm{\Upsilon-\hat{\Upsilon}}_\TT^2.
\end{align*}
Furthermore, note that $C(T,h)\downarrow C(T,0)<1$ as $h\downarrow 0$ and, therefore, $\Psi^h$ is a contraction on $\TT$ for each $h\in[0,h_0]$. Moreover, for some constant $C$ independent of $h$, and $C':=\max\{|2\phi-h|+h,p,hp+2r^B\}$, we have
\begin{align*}
    \norm{\frac{\Psi^h_{\cdot,1}(\Upsilon)}{h^M}}_{\HH^2}
    &\leq h\,C
    \\&
    \quad +\tfrac{1}{2ah^M}
    \left(\EE\left[\int_0^T\left|\EE\left[-(2\phi-h)\left.\int_0^T(\zeta_u-\xi_u)\ds
    \right.\right.\right.\right.\right.
    \\
    &\hspace*{11em} \left.\left.\left.\left.\left.
    +\int_t^T\left(h\xi_s-p\gamma_s-\paren{h\ep^{(t-s)p}p+2r^B}\kappa_s\right)\ds\right|\G_t\right]\right|^2\dt\right]
    \right)^{1/2}
    \\
    &\leq h\,C+\tfrac{1}{2ah^M} C'\paren{\int_0^T\EE\sqb{\paren{\int_0^T\paren{|\zeta_s|+|\xi_s|+|\gamma_s|+|\kappa_s|}\ds}^2}\dt}^{1/2}\\
    \\&
    \leq h\,C+\tfrac{T}{ah^M} C'\left(\EE\sqb{\int_0^T\paren{|\zeta_s|^2+|\xi_s|^2+|\gamma_s|^2+|\kappa_s|^2}\ds}\right)^{1/2}
    \\
    \\&\leq h\,C+\frac{T}{a}\,C'\,\norm{\frac{\Upsilon}{h^M}}_\TT.
\end{align*}
Similarly,
\begin{align*}
    \norm{\frac{\Psi^h_{\cdot,2}(\Upsilon)}{h^M}}_{\HH^2}\leq h\,C+\tfrac{T}{b}\,\max\{2\psi,h,p,2r^I\}\,\norm{\frac{\Upsilon}{h^M}}_\TT
\end{align*}
\begin{align*}
    \norm{\frac{\Psi^h_{\cdot,3}(\Upsilon)}{h^M}}_{\HH^2}\leq T\,h\norm{\frac{\Upsilon}{h^M}}_\TT,
    \qquad\norm{\frac{\Psi^h_{\cdot,4}(\Upsilon)}{h^M}}_{\HH^2}\leq T\,\norm{\frac{\Upsilon}{h^M}}_\TT,
    \quad
    \text{ and } \quad \norm{\frac{\Psi^h_{\cdot,5}(\Upsilon)}{h^M}}_{\HH^2}\leq \sqrt{2}\,T\,\norm{\frac{\Upsilon}{h^M}}_\TT
    \,.
\end{align*}
Therefore, given a family $(\Upsilon^h)_h$ of elements of $\TT$,
\begin{align}
    \lim_{h\to 0^+}\norm{\frac{\Upsilon^h}{h^M}}_\TT=0\quad\implies\quad\lim_{h\to 0^+}\norm{\frac{\Psi^h(\Upsilon^h)}{h^M}}_\TT=0.\label{ohm}
\end{align}

Next, take $h\in[0,h_0]$ and let $\Upsilon^{h,0}=0$ and $\Upsilon^{h,n}=\Psi^h(\Upsilon^{h,n-1})$ for $n\geq 1$. The Banach fixed point theorem states that the sequence $(\Upsilon^{h,n})_n$ converges to the unique fixed point $\Upsilon^{h,*}$ of $\Psi^h$ as $n\to\infty$ and provides the following estimation:
\begin{align*}
    \norm{\Upsilon^{h,*}-\Upsilon^{h,n}}_\TT\leq\frac{C(T,h)^{n/2}}{1-C(T,h)^{1/2}}\norm{\Upsilon^{h,1}-\Upsilon^{h,0}}_\TT,\quad\forall n.
\end{align*}
By \eqref{ohm}, for every $n$, $\lim_{h\to0^+}\norm{\frac{\Upsilon^{h,n}}{h^M}}_\TT=0$.
It follows that for a fixed $n_0$,
\begin{align*}
    \norm{\frac{\Upsilon^{h,*}}{h^M}}_\TT\leq\norm{\frac{\Upsilon^{h,*}}{h^M}-\frac{\Upsilon^{h,n_0}}{h^M}}_\TT+\norm{\frac{\Upsilon^{h,n_0}}{h^M}}_\TT\leq\frac{C(T,h)^{n_0/2}}{1-C(T,h)^{1/2}}\paren{\norm{\frac{\Upsilon^{h,1}}{h^M}}_\TT+\norm{\frac{\Upsilon^{h,0}}{h^M}}_\TT}+\norm{\frac{\Upsilon^{h,n_0}}{h^M}}_\TT.
\end{align*}
Therefore,
\begin{align*}
    \lim_{h\to0^+}\norm{\frac{\Upsilon^{h,*}}{h^M}}_\TT=0.
\end{align*}
In particular, this implies the first two components of $\Upsilon^{h,*}$, $R^{\nu,M,h}$ and $R^{\eta,M,h}$, satisfy
 \begin{align*}
        \lim_{h\to 0^+}\norm{\frac{R^{\eta,M,h}}{h^M}}_{\HH^2}=0,\quad\lim_{h\to 0^+}\norm{\frac{R^{\nu,M,h}}{h^M}}_{\HH^2}=0,
\end{align*}
as desired.
\end{proof}

To gain some intuition of the these solutions, let us set the initial inventory levels to zero, i.e., $q^I=q^B=0$, the initial transient impact to zero, i.e., $y=0$, and the running penalty to zero, i.e., $r^B=r^I=0$. In this setting,  the informed trader's trading rate is
\begin{align}
\eta^0_t&=
\mff^I_t\,Q_t^{I,0} + \tfrac{1}{2b}\int_t^T\ep^{\int_t^u\mff_s^I\ds}\,\EE\sqb{\left.\alpha_u\right|\F_t}\,\du,
\qquad
Q_t^{I,0} = \int_0^t\eta^0_u\,\du\,.
\end{align}
This mirrors the results of \cite{cartea&jaimungal2016}, where the first term is interpreted as the Almgren-Chriss solution \cite{almgren2001optimal} and the second term is a correction due to the trading signal. Moving to the broker's trading rate, we have
\begin{align}
\nu^0_t&=
\mff^B_t\,\left(Q_t^{B,0}+Q_t^{I,0}\right) + \tfrac{1}{2a}\int_t^T\ep^{\int_t^u\mff_s^B\ds}
\,\EE\sqb{\left.\alpha_u\,\right|\,\G_t}\,\du 
+ \tfrac{\phi}{a}\,\ep^{\int_t^T\mff_s^B\ds}\,\EE\sqb{\left.Q^{I,0}_T\,\right|\,\G_t}
\\
Q_t^{B,0}&=\int_0^t (\nu^0_u-\eta_u^0)\,\du\,.
\end{align}
The first two terms are analogous to the informed trader's strategy, i.e., there is an Almgren-Chriss term that is corrected by the trading signal --- however, the Broker has less information and must filter the trading signal using their filtration $\G$. Moreover, there is an additional correction that depends on the broker's expectation of the informed trader's inventory process.
These trading rates are the exact equilibrium when there is no transient impact and mimics in the setting when $h=0$ (i.e., no transient impact), these form the exact solution in the setting when there is no transient impact.

Moving to the higher order terms, we find
    \begin{align*}
        \eta^m_t&=\mff^I_t\,Q^{I,m}_t + \tfrac{1}{2b}\EE\sqb{\left.\int_t^T\ep^{\int_t^u\mff_s^I\ds}\paren{\nu^{m-1}_u-p\int_0^u\ep^{(s-u)p}\,\nu^{m-1}_s\ds}\du\right|\F_t},
        \\[1em]
        \nu^m_t&=\mff^B_t\left(Q_t^{B,m}+Q_t^{I,m}\right)
        + \tfrac{1}{2a}\ep^{\int_t^T\mff_s^B\ds}\EE\sqb{\left.Q^{B,m-1}_T+2\,\phi\,Q^{I,m}_T\right|\G_t}
        \\
        &\quad\ 
        +\tfrac{1}{2a}\EE\sqb{\left.\int_t^T\ep^{\int_t^u\mff_s^B\ds}\paren{\eta^{m-1}_u-p\paren{\int_0^u\ep^{(s-u)p}\nu^{m-1}_s\ds+Q^{B,m-1}_u}}\du\right|\G_t}
        \\
        &\quad\
        +\tfrac{p^2}{2a}\EE\sqb{\left.\int_t^T\int_u^T\ep^{\int_t^u\mff_s^B\ds+(u-r)p}\,Q^{B,m-1}_r\,\diff r\du\right|\G_t},
    \end{align*}
The first term in each expression is an Almgren-Chriss like term , however, it utilizes the shadow inventory corresponding to the appropriate order of the approximation. Further, for the informed trader, the second term corrects the Almgren-Chriss terms by a deviation in the previous order's trading rate and its induced transient impact. The broker's third term is of a similar nature --- but projected onto their own filtration. The broker's second term accounts for the expected informed trader's inventory using the broker's filtration and corrected from the previous order's inventory. Finally the fourth term is a correction due to transient impact of the broker's previous order's inventory.

\section{Conclusion}

We presented a setting where a broker and informed trader trade multiple assets. The informed trader has superior information through a trading signal while the broker can offload inventory to a centralized exchanged. We prove that a Nash equilibria exists if and only if a particular system of FBSDEs have a solution and further prove, in a small time setting, that the system of FBSDEs does indeed have a unique solution. The system of FBSDEs contains differing filtrations, which is one of the key characteristics of our problem setting. Furthermore, while we cannot solve the FBSDE system in closed-form, we are able to approximate the solution to arbitrary order in the impact of trading and provide closed-form expressions for each term in this series. Moreover, we prove the remainder terms from truncating the series at order $M$ are $o(M)$.

There are several avenues left for future work, e.g., extending the existence and uniqueness result from small time to all time. This may require further restrictions on the model parameters, the trading signal, or martingale component of the price. The transient impact may be generalized to an arbitrary kernel. It would also be interesting to investigate the setting with multiple informed traders with differing information sets. Finally, for practical purposes, it would be useful to develop a deep learning algorithm to approximate the FBSDE system.

\bibliographystyle{plain}
\bibliography{references}
\end{document}